\def\dOi{12(3:6)2016}
\subjclass{F.3.1 [Logics and Meanings of Programs]: Specifying and Verifying and Reasoning about Programs--Logics of programs; F.3.3 [Logics and Meanings of Programs]: Studies of Program Constructs--Program and recursion schemes}
\newcommand{\TernaryInfC}{\TrinaryInfC}
\theoremstyle{plain}
\newtheorem{theorem}[thm]{Theorem}
\newtheorem{corollary}[thm]{Corollary}
\theoremstyle{definition}
\newtheorem{lemma}[thm]{Lemma}
\newtheorem{example}[thm]{Example}
\newtheorem{note}[thm]{Note}
\newtheorem{observation}[thm]{Observation}
\newtheorem{claim}[thm]{Claim}
\newtheorem*{claim*}{Claim}
\newcommand{\tsum}{\textstyle\sum}
\newcommand{\tbigcup}{\textstyle\bigcup}
\newcommand{\tbigcap}{\textstyle\bigcap}
\newcommand{\tbigvee}{\textstyle\bigvee}
\newcommand{\tbigwedge}{\textstyle\bigwedge}
\newcommand{\tto}{\rightarrowtriangle}
\newcommand{\lrto}{\leftrightarrow}
\newcommand{\lt}{\langle}
\newcommand{\rt}{\rangle}
\renewcommand{\star}{^*}
\newcommand{\llt}{\lt\!\lt}
\newcommand{\rrt}{\rt\!\rt}
\newcommand{\play}{\mathsf{play}}
\newcommand{\hoare}[3]{\{#1\}#2\{#3\}}
\newcommand{\Imp}{\Rightarrow}
\newcommand{\ang}{\sqcup}
\newcommand{\bigAng}{\bigsqcup}
\newcommand{\tbigAng}{\textstyle\bigAng}
\newcommand{\dem}{\sqcap}
\newcommand{\bigDem}{\bigsqcap}
\newcommand{\tbigDem}{\textstyle\bigDem}
\newcommand{\true}{\mathsf{true}}
\newcommand{\false}{\mathsf{false}}
\newcommand{\id}{\mathsf{id}}
\newcommand{\compl}{{\sim}}
\newcommand{\tbigplus}{\tsum}
\newcommand{\kwIf}{\mathsf{if}}
\newcommand{\kwThen}{\mathsf{then}}
\newcommand{\kwWhile}{\mathsf{while}}
\newcommand{\kwDo}{\mathsf{do}}
\newcommand{\kwElse}{\mathsf{else}}
\newcommand{\ifThenElse}[3]{\kwIf\, #1 \,\kwThen\, #2 \,\kwElse\, #3}
\newcommand{\whileDo}[2]{\kwWhile\, #1 \,\kwDo\, #2}
\newcommand{\wh}{\textsf{\large w}}
\newcommand{\kc}{;}
\newcommand{\plus}{\boldsymbol +}
\newcommand{\lb}{\boldsymbol [}
\newcommand{\rb}{\boldsymbol ]}
\newcommand{\skwWh}{\mathrm{w\hspace{-0.5pt}h}}
\newcommand{\skwDo}{\mathrm{d\hspace{-0.5pt}o}}
\newcommand{\sWhileDo}[2]{\skwWh\, #1 \,\skwDo\, #2}
\newcommand{\nto}{\leadsto}
\newcommand{\gto}{\mathbin{\textcolor{white}{\contour{black}{$\nto$}}}}
\newcommand{\gc}{\boldsymbol ;\,}
\newcommand{\glb}{\scalebox{1.5}[1.1]{$\boldsymbol [$}}
\newcommand{\grb}{\scalebox{1.5}[1.1]{$\boldsymbol ]$}}
\newcommand{\gang}{\boldsymbol\ang}
\newcommand{\gdem}{\boldsymbol\dem}
\newcommand{\zero}{\mathbf{0}}%0\hspace{-4pt}0} % game zero
\newcommand{\one}{\mathbf{1}}%1\hspace{-4pt}1} % game one
\newcommand{\gkwWh}{\mathbf{w\hspace{-0.5pt}h}}
\newcommand{\gkwDo}{\mathbf{d\hspace{-0.5pt}o}}
\newcommand{\gWhileDo}[2]{\gkwWh\, #1 \,\gkwDo\, #2}
\newcommand{\lift}{\textstyle\mathop{\mathsf{lift}}}
\newcommand{\ord}{\mathsf{ord}}
\newcommand{\Ord}{\mathbf{Ord}}
\newcommand{\At}{\mathsf{At}}
\newcommand{\AtC}{{\mathsf{At}_\Phi}}
\newcommand{\rHyp}{\mathsf{hyp}}
\newcommand{\rSkip}{\mathsf{skip}}
\newcommand{\rDvrg}{\mathsf{dvrg}}
\newcommand{\rSeq}{\mathsf{seq}}
\newcommand{\rCond}{\mathsf{cond}}
\newcommand{\rLoop}{\mathsf{loop}}
\newcommand{\rWeak}{\mathsf{weak}}
\newcommand{\rAng}{\mathsf{ang}}
\newcommand{\rDem}{\mathsf{dem}}
\newcommand{\rJoin}{\mathsf{join}}
\newcommand{\rMeet}{\mathsf{meet}}
\newcommand{\gI}{I_{\Phi\Psi}}
\newcommand{\dI}{J_{\Phi\Psi}}
\newcommand{\class}[1]{\mathcal{#1}}
\newcommand{\All}{\mathit{All}}
\newcommand{\Dem}{\mathit{Dem}}
\newcommand{\cl}{C}
\newcommand{\simpl}{\boldsymbol\sqsubseteq}
\newcommand{\EXPTIME}{\mathsf{EXPTIME}}
\newcommand{\APSPACE}{\mathsf{APSPACE}}
\newcommand{\B}{\textvisiblespace}
\newcommand{\halt}{\mathit{halt}}
\newcommand{\accept}{\mathit{accept}}
\newcommand{\reject}{\mathit{reject}}
\newcommand{\writeAt}[1]{\mathsf{write}\ #1}
\newcommand{\move}[1]{\mathsf{move}\ #1}
\newcommand{\switch}[1]{\mathsf{switch}\ #1}
\begin{document}

\title[Synthesis of Strategies Using the Hoare Logic of Dual Nondeterminism]{Synthesis of Strategies Using the Hoare Logic of Angelic and Demonic Nondeterminism\rsuper*}

\author[K.~Mamouras]{Konstantinos Mamouras}
\address{Department of Computer and Information Science, University of Pennsylvania, Philadelphia, PA}
\email{mamouras@seas.upenn.edu}

%% etc.

%% required for running head on odd and even pages, use suitable
%% abbreviations in case of long titles and many authors:

%% mandatory lists of keywords and classifications:
\keywords{Hoare logic, program synthesis, angelic and demonic nondeterminism, safety games, program schemes, dual nondeterminism}
\titlecomment{{\lsuper*}This is a revised and expanded version of the paper \cite{mamouras2015angHL}, which was presented in FoSSaCS 2015.}
%%%%%%%%%%%%%%%%%%%%%%%%%%%%%%%%%%%%%%%%%%%%%%%%%%%%%%%%%%%%%%%%%%%%%%%%%%%

%% the abstract has to PRECEED the command \maketitle:
%% be sure not to issue the \maketitle command twice!

\begin{abstract}
We study a propositional variant of Hoare logic that can be used for reasoning about programs that exhibit both angelic and demonic nondeterminism. We work in an uninterpreted setting, where the meaning of the atomic actions is specified axiomatically using hypotheses of a certain form. Our logical formalism is entirely compositional and it subsumes the non-compositional formalism of safety games on finite graphs. We present sound and complete Hoare-style calculi that are useful for establishing partial-correctness assertions, as well as for synthesizing implementations. The computational complexity of the Hoare theory of dual nondeterminism is investigated using operational models, and it is shown that the theory is complete for exponential time.
\end{abstract}

\maketitle
%\tableofcontents

\section{Introduction}
\label{sec:intro}

Demonic nondeterminism is used in the context of programming to model external influences which are not under the control of the program. Such nondeterminism may arise in concurrent programs, for example, from the scheduling of threads, which is under the control of the operating system and not the program. Others examples could be sensor readings or user input, which are completely external influences to a computing system. In the case of user input, in particular, we can typically make no assumptions, since the input depends on an entirely unpredictable and uncontrollable human being, who may choose to behave as an adversary.

Even in the absence of ``real'' nondeterminacy like scheduling and sensor/user input, we may use demonic nondeterminism to represent abstraction and partial knowledge of the state of a computation. An example of the latter use of demonic nondeterminism is when we cannot fully observe the value of an integer variable $x$, but we can tell whether it is negative, zero, or positive. At this level of abstraction, we cannot describe the operation $x := x+1$ that increments the variable $x$ by 1 deterministically.
\begin{gather*}
\begin{gathered}[t]
\textbf{Observe $x$ fully} \\
\text{deterministic action} \\[-0.5ex]
x \mapsto x+1
\end{gathered}
\qquad
\begin{gathered}[t]
\\
\xrightarrow{\text{\normalsize abstraction}}{}
\end{gathered}
\quad
\begin{gathered}[t]
\textbf{Observe $x$ partially} \\
\text{corresponding nondeterministic action} \\[-0.5ex]
\begin{aligned}[t]
(x<0) &\mapsto (x<0) \lor (x=0) \\[-0.5ex]
(x=0) &\mapsto (x>0) \ \text{and}\ 
(x>0) \mapsto (x>0)
\end{aligned}
\end{gathered}
\end{gather*}
This example illustrates that nondeterminism is necessary when creating finite-state abstractions of realistic programs, whose state space is typically infinite.

Angelic nondeterminism, on the other hand, is used to express nondeterminacy that is under the control of the program. We use angelic nondeterminism to leave some implementation details of a program underspecified. The ``angel'', namely the agent that represents our interests, controls how these details are resolved in order to achieve the desired result. The process of resolving these implementation details amounts to \emph{synthesizing} a fully specified program. The term \emph{dual nondeterminism} refers to the combination of angelic and demonic nondeterminism.

In order to reason about dual nondeterminism, one first needs to have a semantic model of how programs with angelic and demonic choices compute. One semantic model that has been used extensively uses a class of mathematical objects that are called monotonic predicate transformers \cite{back1998} (based on Dijkstra's predicate transformer semantics \cite{dijkstra1975, morgan1998}). An equivalent denotational model that is based on binary relations was introduced in \cite{rewitzky2003} (up-closed multirelations) and further investigated in \cite{martin2004, martin2007, martin2013}. These relations can be understood intuitively as two-round games between the angel and the demon.

\newcommand{\even}{\mathit{even}}
\newcommand{\odd}{\mathit{odd}}

We are interested here in verifying properties of programs that can be expressed as Hoare (partial-correctness) assertions \cite{floyd1967, hoare1969, cook1978, apt1981HL, apt1983HL}, that is, formulas of the form $\hoare{p}f{q}$, where $f$ is the program text and $p, q$ denote predicates on the state space, called precondition and postcondition respectively. The formula $\hoare{p}f{q}$ asserts, informally, that starting from any state satisfying the precondition $p$, the angel has a strategy so that whatever the demon does, the final state of the computation of $f$ (assuming termination) satisfies the postcondition $q$. This describes a notion of partial correctness, because in the case of divergence (non-termination) the angel wins vacuously. Our language for programs and preconditions/postconditions involves abstract test symbols $p, q, r, \ldots$ and abstract action symbols $a, b, \ldots$ with no fixed interpretation. We constrain their meaning with extra hypotheses: we consider a finite set $\Phi$ of Boolean axioms for the tests, and a finite set $\Psi$ of axioms of the form $\hoare{p}a{q}$ for the action letters. So, we typically assert implications of the form
\[
  \Phi,\Psi \Imp \hoare{p}f{q},
\]
which we call \emph{simple Hoare implications}.
For example, consider the tests $\even(n)$, $\odd(n)$ and the action $n{+}{+}$, which increments $n$ by 1. We think that these are abstract symbols contrained by the hypotheses $\Phi$ and $\Psi$ below.
\begin{gather*}
\begin{aligned}[t]
\Phi : {}
& \even(n) \lor \odd(n)
\\
& \neg \even(n) \lor \neg \odd(n)
\end{aligned}
\qquad
\begin{aligned}[t]
\Psi :
\hoare{\even(n)}{&n{+}{+}}{\odd(n)}
\\
\hoare{\odd(n)}{&n{+}{+}}{\even(n)}
\end{aligned}
\qquad
\begin{aligned}[t]
f := {}
&\kwIf\ \even(n)\ \kwThen\ n{+}{+}
\\
&\kwElse\ n{+}{+};n{+}{+}
\end{aligned}
\end{gather*}
We should be able to prove that $\Phi,\Psi \Imp \hoare{\true}f{\odd(n)}$ under the above definitions.
We want to design a formal system that allows the derivation of the valid Hoare implications. One important desideratum for such a formal system is to also provide us with program text that corresponds to the winning strategy of the angel. Then, the system can be used for the deductive synthesis of programs that satisfy their Hoare specifications.

There has been previous work on deductive methods to reduce angelic nondeterminism and synthesize winning strategies for the angel. The work \cite{celiku2003}, which is based on ideas of the refinement calculus \cite{back1990dual, back1992ADM, back1998, morgan1998}, explores a total-correctness Hoare-style calculus to reason about angelic nondeterminism.
It is observed that there is a conceptual difficulty in reconciling \emph{nondeterministic refinement} (which results from removing demonic choices or/and adding angelic choices) with the task of synthesizing the strategy of the angel. This is because the interaction between the angel and the demon has been fixed in advance: we have no control over the demonic nondeterminism, and increasing the choices of the angel is not permitted. Nonetheless, a refinement-based approach for implementing angelic choices is pursued in \cite{celiku2003}.
The analysis is in the first-order interpreted setting, and no completeness or relative completeness results are discussed.

Of particular relevance to our investigations is the line of work that concerns two-player infinite games played on finite graphs \cite{thomas1995}. Such games are useful for analyzing (nonterminating) reactive programs. One of the players represents the ``environment'', and the other player is the ``controller''. Computing the strategies that witness the winning regions of the two players amounts to synthesizing an appropriate implementation for the controller. The formalism of games on finite graphs is very convenient for developing an algorithmic theory of synthesis. However, the formalism is non-succinct and, additionally, it is inherently non-compositional. An important class of properties for these graph games are the so called \emph{safety} properties, which assert that the environment cannot force the play into a ``bad'' region. For encoding safety properties, we see that a fully compositional formalism based on while programs and partial-correctness properties suffices.

\paragraph{\bfseries Our Contribution} We consider a propositionally abstracted language for while programs with demonic and angelic choices. Our results are the following:
\begin{itemize}[label=$-$]
\item
We give the intended operational semantics in terms of safety games on graphs, and we describe a denotational semantics based on a restricted subclass of multirelations. We obtain a full abstraction result for all reasonable intepretations of the atomic symbols, which asserts the equivalence between the operational and denotational models.
\item
We present a sound and \emph{unconditionally} complete calculus for the weak Hoare theory of dual nondeterminism (over the class of all interpretations). We also consider a restricted class of interpretations, where the atomic actions are non-angelic, and we extend our calculus so that it is complete for the Hoare theory of this smaller class (called strong Hoare theory). The proofs of these results rely on the construction of free models.
\item
Using the correspondence between the operational and denotational models, we prove that the strong Hoare theory of dual nondeterminism is $\EXPTIME$-complete.
\item
We consider an extension of our Hoare-style calculus with annotations that denote the winning strategies of the angel. We thus obtain a sound and complete deductive system for the synthesis of angelic strategies.
\item
Our formalism is shown to subsume that of safety games on finite graphs, hence it provides a compositional method for reasoning about safety in reactive systems. The language of dually nondeterministic program schemes is exponentially more succinct than explicitly represented game graphs, and it is arguably a more natural language for describing algorithms and protocols.
\end{itemize}
The present paper is a revised and extended version of \cite{mamouras2015angHL}. We include here all the proofs that were omitted from the conference version \cite{mamouras2015angHL}, and we generalize the full abstraction result on the correspondence between the operational and denotational semantics. In \cite{mamouras2015angHL}, full abstraction was established only for the free models, which are finite. In order to generalize the full abstraction theorem to infinite models, we identify here a natural condition on the interpretations of atomic actions (which we call \emph{chain property}). This condition covers all finite models, as well as all infinite models with a ``reasonable'' interpretation of the atomic actions.

\paragraph{\bfseries Outline of paper}

In \S\ref{sec:prelim} we recall some well-known definitions and facts about abstract imperative while programs, and we introduce the relevant notation that we will use in our later development. We introduce \emph{while game schemes} in \S\ref{sec:operational}, which are abstractions of programs that allow both angelic and demonic nondeterministic choices. We also present in \S\ref{sec:operational} the \emph{intended operational semantics}, which is based on the familiar model of two-player safety games played on graphs. We explore in \S\ref{sec:denotational} a denotational model based on a certain kind of binary relations. We show that this denotational semantics extends naturally the standard relational semantics of programs, and additionally it agrees exactly with the intended operational model. In \S\ref{sec:hoare} we introduce the syntax and meaning of Hoare assertions and implications, and we propose a Hoare-style calculus for reasoning about while game schemes. Our first completeness result is given in \S\ref{sec:completeA}, where we show that the partial-correctness calculus of \S\ref{sec:hoare} is complete for the \emph{weak Hoare theory} (the theory over the class of all interpretations). In \S\ref{sec:completeB} we study the \emph{strong Hoare theory}, which is the theory over the subclass of interpretations that assign a non-angelic meaning to the atomic actions. We extend our calculus to completeness for this important case, and we show that the theory is complete for $\EXPTIME$. We further extend in \S\ref{sec:synthesis} our axiomatization of the strong Hoare theory with annotations that witness the angelic strategies. We thus obtain a sound and complete Hoare-style calculus for the synthesis of angelic implementations. It is also shown that our formalism subsumes the (non-compositional and non-succinct) formalism of safety games on finite graphs. We analyze a simple example in \S\ref{sec:example} for a toy temperature controller, which illustrates in a very concrete way how our verification/synthesis calculus can be used. In \S\ref{sec:related} we discuss several related works, including the ones from which the present paper was inspired. We conclude in \S\ref{sec:conclusion} with a brief summary of our technical contribution, and with suggestions for future work.

\section{Preliminaries: Monadic While Program Schemes}
\label{sec:prelim}

In this section we give some preliminary definitions regarding abstract imperative programs with while loops, which are also known in the literature as \emph{while program schemes}. See for example \cite{rutledge1964IPS, paterson1968PS, luckham1970FCP, paterson1970CS, garland1973PS} for some very well-known works in the area of program schematology. The programs that we consider here are often qualified as \emph{monadic}, which means that the program state is considered to be one indivisible entity. In other words, the program actions are modeled as unary functions that act on the entire program state. There are no distinct program variables $x,y,z,\ldots$ at the syntactic level, nor variable assignments $z \gets f(x,y)$ that can read from and assign to variables individually. Instead, the primitive actions are written simply as atomic letters $a, b, c, \ldots$ that should be thought as transforming the whole program state. Alternatively, one can think equivalently that there is a single program variable $x$ (which represents the entire program state) and an atomic action $a$ corresponds to an assignment $x \gets a(x)$.

We are interested in program schemes that allow the use of the construct $\dem$ of demonic nondeterministic choice. This is a very useful operation, because it can model underspecification and real nondeterminism (environment, user input, and so on). First, we present the syntax of these abstract while programs. Then, we give the standard denotational semantics for them, which is based on binary relations.

\begin{defi}[The Syntax of Program Schemes]
\label{def:syntaxWPS}
We consider a two-sorted algebraic language. There is the sort of \emph{tests} and the sort of \emph{programs}. The tests are built up from \emph{atomic tests} and the constants $\true$ and $\false$, using the usual Boolean operations: $\neg$ (negation), $\land$ (conjunction), and $\lor$ (disjunction). We use the letters $p, q, r, \ldots$ to range over arbitrary tests.
Tests are thus given by the grammar:
\[
  \text{tests $p, q$} ::=
  \text{atomic test} \mid
  \true \mid
  \false \mid
  \neg p \mid
  p \land q \mid
  p \lor q.
\]
As usual, the implication $p \to q$ is abbreviation for $\neg p \lor q$, and the double implication $p \lrto q$ stands for $(p \to q) \land (q \to p)$.

The base programs are the \emph{atomic programs} $a, b, c, \ldots$ (also called \emph{atomic actions}), as well as the constants $\id$ (\emph{skip}) and $\bot$ (\emph{diverge}). The programs are constructed using the operations $;$ (\emph{sequential composition}), $\kwIf$ (\emph{conditional}), $\kwWhile$ (\emph{iteration}), and $\dem$ (\emph{demonic nondeterministic choice}). We write $f, g, h, \ldots$ to range over arbitrary programs. So, the programs are given by the following grammar:
%\begin{align*}
%f,g ::= {}
%&\text{actions $a, b, \ldots$} \mid
%\id \mid \bot \mid
%f;g \mid
%\ifThenElse{p}{f}{g} \mid
%\whileDo{p}{f} \mid f \dem g.
%\end{align*}
\begin{align*}
\text{programs $f,g$} ::= {}
&\text{atomic actions $a, b, c, \ldots$} \mid
\id \mid \bot \mid {}
\\
&f;g \mid
\ifThenElse{p}{f}{g} \mid
\whileDo{p}{f} \mid f \dem g.
\end{align*}
For brevity, we also write $p[f,g]$ instead of $\ifThenElse{p}{f}{g}$, and $\wh p f$ instead of $\whileDo{p}{f}$.
\end{defi}

In order to give meaning to these abstract while programs, we first need to specify a nonempty set $S$ representing the state space. Additionally, we need to know how the atomic actions $a, b, c, \ldots$ transform the program state, and which states satisfy an atomic test $p$. So, for every atomic test we are given a subset $R(p) \subseteq S$ of the states that satisfy $p$. Moreover, for every action $a$ assume that we are given a function $R(a): S \to \wp S$, where $\wp S$ is the powerset of $S$. If $u$ and $v$ are states in $S$ with $v \in R(a)(u)$, then we understand this as saying that: executing the action $a$ when in state $u$ may result in a final state $v$. It remains now to describe how an arbitrary program scheme computes. The intended semantics is \emph{operational} and it gives us all the intermediate steps of the computation. A configuration is a pair $(u,f)$ of a state $u$ and a program $f$ and $\to$ is a relation on configurations that describes one step of the computation. A configuration $(u,\id)$ is \emph{final}, which means that the computation halts. We see in Figure~\ref{fig:operationalWPS} the standard definition of the computation relation, where we have assumed w.l.o.g.\ that $;$ is associative.

\begin{figure}[t]
\centering
\addtolength{\jot}{-0.3ex}
$\begin{gathered}
\begin{aligned}[t]
(u,a) &\to (v,\id),\, 
\text{for $v \in R(a)(u)$}
\\
(u,\id) &\to
\\
(u,\bot) &\to (u,\bot)
\\
(u,p[f,g]) &\to (u,f),\, \text{if $u \in R(p)$}
\\
(u,p[f,g]) &\to (u,g),\, \text{if $u \notin R(p)$}
\\
(u,\wh p f) &\to (u,f;\wh p f),\, \text{if $u \in R(p)$}
\\
(u,\wh p f) &\to (u,\id),\, \text{if $u \notin R(p)$}
\\
(u,f \dem g) &\to (u,f),\ (u,g)
\end{aligned}
\qquad
\begin{aligned}[t]
(u,a;h) &\to (v,\id;h),\, 
\text{for $v \in R(a)(u)$}
\\
(u,\id;h) &\to (u,h)
\\
(u,\bot;h) &\to (u,\bot;h)
\\
(u,p[f,g];h) &\to (a,f;h),\, \text{if $u \in R(p)$}
\\
(u,p[f,g];h) &\to (a,g;h),\, \text{if $u \notin R(p)$}
\\
(u,(\wh p f);h) &\to (u,f;(\wh p f);h),\, \text{if $u \notin R(p)$}
\\
(u,(\wh p f);h) &\to (u,\id;h),\, \text{if $u \notin R(p)$}
\\
(u,(f \dem g);h) &\to
(u,f;h),\ (\alpha,g;h)
\end{aligned}
\end{gathered}$
\caption{While Program Schemes: The standard operational model for the interpretation $R$ of atomic symbols.}
\label{fig:operationalWPS}
\end{figure}

The operational semantics of Figure~\ref{fig:operationalWPS} describes fully how a program executes, but for our later logical investigation this description carries too much irrelevant information. We would instead like to focus on the \emph{input-output} behavior of a program $f$. We thus \emph{summarize} the meaning of $f$ as a function $R(f): S \to \wp S$, which is defined as follows:
\[
  v \in R(f)(u) \stackrel{\text{def}}{\iff}
  (u,f) \to \cdots \to (v,\id).
\]
The right-hand side of the above equivalence says that there is a sequence of computation steps from the initial configuration $(u,f)$ to the final configuration $(\id,v)$. These input-output summaries $R(f): S \to \wp S$ constitute the standard \emph{denotational semantics} of nondeterministic while program schemes, also known as the \emph{relational semantics} of programs. It is a very pleasant fact that the functions $R(f)$ have a straightforward \emph{compositional} definition, namely by induction on the structure of $f$. This result is completely standard, and it asserts that denotational equality coincides with operational equivalence. This property is sometimes dubbed as \emph{full abstraction}.

Before we give the formal denotational semantics of while program schemes, we need to define some useful notation. In particular, we will consider an algebra of binary relations (equivalently, their representation as ``nondeterministic functions'') with operations that can give direct meaning to the syntactic constructors of program schemes.

\begin{defi}[Nondeterministic Functions \& Operations]
\label{def:nondet}
For a set $S$, we write $\wp S$ to denote the \emph{powerset} of $S$. A function of type $k: S \to \wp S$ is a \emph{nondeterministic function} on $S$. We also use the notation $k: S \nto S$. We write $k: u \mapsto v$ to mean that $v \in k(u)$. We think informally that such a function describes only one kind of nondeterminism (for our purposes here, demonic nondeterminism). Consider the operations of Figure~\ref{fig:nondetOps}. The choice operation $\plus$ induces a partial order $\leq$ on $S \nto S$ given by : $k \leq \ell$ iff $k \plus \ell = \ell$.
\end{defi}

\begin{figure}
\centering
$\begin{aligned}
&\text{(Kleisli) composition $\kc$}
\qquad&\qquad
(k \kc \ell)(u) &\triangleq
\tbigcup_{v \in k(u)} \ell(v)
\\
&\text{Conditional $(\cdot)\lb -,- \rb$}
&
P \lb k,\ell \rb(u) &\triangleq
k(u),\ \text{if $u \in P$}
\\
&&
P \lb k,\ell \rb(u) &\triangleq
\ell(u),\ \text{if $u \notin P$}
\\
&\text{Binary choice $\plus$}
&
(k \plus \ell)(u) &\triangleq
k(u) \cup \ell(u)
\\
&\text{Arbitrary choice $\tsum$}
&
\bigl( \tbigplus_{i \in J} k_i \bigr)(u) &\triangleq
\tbigcup_{i \in J} k_i(u)
\\
&\text{Identity $1_S$}
&
1_S(u) &\triangleq \{ u \}
\\
&\text{Zero $0_S$}
&
0_S(u) &\triangleq \emptyset
\\
&\text{Iteration $(\sWhileDo \cdot -)$}
&
\sWhileDo P k &\triangleq
\tbigplus_{n \geq 0} V_n,\ \text{where}
\\
&&
V_0 &\triangleq
P \lb 0_S,1_S \rb
\\
&&
V_{n+1} &\triangleq P \lb k \kc V_n,1_S \rb
\end{aligned}$
\caption{Semantic operations for nondeterministic functions $S \nto S$.}
\label{fig:nondetOps}
\end{figure}

%\begin{observation}[closure at $\omega$]
%\label{obs:whileOmega}
%Consider the sequence $(W_n)_{n \geq 0}$ of the while loop approximants of Definition~\ref{def:nondet}, and put $W_\omega := \sWhileDo P \phi = \tsum_{n<\omega} W_n$. It is straightforward to show that $W_n \leq W_{n+1}$ for every $n \geq 0$, where the order $\leq$ is induced by the $+$ operation: $\phi \leq \psi$ iff $\phi + \psi = \psi$. Now,
%\begin{align*}
%W_{\omega+1} &\triangleq
%\sIfThenElse{P}{(\phi \kc W_\omega)}{1_A}
%%\\ &=
%%\sIfThenElse{P}{(\phi \kc \tsum_{n<\omega} W_n)}{1_A}
%%\\ &=
%%\sIfThenElse{P}{\tsum_{n<\omega} (\phi \kc W_n)}{1_A}
%=
%\tsum_{n<\omega}
%  \sIfThenElse{P}{(\phi \kc W_n)}{1_A},
%\end{align*}
%which is equal to $\tsum_{n<\omega} W_{n+1} = W_\omega$. Extend the sequence into the transfinite:
%\begin{align*}
%W_{\kappa+1} &=
%\sIfThenElse{P}{(\phi \kc W_\kappa)}{1_A}
%&
%W_\lambda &=
%\tsum_{\kappa<\lambda} W_\kappa
%\end{align*}
%where $\lambda$ is a limit ordinal. We showed above that $W_{\omega+1} = W_\omega$, from which we obtain that
%$\sWhileDo P \phi =
% W_\omega =
% \tsum_{\kappa \in \Ord} W_\kappa
%$
%($\Ord$ is the class of ordinals).
%\end{observation}

\begin{defi}[Nondeterministic Interpretation of Program Schemes]
\label{def:nondetI}
An interpretation of the language of nondeterministic while program schemes consists of a nonempty set $S$, called the \emph{state space}, and an \emph{interpretation function} $R$. The elements of $S$ are called \emph{states}, and we will be using letters $u, v, w, \ldots$ to range over them. For a program term $f$, its \emph{interpretation} $R(f): S \nto S$ is a nondeterministic function on $S$.

The interpretation $R(p)$ of a test $p$ is a unary predicate on $S$, i.e., $R(p) \subseteq S$.
%The interpretation function $R$ specifies the meaning of every atomic test, and it extends to all tests as follows:
$R$ specifies the meaning of every atomic test, and it extends as follows:
\begin{align*}
R(\true) &= S
&
R(\neg p) &= \compl R(p)
&
R(p \land q) &= R(p) \cap R(q)
\\
R(\false) &= \emptyset
&&&
R(p \lor q) &= R(p) \cup R(q)
\end{align*}
where $\compl$ is the operation of complementation w.r.t.\ $S$, that is, $\compl A = S \setminus A$. Moreover, the interpretation function $R$ specifies the meaning $R(a): S \nto S$ of every atomic program. We extend the interpretation to all program terms:
\begin{align*}
R(\id) &= 1_S
&
R(f;g) &= R(f) \kc R(g)
&
R(p[f,g]) &=
R(p) \lb R(f),R(g) \rb
\\
R(\bot) &= 0_S
&
R(f \dem g) &= R(f) \plus R(g)
&
R(\wh p f) &= \sWhileDo{R(p)}{R(f)}
\end{align*}
Our definition agrees with the standard relational semantics of while schemes.
\end{defi}

\section{The Operational Semantics of Dual Nondeterminism}
\label{sec:operational}

We extend the syntax of nondeterministic program schemes with the additional construct $\ang$ of \emph{angelic (nondeterministic) choice}. So, the grammar for the program terms now becomes:
\begin{align*}
\text{programs $f$, $g$} ::=
\text{actions $a, b, \ldots$} \mid
\id \mid \bot \mid
f;g \mid
p[f,g] \mid
\wh p f \mid
f \dem g \mid f \ang g.
\end{align*}
%\begin{align*}
%\text{programs $f,g$} ::= {}
%&\text{atomic actions $a, b, c, \ldots$} \mid
%\id \mid \bot \mid {}
%\\
%&f;g \mid
%\ifThenElse{p}{f}{g} \mid
%\whileDo{p}{f} \mid
%f \dem g \mid f \ang g.
%\end{align*}
We call these program terms \emph{while game schemes}, because they can be considered to be descriptions of games between the angel (who controls the angelic choices) and the demon (who controls the demonic choices). Informally, the angel tries to satisfy the specification, while the demon attempts to falsify it.

We consider two-player games between the \emph{existential} player $\exists$ (angel) and the \emph{universal} player $\forall$ (demon). The games are played on arenas of arbitrary cardinality and are of infinite duration. If $\sigma$ is a player, then $\neg\sigma$ is the other player. Such games are considered extensively in the literature for the verification of reactive systems, see for example \cite{thomas1995}. The following definition of safety games (Definition~\ref{def:safetyGames}) slightly modifies the definition of \cite{thomas1995} in order to fit our setting more naturally.

\begin{defi}[Safety Games]
\label{def:safetyGames}
A \emph{safety game} is a tuple $G = (V,V_\exists,V_\forall,\to,E)$, where $V$ is the set of all vertices, $V_\exists$ is the set of $\exists$-vertices (which belong to the existential player), $V_\forall$ is the set of $\forall$-vertices (which belong to the universal player), $V_\exists$ and $V_\forall$ are disjoint subsets of $V$, $\to$ is a binary \emph{transition relation} on $V$, and $E \subseteq V$ is the set of \emph{error vertices}. We use the letters $u,v,w,\ldots$ to range over vertices in $V$, and we write $u \to v$ to mean that the pair $(u,v)$ belongs to the transition relation. We require additionally that every vertex has a successor, and that the vertices $V_? = V \setminus (V_\exists \cup V_\forall)$ that belong to no player have exactly one successor. The last requirement says equivalently that if a vertex has more than one successor, then it must belong to one of the players.

We need to introduce some terminology, which is to be understood with respect to a specific game. A \emph{position} is a finite nonempty path, and a \emph{play} is an infinite path. A $u$-position ($u$-play) is a position (play) that starts from vertex $u$. We say that Player $\exists$ \emph{wins} a play if no error vertex appears in it. Player $\forall$ wins if the play contains an error vertex. A \emph{strategy for Player $\sigma$} or a \emph{$\sigma$-strategy} is a function that maps every position ending in a $\sigma$-vertex $u$ to one of the successors of $u$. In a \emph{memoryless} or \emph{positional} strategy the choice depends only on the last vertex. So, we can represent a memoryless strategy for Player $\sigma$ as a function that maps every $\sigma$-vertex to one of its successors. We say that a path \emph{conforms} to a $\sigma$-strategy $f_\sigma$ if every transition from a $\sigma$-vertex in the path is the one prescribed by the strategy $f_\sigma$. A $(u,f_\sigma)$-position is a $u$-position that conforms to the strategy $f_\sigma$. We define a $(u,f_\sigma)$-play similarly. A $(u,f_\exists,f_\forall)$-position is a $u$-position that conforms to both $f_\exists$ and $f_\forall$. A $(u,f_\exists,f_\forall)$-play is defined similarly. We denote by $\play(u,f_\exists,f_\forall)$ the unique $(u,f_\exists,f_\forall)$-play, which is the infinite path formed by starting at vertex $u$ and then following the strategies $f_\exists$ and $f_\forall$ for every transition allowing more than one choice.

We say that a set of vertices $U \subseteq V$ is
\emph{$\sigma$-closed} if
\begin{enumerate}[label=(\roman*)]
\item
every vertex of $V_? \cap U$ has its unique successor in $U$,
\item
every $\sigma$-vertex of $U$ has at least one successor in $U$, and
\item
every $\neg\sigma$-vertex of $U$ has all of its successors in $U$.
\end{enumerate}
\end{defi}

\begin{defi}[Winning Regions]
\label{def:winning}
Given a safety game $G = (V,V_\exists,V_\forall,\to,E)$, we will define the sets $W_\exists \subseteq V$ and $W_\forall \subseteq V$, which partition the set $V$ of vertices. The set $W_\exists$ is called the \emph{winning region} of Player $\exists$, and $W_\forall$ is the \emph{winning region} of Player $\forall$. First, we define the transfinite sequence $(W_\forall^\kappa)_{\kappa \in \Ord}$ of sets. We write $\Ord$ for the class of ordinals. Informally, for an ordinal $\kappa$, the set $W_\forall^\kappa$ consists of the nodes from which Player $\forall$ can force a visit to $E$ in at most $\kappa$ steps.
\begin{align*}
W_\forall^0 &\triangleq E
&
W_\forall^{\kappa+1} &\triangleq
W_\forall^\kappa \cup
\begin{aligned}[t]
&\{ u \in V_? \mid
    \text{the unique successor of $u$ is in $W_\forall^\kappa$}
 \} \cup {}
\\
&\{ u \in V_\exists \mid
   \text{every successor of $u$ is in $W_\forall^\kappa$}
\}
\cup {}
\\
&\{ u \in V_\forall \mid
   \text{some successor of $u$ is in $W_\forall^\kappa$}
\}
\end{aligned}
\\
&&
W_\forall^\lambda &\triangleq
\tbigcup_{\kappa < \lambda} W_\forall^\kappa,
\ \text{for a limit ordinal $\lambda$}
\end{align*}
Now, we can define the winning regions of the players in terms of the above sequence:
\begin{align*}
W_\forall &\triangleq
\tbigcup_{\kappa \in \Ord} W_\forall^\kappa
&
W_\exists &\triangleq
V \setminus W_\forall
\end{align*}
Notice that the sets $W_\forall^0 \subseteq W_\forall^1 \subseteq \cdots \subseteq W_\forall^\kappa \subseteq \cdots$ form a transfinite chain w.r.t.\ inclusion.
\end{defi}

\begin{theorem}[Memoryless Determinacy]
\label{thm:determined}
Let $G = (V,V_\exists,V_\forall,\to,E)$ be a safety game, and $W_\exists$, $W_\forall$ be the winning regions of the two players. There is a memoryless $\exists$-strategy $f_\exists\star$ and a memoryless $\forall$-strategy $f_\forall\star$ that witness uniformly the winning regions. That is:
\begin{enumerate}
\item
For every $u \in W_\exists$ and every $\forall$-strategy $f_\forall$, $\play(u,f_\exists\star,f_\forall)$ is won by Player $\exists$.
\item
For every $u \in W_\forall$ and every $\exists$-strategy $f_\exists$, $\play(u,f_\exists,f_\forall\star)$ is won by Player $\forall$.
\end{enumerate}
\end{theorem}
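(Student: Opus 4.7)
The plan is the standard attractor argument for safety games, adapted to the transfinite setting used in Definition~\ref{def:winning}. For each $u \in W_\forall$ let $\kappa(u)$ denote the least ordinal with $u \in W_\forall^{\kappa(u)}$. If $u \in V_? \cup V_\forall$ and $u \notin E$, then $\kappa(u)$ must be a successor ordinal $\kappa' + 1$ (limit stages are unions, so no vertex is added at a limit), and the defining clause for $W_\forall^{\kappa'+1}$ supplies (for $V_?$) the unique successor or (for $V_\forall$) at least one successor $v$ with $\kappa(v) \leq \kappa'$. I would define the memoryless strategy $f_\forall\star$ on $W_\forall \cap V_\forall$ by picking, for each such $u$, one such successor $v$; on $V_\forall \setminus W_\forall$ the strategy can be chosen arbitrarily (the hypothesis that every vertex has a successor guarantees this is well-defined).

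For the $\exists$-strategy I first would establish a closure property for $W_\exists$: every $u \in W_\exists \cap V_\exists$ has \emph{some} successor in $W_\exists$, every $u \in W_\exists \cap V_\forall$ has \emph{all} successors in $W_\exists$, every $u \in W_\exists \cap V_?$ has its unique successor in $W_\exists$, and $W_\exists \cap E = \emptyset$. The last three facts are immediate contrapositives of the defining clauses for $W_\forall^{\kappa+1}$ and $W_\forall^0$. For the first, suppose $u \in V_\exists$ and every successor $v$ of $u$ lies in $W_\forall$; then $\{\kappa(v) : v \text{ a successor of } u\}$ is a set of ordinals, so its supremum $\kappa$ exists, monotonicity of $W_\forall^{(\cdot)}$ gives every successor in $W_\forall^\kappa$, and hence $u \in W_\forall^{\kappa+1} \subseteq W_\forall$, contradicting $u \in W_\exists$. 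Having this, I define $f_\exists\star$ on $W_\exists \cap V_\exists$ by selecting one successor in $W_\exists$; elsewhere it is arbitrary.

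Correctness of $f_\forall\star$ (claim 2) follows by transfinite induction on $\kappa(u)$: if $u \in E$ the play is already won by $\forall$; otherwise the first transition leads to a vertex $v$ with $\kappa(v) < \kappa(u)$ regardless of $f_\exists$, since $V_?$ has a unique successor (already below), every successor of a $V_\exists$-vertex in $W_\forall^{\kappa'+1}$ is in $W_\forall^{\kappa'}$, and $f_\forall\star$ is defined to descend on $V_\forall$-vertices. Because a strictly decreasing sequence of ordinals is finite, the induced play reaches $E$ in finitely many steps. Correctness of $f_\exists\star$ (claim 1) is an easy induction on the step index: the closure property above, together with the definition of $f_\exists\star$, shows that any $(u,f_\exists\star,f_\forall)$-play starting in $W_\exists$ stays in $W_\exists$, and since $W_\exists$ avoids $E$ the play is winning for $\exists$.

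The main subtlety I anticipate is the well-definedness of the two strategies in the transfinite setting, specifically the use of $\sup$ over the set of $\kappa(v)$ to derive the closure property for $W_\exists \cap V_\exists$; once that is in place, the rest is routine well-founded induction on ordinals for $\forall$ and finite induction on prefix length for $\exists$.
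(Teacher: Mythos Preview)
Your proposal is correct and follows essentially the same approach as the paper: the paper also defines $\ord(u)$ as the least $\kappa$ with $u\in W_\forall^\kappa$, uses the supremum of the successor-ordinals to show $W_\exists$ is $\exists$-closed (handling the $V_\exists$ case exactly as you do), defines the two memoryless strategies accordingly, and concludes via well-foundedness of the ordinals for Part~(2) and the closure/invariance argument for Part~(1).
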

\begin{proof}[Proof sketch]
The idea for Part (1) is to show that the set $W_\exists$ is $\exists$-closed, and therefore Player $\exists$ has a memoryless strategy $f\star_\exists$ that keeps within $W_\exists$ every play starting from a vertex of $W_\exists$. For the sake of contradiction, assume that $u \in W_\exists$ is a vertex which witnesses that $W_\exists$ is \emph{not} $\exists$-closed. There are three distinct possibilities for $u$:
\begin{enumerate}[label=(\roman*)]
\item
$u \in V_?$ and its unique successor is in $W_\forall$, or
\item
$u \in V_\exists$ and every successor of $u$ is in $W_\forall$, or
\item
$u \in V_\forall$ and some successor of $u$ is in $W_\forall$.
\end{enumerate}
Every possibility implies that $u \in W_\forall$, which gives the desired contradiction. So, $W_\exists$ is indeed $\exists$-closed. For Part (2), the proof is based on labeling every vertex $u \in W_\forall$ as follows:
\[
  \ord(u) \triangleq
  \text{the least ordinal $\kappa$ such that $u \in W_\forall^\kappa$}.
\]
One can then show that Player $\forall$ has a strategy $f\star_\forall$ so that for every play that starts from a vertex of $W_\forall$ the labels keep going down until eventually an error vertex is reached.
\end{proof}

\begin{observation}[Summarizing Safety Games]
\label{obs:summary}
We have already discussed in \S\ref{sec:prelim} that a denotational semantics is most useful when it is a \emph{faithful summarization} of the intended operational meaning. Before presenting a denotational semantics of dual nondeterminism in \S\ref{sec:denotational} we will discuss here what constitutes a summarization for safety games, and what kind of mathematical objects are useful for this purpose.

%Since we are interested in terminating computation, we will also distinguish some of the vertices of a game as being \emph{final}. So, let us say from now on that a safety game is a tuple $(V,V_\exists,V_\forall,\to,F,E)$ where $F$ are the final states and $F \subseteq E$. We consider the error states to be final, and we will also assume that the only outgoing transitions from final states are self-loops. Of course, this slight modification of the definition changes nothing in the essential properties of safety games.

Consider a safety game $(V,V_\exists,V_\forall,\to,E)$ and recall that $W_\exists$ is the set of vertices from which the existential player (angel) has a strategy to avoid the error vertices. We write $W_\exists(E)$ to emphasize the fact that the winning region of Player $\exists$ depends on which vertices are designated as error vertices. Theorem~\ref{thm:determined} implies that:
\begin{center}
\em
If $u \in W_\exists(E)$ then the angel can keep any $u$-play within the non-error vertices $\compl E$.
\end{center}
Let us think about the more general situation, where the error vertices $E$ can be varied. We can summarize the guarantees that the angel can make with the following object:
\[
  \phi \triangleq
  \{ (u,\compl E) \mid
     \text{in the game $(V,V_\exists,V_\forall,\to,E)$, the vertex $u$ is in $W_\exists(E)$}
  \}.
\]
Immediately from the definition of the winning regions (see Definition~\ref{def:winning}) we see that:
\begin{enumerate}
\item
The inclusion $E_1 \subseteq E_2$ implies $W_\forall(E_1) \subseteq W_\forall(E_2)$ and therefore $W_\exists(E_2) \subseteq W_\exists(E_1)$. Assuming that $X \subseteq Y \subseteq V$ we have that $\compl Y \subseteq \compl X$ and
\[
  (u,X) \in \phi \implies
  u \in W_\exists(\compl X) \implies
  u \in W_\exists(\compl Y) \implies
  (u,Y) \in \phi.
\]
\item
Notice that for error vertices $E = \emptyset$ we have that $W_\forall(\emptyset) = \emptyset$ and hence $W_\exists(\emptyset) = V$. It follows that $(u,V)$ belongs to $\phi$.
\end{enumerate}
Both of the above properties will turn out to be crucial for our development, and they motivate the notion of a \emph{game function} given formally in Definition~\ref{def:gameFunction} of \S\ref{sec:denotational}. For the rest of this section, it suffices to keep in mind that the denotations of game schemes will be binary relations from $S$ to $\wp S$, where $S$ is the state space.
\end{observation}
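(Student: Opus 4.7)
The plan is to derive both properties directly from the transfinite construction of $W_\forall$ in Definition~\ref{def:winning}. Each claim reduces to a short transfinite induction on the ordinal index $\kappa$, together with a purely set-theoretic manipulation for the passage to $\phi$.

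For property (1), I would first prove by transfinite induction on $\kappa$ that $E_1 \subseteq E_2$ implies $W_\forall^\kappa(E_1) \subseteq W_\forall^\kappa(E_2)$. The base case $\kappa = 0$ holds because $W_\forall^0 = E$. In the successor step, each of the three clauses used to enlarge $W_\forall^\kappa$ at stage $\kappa+1$ — ``the unique successor of $u$ is in $W_\forall^\kappa$'', ``every successor of $u$ is in $W_\forall^\kappa$'', and ``some successor of $u$ is in $W_\forall^\kappa$'' — is evidently monotone in $W_\forall^\kappa$ viewed as a predicate on vertices, so the inductive hypothesis lifts to stage $\kappa+1$. The limit case is immediate from the defining union. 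Taking the union over all ordinals gives $W_\forall(E_1) \subseteq W_\forall(E_2)$, and complementing yields $W_\exists(E_2) \subseteq W_\exists(E_1)$. The stated chain of implications is then a direct unfolding of definitions: $(u,X) \in \phi$ says $u \in W_\exists(\compl X)$, the assumption $X \subseteq Y$ gives $\compl Y \subseteq \compl X$, and the monotonicity just established produces $u \in W_\exists(\compl Y)$, i.e., $(u,Y) \in \phi$.

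For property (2), I would show by transfinite induction that $W_\forall^\kappa(\emptyset) = \emptyset$ for every $\kappa$. The base case is $W_\forall^0(\emptyset) = \emptyset$. For the successor step, assuming $W_\forall^\kappa(\emptyset) = \emptyset$, no vertex of $V_?$ has its unique successor in $\emptyset$, no vertex of $V_\exists$ has every successor in $\emptyset$, and no vertex of $V_\forall$ has some successor in $\emptyset$; hence $W_\forall^{\kappa+1}(\emptyset) = \emptyset$. The limit case is the union of empty sets. Therefore $W_\forall(\emptyset) = \emptyset$ and $W_\exists(\emptyset) = V$. Since $\compl V = \emptyset$, we have $u \in W_\exists(\compl V) = V$ for every $u$, which is exactly the assertion that $(u,V) \in \phi$.

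The only subtle point — and the one I would be careful to flag — is the use of the arena assumption from Definition~\ref{def:safetyGames} that every vertex has at least one successor. Without this, a $V_\exists$-vertex with no successors would vacuously satisfy ``every successor is in $W_\forall^\kappa(\emptyset)$'' and would enter $W_\forall(\emptyset)$, which would break property (2). With the successor-existence requirement in place, both inductions go through mechanically and no further obstacle arises.
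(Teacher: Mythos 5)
Your proof is correct and follows exactly the route the paper intends: the paper asserts both properties as ``immediate from the definition of the winning regions,'' and your transfinite inductions (monotonicity of $W_\forall^\kappa$ in $E$ for property (1), and $W_\forall^\kappa(\emptyset)=\emptyset$ for property (2)) are precisely the routine verifications being elided. Your remark that property (2) relies on the requirement in Definition~\ref{def:safetyGames} that every vertex has a successor is a genuine and correctly identified subtlety --- it is exactly the feature of the arena that underlies the non-emptiness condition in Definition~\ref{def:gameFunction}.
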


In order to streamline the presentation of the operational semantics, we should make a couple of inconsequential modifications to the language of game schemes. We restrict slightly the syntax of program terms by eliminating the diverging $\bot$ program, and by forbidding compositions $(f;g);h$ that associate to the left. These are not really limitations, because for every reasonable semantics $\bot$ has to be equivalent to the infinite loop $\whileDo \true \id$, and $(f;g);h$ has to be equivalent to $f;(g;h)$. So, we define the syntactic categories \emph{factor} and \emph{term} with the following grammars:
\begin{align*}
\text{factor $e$} &::=
\text{atomic program $a$, $b$, \ldots} \mid \id \mid
p[f,g] \mid \wh p f \mid
f \ang g \mid f \dem g
\\
\text{terms $f, g$} &::=
e \mid e;f
\end{align*}
According to the above definition, a term  is a nonempty list of factors. We write $@$ for the operation of list concatenation: $e@g = e;g$ and $(e;f)@g = e;(f@g)$.

\begin{defi}[Closure \& The $\tto$ Relation On Terms]
We define the \emph{closure} map $\cl(\cdot)$, which sends a program term to a finite set of program terms.
\begin{align*}
\cl(a) &= \{ a,\id \}
&
\cl(\wh p f) &= \{ \wh p f, \id \} \cup \cl(f)@\wh p f
&
\cl(e;f) &= \cl(e)@f \cup \cl(f)
\\
\cl(\id) &= \{ \id \}
&\hspace{-0.25em}
\cl(f \oplus g) &=
\{ f \oplus g \} \cup \cl(f) \cup \cl(g)
\end{align*}
where $\oplus$ is any of the constructors $\ang$, $\dem$, or $p[-,-]$. If $F$ is a set of terms and $g$ is a term, we lift the concatenation operation $@$ as follows: $F@g = \{ f@g \mid f \in F \}$. Now, we define the \emph{one-step reachability relation} $\tto$ on program terms as follows:
\begin{align*}
a &\tto \id
&
a;h &\tto \id;h
\\
\id &\tto
&
\id;h &\tto h
\\
f \oplus g &\tto f,\, g
&
(f \oplus g);h &\tto f@h,\, g@h
\\
\wh p f &\tto f@\wh p f,\, \id
&
\wh p f; h &\tto f@(\wh p f);h,\, \id;h
\end{align*}
The above definition of $\tto$ says, in particular, that $\id$ has no successor. The while loop $\wh p f$ has exactly two successors, namely $f@\wh p f$ and $\id$. We write $\tto\star$ to denote the reflexive transitive closure of the relation $\tto$.
\end{defi}

\begin{lemma}
\label{lemma:closureReach}
The following hold for the closure map and the reachability relation:
\begin{enumerate}
\item
\label{part:size}
Let $f$ be a program term. The cardinality of the set $C(f)$ is linear in the size $|f|$ of the term $f$. More specifically, it holds that $|C(f)| \leq 2 |f|$.
\item
\label{part:oneStep}
For terms $f, f'$ and $g$, if $f \tto f'$ then $f@g \tto f'@g$.
\item
\label{part:manySteps}
For terms $f, f'$ and $g$, if $f \tto\star f'$ then $f@g \tto\star f'@g$.
\item
\label{part:succ}
For terms $f$ and $g$, the $\tto$-successors of $f@g$ are contained in $\{ g \} \cup \{ f'@g \mid f \tto f' \}$.
\item
\label{part:reachInCl}
For every term $f$, the set $\cl(f)$ contains $f$ and is closed under $\tto$.
\item
\label{part:clInReach}
For all terms $f$ and $f'$, if $f' \in \cl(f)$ then $f \tto\star f'$.
\item
\label{part:clEqReach}
Let $f$ be a program term. Then, $C(f)$ is equal to the set $\{ f' \mid f \tto\star f' \}$ of terms that are reachable from $f$ via $\tto$.
\end{enumerate}
\textbf{\em Note}: Parts \eqref{part:size} and \eqref{part:clEqReach} are the main properties that we will need later. Parts \eqref{part:oneStep}--\eqref{part:clInReach} are the intermediate claims that are needed to obtain Part \eqref{part:clEqReach}.
\end{lemma}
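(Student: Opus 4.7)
The plan is to prove the seven parts in the order listed, treating (2)--(6) as a toolbox that culminates in Part~(\ref{part:clEqReach}), with Part~(\ref{part:size}) handled independently by a straightforward structural induction.

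For Part~(\ref{part:size}), I would induct on the structure of the program term $f$, taking $|f|$ to be the number of symbols (or operator occurrences) in $f$. The base cases $|C(a)| = 2$ and $|C(\id)| = 1$ are immediate. For $\wh p f$, $|C(\wh p f)| \le 2 + |C(f)| \le 2 + 2|f| \le 2|\wh p f|$; for $e;f$, $|C(e;f)| \le |C(e)| + |C(f)| \le 2(|e|+|f|) \le 2|e;f|$; for $f \oplus g$ with $\oplus \in \{\ang,\dem,p[-,-]\}$, $|C(f \oplus g)| \le 1 + 2|f| + 2|g| \le 2|f \oplus g|$. Part~(\ref{part:oneStep}) is a case analysis on the rule used for $f \tto f'$. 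When $f$ is a factor $e$, the rule from the left column is matched by the corresponding rule from the right column with $h := g$; when $f = e;f_1$ is a composition, $f@g = e;(f_1@g)$ and we apply the same right-column rule with $h := f_1@g$, obtaining $f'@g$ as successor. Part~(\ref{part:manySteps}) then follows by induction on the number of steps in $f \tto^* f'$, appealing to Part~(\ref{part:oneStep}) at each step. Part~(\ref{part:succ}) is again a mechanical case split by the head factor of $f$ and the successor rule used.

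Parts~(\ref{part:reachInCl}) and (\ref{part:clInReach}) are both proved by induction on the term structure, and the induction needs to be phrased carefully because $\cl$ and $@$ interact. For Part~(\ref{part:reachInCl}), that $f \in \cl(f)$ is immediate from the defining equations (checking the $e;f$ clause uses $e \in \cl(e)$ by IH to conclude $e;f = e@f \in \cl(e)@f$). Closure under $\tto$ is verified clause-by-clause using Part~(\ref{part:succ}): for any element $f'@h \in \cl(f)@h$ inside $\cl(\wh p f)$ or $\cl(e;h)$, any $\tto$-successor is either $h$ (which lies in $\cl(\wh p f)$ since $\id$ does, or in $\cl(h) \subseteq \cl(e;h)$) or $f''@h$ for some $f \tto f''$ (and $f'' \in \cl(f)$ by the inductive closure).

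Part~(\ref{part:clInReach}) is where I expect the only real subtlety, and it is the main obstacle worth flagging. The difficulty appears in the clause $\cl(e;h) = \cl(e)@h \cup \cl(h)$: given $f' \in \cl(h)$, we need $e;h \tto^* f'$, but a priori $e;h$ does not directly reach $h$. The key auxiliary observation, which I would prove en passant by the same induction, is that $\id \in \cl(f)$ for every term $f$ (immediate from the recursive definition). Then the induction hypothesis applied to $e$ yields $e \tto^* \id$, and Part~(\ref{part:manySteps}) promotes this to $e;h = e@h \tto^* \id@h = \id;h \tto h$; applying the induction hypothesis to $h$ gives $h \tto^* f'$, hence $e;h \tto^* f'$. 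The other clauses ($e'@h \in \cl(e)@h$, $f' \in \cl(f_i)$ inside $\cl(f_1 \oplus f_2)$, $f'@\wh p f \in \cl(f)@\wh p f$ using the step $\wh p f \tto f@\wh p f$) follow directly from IH combined with Part~(\ref{part:manySteps}). Finally, Part~(\ref{part:clEqReach}) is the mutual containment obtained by combining (\ref{part:reachInCl}) and (\ref{part:clInReach}): (\ref{part:reachInCl}) gives $\{f' \mid f \tto^* f'\} \subseteq \cl(f)$ because $\cl(f)$ contains $f$ and is $\tto$-closed, and (\ref{part:clInReach}) gives the reverse inclusion.
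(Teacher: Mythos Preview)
Your proposal is correct and follows essentially the same approach as the paper: structural induction for Part~(\ref{part:size}), case analysis for Parts~(\ref{part:oneStep}) and~(\ref{part:succ}), induction on the number of steps for Part~(\ref{part:manySteps}), structural induction using Part~(\ref{part:succ}) for Part~(\ref{part:reachInCl}), and structural induction using Part~(\ref{part:manySteps}) for Part~(\ref{part:clInReach}), with Part~(\ref{part:clEqReach}) obtained as the two inclusions from (\ref{part:reachInCl}) and (\ref{part:clInReach}). In particular, you correctly isolate the one nontrivial point---the auxiliary fact that $\id \in \cl(f)$ for every term $f$, needed in the $e;h$ case of Part~(\ref{part:clInReach}) to pass from $e;h$ to $h$---which is exactly the observation the paper singles out as well.
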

\begin{proof}
Part \eqref{part:size} can be shown by induction on the structure of $f$. Parts \eqref{part:oneStep} and \eqref{part:succ} are proved with a case analysis on the form of the term $f$. Part \eqref{part:manySteps} follows from Part \eqref{part:oneStep} by induction on the length of the $\tto$-sequence. Part \eqref{part:reachInCl} is shown by induction on $f$, making use of Part \eqref{part:succ}. The proof of Part \eqref{part:clInReach} requires an induction on $f$ and Part \eqref{part:manySteps}. Part \eqref{part:clEqReach} is an immediate consequence of Part \eqref{part:reachInCl} and Part \eqref{part:clInReach}.
\end{proof}

\begin{figure}[t]
\centering
\addtolength{\jot}{-0.3ex}
$\begin{gathered}
\begin{aligned}[t]
(u,a) &\to (X,\id),\, 
\text{when $(u,X) \in I(a)$}
\\
(u,\id) &\to %(u,\id)
\\
(u,p[f,g]) &\to (u,f),\, \text{if $u \in I(p)$}
\\
(u,p[f,g]) &\to (u,g),\, \text{if $u \notin I(p)$}
\\
(u,\wh p f) &\to (u,f@\wh p f),\, \text{if $u \in I(p)$}
\\
(u,\wh p f) &\to (u,\id),\, \text{if $u \notin I(p)$}
\\
(u,f \ang g) &\to (u,f),\ (u,g)
\\
(u,f \dem g) &\to (u,f),\ (u,g)
\end{aligned}
\ 
\begin{aligned}[t]
(u,a;h) &\to (X,\id;h),\, 
\text{when $(u,X) \in I(a)$}
\\
(u,\id;h) &\to (u,h)
\\
(u,p[f,g];h) &\to (a,f@h),\, \text{if $u \in I(p)$}
\\
(u,p[f,g];h) &\to (a,g@h),\, \text{if $u \notin I(p)$}
\\
(u,(\wh p f);h) &\to (u,f@(\wh p f);h),\, \text{if $u \notin I(p)$}
\\
(u,(\wh p f);h) &\to (u,\id;h),\, \text{if $u \notin I(p)$}
\\
(u,(f \ang g);h) &\to
(u,f@h),\ (u,g@h)
\\
(u,(f \dem g);h) &\to
(u,f@h),\ (u,g@h)
\end{aligned}
\\
(X,f) \to (v,f),
\ \text{where $v \in X \subseteq S$}
\end{gathered}$
\caption{While Game Schemes: Operational model for interpretation $I$ of atomic symbols.}
\label{fig:operationalWGS}
\end{figure}

\begin{defi}[Operational Model for Game Schemes]
\label{def:opModel}
Let $S$ be a nonempty set of states, and $I$ be an interpretation function for the atomic tests and actions. That is, $I$ specifies a unary predicate $I(p) \subseteq S$ for every atomic test $p$, and a binary relation $I(a) \subseteq S \times \wp S$ for every atomic action $a$. Let $f$ be a program term, and $E \subseteq S$ be a set of \emph{error states}. We define the \emph{operational model} for $I, f, E$, denoted $G_I(f,E)$, to be the safety game
\begin{align*}
G_I(f,E) &=
(V,V_\exists,V_\forall,\to,E \times \{ \id \}), \ \text{where}
\\
V &=
(S \times C(f)) \cup
(\mathcal{X} \times C(f))\ \text{with}
\\
\mathcal{X} &= \{ X \subseteq S \mid
  \text{$(u,X) \in I(a)$ for some $a \in C(f)$ and $u \in S$}
\},
\end{align*}
and the transition relation $\to$ is defined in Figure~\ref{fig:operationalWGS}. Part~\eqref{part:clEqReach} of Lemma~\ref{lemma:closureReach} implies that $V$ is closed under $\to$ (note that $\tto$ is the ``projection'' of $\to$ to the second component). Strictly speaking, in order for $G_I(f,E)$ to be a safety game according to Definition~\ref{def:safetyGames}, we would need to modify $\to$ so that every vertex $(u,\id)$ has a self-loop instead of being a sink, but this would be an inconsequential modification. For the components $V_\exists$ and $V_\forall$ we put:
\begin{itemize}[label=$-$]
\item
The $\exists$-vertices $V_\exists \subseteq V$ consist of the pairs of the form $(u,f \ang g)$, as well as the pairs $(u,a)$ and $(u,a;h)$ for atomic program $a$.
\item
The $\forall$-vertices $V_\forall \subseteq V$ consist of the pairs $(u,f \dem g)$, as well as the pairs $(X,f)$ where $(u,X) \in I(a)$ for some atomic action $a$ and state $u$.
\end{itemize}
We think of the pairs $(u,\id)$ as being \emph{terminal vertices}, and the error vertices are $E \times \{ \id \}$.
\end{defi}

\newcommand{\inc}{x\texttt{++}}
\begin{figure}[t]
\centering
$\begin{gathered}
\begin{tikzpicture}[baseline=(current bounding box.center), ->, node distance=20mm, auto]
  \small
  \node (N11) {$(0,h)$};
  \node (N12) [rounded rectangle, draw, right of=N11, node distance=17mm] {$0,f;g;h$};
  \node (N13) [right of=N12] {$(0,\id;g;h)$};
  \node (N23) [below of=N13, node distance=7mm] {$(0,\inc;g;h)$};
  \node (N14) [rectangle, draw, right of=N13] {$0,g;h$};
  \node (N24) [rectangle, draw, below of=N14, node distance=8mm] {$1,g;h$};
  \node (N15) [right of=N14] {$(0,\inc;h)$};
  \node (N05) [above of=N15, node distance=6mm] {$(0,\id;h)$};
  \node (N25) [right of=N24] {$(1,\id;h)$};
  \node (N35) [below of=N25, node distance=6mm] {$(1,\inc;h)$};
  \node (N16) [right of=N15, node distance=18mm] {$(1,h)$};
  \node (N36) [right of=N35, node distance=18mm] {$(2,h)$};
  \node (N17) [right of=N16, node distance=14mm] {$(1,\id)$};
  \node (N37) [right of=N36, node distance=14mm] {$(2,\id)$};

  \path (N11) edge (N12);
  \path (N12) edge (N13);
  \path (N12) edge (N23);
  \path (N13) edge (N14);
  \path (N23) edge (N24);
  \path (N14) edge (N05);
  \path (N14) edge (N15);
  \path (N24) edge (N25);
  \path (N24) edge (N35);
  \path (N05) edge[bend right=11] (N11);
  \path (N15) edge (N16);
  \path (N25) edge (N16);
  \path (N35) edge (N36);
  \path (N16) edge (N17);
  \path (N36) edge (N37);
%  \node (S) {};
%  \node (A) [circle, draw, right of=S, node distance=15mm] {$u$};
%  \node (D) [draw, right of=A, node distance=17mm] {$X_2$};
%  \node (D1) [draw, above of=D] {$X_1$};
%  \node (D2) [draw, below of=D] {$X_3$};
%  \node (DD1) [right of=D1, node distance=17mm] {};
%  \node (E1) [above of=DD1, node distance=2mm] {$v_1$};
%  \node (E2) [below of=DD1, node distance=2mm] {$v_2$};
%  \node (DD) [right of=D, node distance=17mm] {$v_3$};
%  \node (DD2) [right of=D2, node distance=17mm] {};
%  \node (E4) [above of=DD2, node distance=2mm] {$v_4$};
%  \node (E5) [below of=DD2, node distance=2mm] {$v_5$};
%
%  \path (S) edge[thick] node {start} (A);
%  \path (A) edge (D1);
%  \path (A) edge (D);
%  \path (A) edge (D2);
%  \path (D) edge (DD);
%  \path (D1) edge (E1);
%  \path (D1) edge (E2);
%  \path (D2) edge (E4);
%  \path (D2) edge (E5);
\end{tikzpicture}
\hspace{2em}
\begin{aligned}
f &= \id \ang \inc
\\
g &= \id \dem \inc
\\
p &= (x=0)
\\
h &= \wh p (f;g)
\end{aligned}
\end{gathered}$
\caption{Reduced operational model for the dually nondeterministic program $h$. The vertices of the demon (angel) are indicated with rectangles (rounded rectangles).}
\label{fig:exOperational}
\end{figure}

\begin{example}
\label{ex:operational}
Suppose that we want to describe a program whose state consists of a single variable $x$ that can take values 0, 1 or 2. The only atomic action that we consider is $\inc$, which assigns $(x+1) \bmod 3$ to the variable $x$. The atomic test $(x=0)$ checks if the value of $x$ is equal to 0. Consider the program
\[
  h \triangleq \whileDo{(x=0)}{((\id \ang \inc);(\id \dem \inc))}.
\]
%with the intended interpretation of the atomic test and action.
On the right-hand side of Figure~\ref{fig:exOperational} we have some abbreviations for parts of the program, and on the left-hand side we see a simplified version of the operational model. We have only drawn the vertices that are reachable from $(0,h)$, $(1,h)$ and $(2,h)$. Since the action $\inc$ is deterministic, we have also made some simplifications such as: the transition sequence $(0,\inc;h) \to (\{1\},\id;h) \to (1,h)$ has been reduced to $(0,\inc;h) \to (1,h)$.

The terminal vertices shown in Figure~\ref{fig:exOperational} are $(1,\id)$ and $(2,\id)$. Suppose that $(2,\id)$ is the unique error vertex. The winning region $W_\forall$ of the demon consists of:
\begin{align*}
(2,\id) &&
(2,h) &&
(1,\inc;h) &&
(1,g;h) &&
(0,\inc;g;h)
\end{align*}
The rest of the vertices form the winning region $W_\exists$ of the angel.
\end{example}

\section{Denotational Semantics and Full Abstraction}
\label{sec:denotational}

In \S\ref{sec:operational} we presented the syntax of while game schemes and we gave an operational model based on two-player games on finite graphs. Because of this adversarial dynamics, the input-output behavior can no longer be described using binary relations consisting of the possible input-ouput pairs, as is done for usual programs (recall Definition~\ref{def:nondetI}). Instead, we will adopt an angel-centric view, and we will record in our program denotations the predicates that the angel can guarantee of the output. As usual, a nonempty set $S$ represents the abstract state space, and every test is interpreted as a unary predicate on the state space. Every program term is now interpreted as a binary relation from $S$ to $\wp S$.

Consider such a binary relation $\phi \subseteq S \times \wp S$, which should be thought of as the extension of a dually nondeterministic program. Informally, the pair $(u,X)$ is supposed to belong to $\phi$ when the following holds: if the program starts at state $u$, then the angel has a strategy so that whatever the demon does, the final state (supposing that the program terminates) satisfies the predicate $X$.

The binary relation $\phi \subseteq S \times \wp S$ encodes both the choices of the angel and the demon, and it can be understood intuitively as a two-round game. The angel moves first, and then the demon makes the final move. The options that are available to the angel are given by multiple pairs $(u,X_1)$, $(u,X_2)$, and so on. So, when the game starts at state $u$, the angel first chooses either $X_1$, or $X_2$, or any of the other available options. Suppose that the angel first chooses $X_i$, where $(u,X_i)$ is in $\phi$. Then, during the second round, the demon chooses some final state $v \in X_i$. See Figure~\ref{fig:game} for a visualization of this game.

\begin{figure}[t]
\centering
$\begin{gathered}
\begin{tikzpicture}[baseline=(current bounding box.center), ->, node distance=8mm, auto]
  \small
  \node (S) {};
  \node (A) [circle, draw, right of=S, node distance=15mm] {$u$};
  \node (D) [draw, right of=A, node distance=17mm] {$X_2$};
  \node (D1) [draw, above of=D] {$X_1$};
  \node (D2) [draw, below of=D] {$X_3$};
  \node (DD1) [right of=D1, node distance=17mm] {};
  \node (E1) [above of=DD1, node distance=2mm] {$v_1$};
  \node (E2) [below of=DD1, node distance=2mm] {$v_2$};
  \node (DD) [right of=D, node distance=17mm] {$v_3$};
  \node (DD2) [right of=D2, node distance=17mm] {};
  \node (E4) [above of=DD2, node distance=2mm] {$v_4$};
  \node (E5) [below of=DD2, node distance=2mm] {$v_5$};

  \path (S) edge[thick] node {start} (A);
  \path (A) edge (D1);
  \path (A) edge (D);
  \path (A) edge (D2);
  \path (D) edge (DD);
  \path (D1) edge (E1);
  \path (D1) edge (E2);
  \path (D2) edge (E4);
  \path (D2) edge (E5);
\end{tikzpicture}
\hspace{4em}
\begin{aligned}
\vspace{-1ex}
\phi &= \{ (u,X_1), (u,X_2), (u,X_3) \}
\\
X_1 &= \{ v_1, v_2 \}
\\
X_2 &= \{ v_3 \}
\\
X_3 &= \{ v_4, v_5 \}
\end{aligned}
\end{gathered}$
\caption{Visualization of a two-round game between the angel and the demon, as described by a relation $\phi \subseteq S \times \wp S$. The angel moves at the circled node, the demon moves at the boxed nodes, and the nodes with no outline are terminal.}
\label{fig:game}
\end{figure}

When $(u,X)$ is in $\phi$, we understand this as meaning that that the angel can guarantee the predicate $X$ when we start at $u$. So, it is reasonable to expect that the angel also guarantees from $u$ any predicate that is weaker than $X$. In order to be consistent with the viewpoint of partial correctness, we also want to require that the angel can guarantee anything in the case of nontermination. Recall Observation~\ref{obs:summary}, where we discuss how to summarize two-player games on graphs from the perspective of what the angel can guarantee. These considerations motivate the following definition.

\begin{defi}[Game Functions]
\label{def:gameFunction}
Let $S$ be a nonempty set called the \emph{state space}. We say that $\phi \subseteq S \times \wp S$ is a \emph{game function} on $S$, denoted $\phi: S \gto S$, if it satifies:
\begin{enumerate}
\item
The set $\phi$ is \emph{closed upwards}, which is defined to mean the following:
\[
  \text{$(u,X) \in \phi$ and $X \subseteq Y$}
  \implies
  (u,Y) \in \phi
\]
for every state $u \in S$ and all predicates $X, Y \subseteq S$.
%\item
%The set $f$ is \emph{closed upwards}: $(u,X) \in f$ and $X \subseteq Y \subseteq B$ $\implies$ $(u,Y) \in f$.
\item
\emph{Non-emptiness}:
For every $u \in S$ there is some $X \subseteq S$ with $(u,X) \in \phi$.
\end{enumerate}
Given Condition (1), we can equivalently require that $(u,S) \in \phi$ for every $u \in S$, instead of having Condition (2). This essentially says that the angel always guarantees that the output lies in the state space.

Let $\phi: S \gto S$ be a game function. The \emph{options} of the angel at a state $u \in S$, which we denote by $\phi(u)$, is the collection of predicates
\[
  \phi(u) = \{ X \subseteq S \mid (u,X) \in \phi \}.
\]
In other words, $\phi(u)$ is the set of all predicates that the angel can guarantee from $u$. This notation suggests that we can equivalently understand $\phi$ as being a function $S \to \wp\wp S$. Indeed, the definition says that $(u,X) \in \phi$ iff $X \in \phi(u)$ for all $u \in S$ and $X \subseteq S$.

%We say that a game function $f: S \gto S$ is \emph{non-angelic} if for every $u \in S$ there is some $X \subseteq S$ so that $f(u) = \{ Y \subseteq S \mid X \subseteq Y \}$. It is easy to see that this $X \subseteq S$ is unique, because the equality
%$\{ Y \subseteq S \mid X_1 \subseteq Y \} =
% \{ Y \subseteq S \mid X_2 \subseteq Y \}
%$
%implies that $X_1 = X_2$. Essentially, the definition says that the angel always has exactly one minimal choice: for every $u \in S$ there is exactly one minimal predicate $X$ that the angel can guarantee.
\end{defi}

Now, we will observe that the space of game functions is large enough to encompass nondeterministic functions as a special case. To make this claim precise, we need to define a \emph{lifting} operation, which embeds the nondeterministic functions into the game functions. As we will see, this is not merely an injective map, but it also commutes with the corresponding semantic operations in these two spaces. So, the algebra of nondeterministic functions is embedded via the lifting map into the algebra of game functions.

\begin{defi}[Lifting \& Non-Angelic Game Functions]
\label{def:lift}
Let $S$ be a state space, and $k: S \nto S$ be a nondeterministic function on $S$. We define the \emph{lifting} of $k$ to be the game function $\lift k: S \gto S$, which is given by
\[
  \lift k \triangleq
  \{ (u,Y) \mid
     \text{$u \in A$ and $k(u) \subseteq Y$}
  \}: S \gto S.
\]
This says that for every state $u \in S$ and predicate $Y \subseteq S$: $(u,Y) \in \lift k$ iff $k(u) \subseteq Y$. The lifting operation is thus a mapping from the space $S \nto S$ to $S \gto S$.

We say that a game function $\phi: S \gto S$ is \emph{non-angelic} if it is the lifting of a nondeterministic function, that is, $\phi = \lift k$ for some $k: S \nto S$. Essentially, the definition says that the angel always has exactly one minimal choice: for every $u \in S$ there is exactly one minimal predicate $k(u)$ that the angel can guarantee.
\end{defi}

\begin{observation}[Demonic \& Angelic Lifting]
In Definition~\ref{def:lift} we consider a lifting operation from the space $S \nto S$ to the space $S \gto S$ which interprets the nondeterminism demonically. This works, because a nondeterministic function $k: S \nto S$ records reachability information, i.e.\ what the demon can achieve. So, we could call $\lift$ more descriptively the \emph{demonic lifting} operation. The question then arises of whether we can define an analogous \emph{angelic lifting} operation which interprets the nondeterminism angelically. First, we notice that the space of nondeterministic functions $S \nto S$ with the operations of Figure~\ref{fig:nondetOps} is inappropriate for modeling pure angelic nondeterminism. Since the angel's goal is safety and the angel wins in the case of nontermination of the program, the semantics should record explicitly when the angel can force divergence. The standard relational semantics of \S\ref{sec:prelim}, however, is ``divergence-oblivious'' in the sense of suppressing the information regarding the possibility of divergence. For example, we have that
\[
  0_S + k = k
  \ \text{for every $k: S \nto S$}.
\]
So, in order to define a reasonable angelic lifting one would have to modify the relational semantics of \S\ref{sec:prelim} to record the possibility of nontermination. While this investigation would be interesting mathematically, it is beyond the scope of the present paper. From a practical standpoint, distinguishing the non-angelic game functions (see Definition~\ref{def:lift}) is crucial for the synthesis applications that we consider here. We have to restrict attention to programs where the atomic actions do not involve any angelic choices in order to formulate a reasonable synthesis problem for angelic strategies. Since we are not concerned with the implementation of demonic strategies (the choices of the demon are beyond our control!), the definition of a reasonable angelic lifting operation is of little use here.
\end{observation}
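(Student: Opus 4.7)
The observation bundles a single concrete algebraic identity with several interpretive claims, so my plan is to isolate and verify that identity and then argue how it forces the stated limitation on any purported angelic lifting. First I would unfold the definitions in Figure~\ref{fig:nondetOps}: for any state $u \in S$, we have $(0_S \plus k)(u) = 0_S(u) \cup k(u) = \emptyset \cup k(u) = k(u)$, and hence $0_S \plus k = k$ as functions $S \nto S$. This confirms that the ``pure divergence'' element is absorbed by every other nondeterministic function under $\plus$, which is the precise sense in which the relational semantics is \emph{divergence-oblivious}: the mere option of diverging is indistinguishable from having no additional option at all.

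Next I would translate the identity into the intended angelic reading to expose why $S \nto S$ is unsuitable for pure angelic choice. Under partial correctness the angel wins whenever the computation diverges, so any genuine angelic choice operator must place $\dvrg \ang k$ strictly above $k$ in the ordering of angelic advantage: the angel gains a new winning option, namely looping forever. The identity $0_S \plus k = k$ shows that $\plus$ collapses exactly this distinction, so one cannot simply reinterpret $\plus$ angelically. More formally, I would argue that any candidate operation $\mathrm{alift}\colon (S \nto S) \to (S \gto S)$ that commutes with choice in the naive sense would be forced to satisfy $\mathrm{alift}(0_S \plus k) = \mathrm{alift}(k)$, thereby identifying the game function representing ``angelically choose to diverge or do $k$'' with the one representing ``just do $k$'', in direct conflict with the partial-correctness convention that makes divergence an angelic win.

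To substantiate the last sentences of the observation, I would sketch how a faithful angelic lifting can be recovered only by enriching $S \nto S$ with explicit divergence information---for instance by replacing $\wp S$ with $\wp S$ augmented by a distinguished divergence flag, or by working in a Smyth/Plotkin-style powerdomain where $0_S$ is no longer a unit for $\plus$---and then observe that in our setting the restriction to non-angelic atomic actions introduced in Definition~\ref{def:lift} means we only ever need the demonic lifting, so the missing construction does not affect any later proof. The main obstacle here is not mathematical (the identity is one line) but interpretive: one has to pin down precisely which law an angelic choice operator must satisfy so that the failure of $\plus$ to satisfy it becomes a bona fide obstruction rather than a matter of taste; isolating the equation $0_S \plus k = k$ and contrasting it with the required strict inequality $\dvrg \ang k > k$ does exactly that.
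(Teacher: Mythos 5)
Your proposal is correct and follows essentially the same line as the paper: the only formally checkable content of this Observation is the one-line identity $0_S \plus k = k$, which you verify exactly as the definitions in Figure~\ref{fig:nondetOps} dictate, and your interpretive elaboration (divergence being absorbed under $\plus$ makes the relational semantics unable to record an angelic win by nontermination, hence no faithful angelic lifting without enriching the semantics) matches the paper's own discussion, which offers no further proof. Nothing is missing.
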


\begin{figure}[t!]
\centering
$\begin{aligned}
&\text{Composition $\gc$}
\quad&\quad
(u,Z) \in (\phi \gc \psi) &\stackrel{\text{def}}{\iff}
\begin{aligned}[t]
&\text{there is $Y \subseteq S$ s.t.\ $(u,Y) \in \phi$,}
\\[-0.5ex]
&\text{and $(v,Z) \in \psi$ for every $v \in Y$.}
\end{aligned}
\\
&\text{Conditional $(\cdot)\glb -,- \grb$}
&
P \glb \phi,\psi \grb &\triangleq
\bigl(
  \phi \cap (P \times \wp S)
\bigr) \cup
\bigl(
  \psi \cap (\compl P \times \wp S)
\bigr)
\\
&&
P \glb \phi,\psi \grb(u) &=
\phi(u),\ \text{if $u \in P$}
\\
&&
P \glb \phi,\psi \grb(u) &=
\psi(u),\ \text{if $u \notin P$}
\\
&\text{Angelic choice $\gang$}
&
\phi \gang \psi &\triangleq
\phi \cup \psi
\\
&\text{Demonic choice $\gdem$}
&
\phi \gdem \psi &\triangleq
\{ (u,X \cup Y) \mid \text{($u,X) \in \phi$ and $(u,Y) \in \psi$} \}
\\
&&
&= \phi \cap \psi
\\
&\text{Identity $\one_S$}
&
\one_S(u) &\triangleq
\{ (u,X) \mid \text{$u \in S$, $X \subseteq S$ and $u \in X$} \}
\\
&\text{Zero $\zero_S$}
&
\zero_S(u) &\triangleq S \times \wp S
%\{ (u,X) \mid
%     \text{$u \in S$ and $X \subseteq S$}
%\}
\\
&\text{Iteration $(\gWhileDo \cdot -)$}
&
\gWhileDo P \phi &\triangleq
\tbigcap_{\kappa \in \Ord} W_\kappa,\ \text{where}
\\
&&
W_0 &\triangleq P \glb \zero_S, \one_S \grb
\\
&&
W_{\kappa+1} &\triangleq P \glb \phi \gc W_\kappa, \one_S \grb
\\
&&
W_\lambda &\triangleq \tbigcap_{\kappa < \lambda} W_\kappa,\   \text{for limit ordinal $\lambda$}
\end{aligned}$
\caption{Semantic operations for game functions.}
\label{fig:gameOps}
\end{figure}

We list the formal definitions of the semantic operations on game functions $S \gto S$ in Figure~\ref{fig:gameOps}. As expected, the angelic choice operation $\gang$ increases the options available to the angel. The demonic choice operation $\gdem$ increases the options of the demon. The identity $\one_S$ is the smallest game function that contains $(u,\{u\})$ for every state $u \in S$. Informally, this definition says that on input $u$, the angel guarantees output $u$ in the identity game. The intuition for the definition of the zero function $\zero_S$ is that when the program diverges, the demon cannot lead the game to an error state, therefore the angel can guarantee anything. This describes a notion of partial correctness.

\begin{example}
\label{ex:denotational}
We will calculate now the denotation of the program $h$ from Example~\ref{ex:operational}. We write $S$ for the state space, and $I$ for the interpretation of the atomic symbols. We present below a table with the denotations of all subprograms of $h$.
\begin{align*}
p &= (x=0)
&
f &= \id \ang \inc
&
g &= \id \dem \inc
&
h &= \wh p (f;g)
\end{align*}
 Since the options of the angel are closed upwards, it suffices to record the minimal predicates for every state. Define $P = I(p) = \{0\}$, and we have:
\begin{center}
\begin{tabular}{|c|ccccccccc|}
\hline
state &
$\zero_S$
&
$\one_S$
&
$I(\inc)$
&
$\phi = I(f)$
&
$\psi = I(g)$
&
$\phi \gc \psi = I(f;g)$
&
$W_0$
&
$\phi \gc \psi \gc W_0$
&
$W_1$
\\ \hline
0
&
$\emptyset$
&
$\{0\}$
&
$\{1\}$
&
$\{0\}\ \{1\}$
&
$\{0,1\}$
&
$\{0,1\}\ \{1,2\}$
&
$\emptyset$
&
$\{1\}$\ \textcolor{gray}{$\{1,2\}$}
&
$\{1\}$
\\
1
&
$\emptyset$
&
$\{1\}$
&
$\{2\}$
&
$\{1\}\ \{2\}$
&
$\{1,2\}$
&
$\{1,2\}\ \{2,0\}$
&
$\{1\}$
&
\textcolor{gray}{$\{1,2\}$}\ $\{2\}$
&
$\{1\}$
\\
2
&
$\emptyset$
&
$\{2\}$
&
$\{0\}$
&
$\{2\}\ \{0\}$
&
$\{2,0\}$
&
$\{2,0\}\ \{0,1\}$
&
$\{2\}$
&
$\{2\}\ \{1\}$
&
$\{2\}$
\\ \hline
\end{tabular}
\end{center}
where $W_0 = P \glb \zero_S,\one_S \grb$ and $W_1 = P \glb \phi \gc \psi \gc W_0,\one_S \grb$. We leave as an exercise to the reader to verify that $W_2 = P \glb \phi \gc \psi \gc W_1,\one_S \grb = W_1$. It follows that $I(h) = W_2$.
\end{example}

We note that the definition of Figure~\ref{fig:gameOps} gives the while operation as a greatest fixpoint. This is not surprising, because the semantics we consider is meant to be useful for reasoning about \emph{safety properties}. As we will see, this definition agrees with the standard least fixpoint definition of while loops when there is only one kind of nondeterminism (Lemma~\ref{lemma:lift} below). More importantly, we will prove that our definition is \emph{exactly correct}, becauses it agrees with the intended operational semantics of dual nondeterminism (Theorem~\ref{thm:semantics}).

%\begin{observation}
%The composition operation $\kc$ for game functions is associative:
%\[
%  \AxiomC{$f: A \gto B$}
%  \AxiomC{$g: B \gto C$}
%  \AxiomC{$h: C \gto D$}
%  \TernaryInfC{$(f \kc g) \kc h = f \kc (g \kc h): A \gto D$}
%  \DisplayProof.
%\]
%\end{observation}

\begin{lemma}[Lifting Commutes With The Semantic Operations]
\label{lemma:lift}
Let $k$ and $\ell$ be nondeterministic functions on $S$, and $P$ be a unary predicate on $S$. Then, the following hold:
\begin{align*}
\lift 0_S &= \zero_S
&
\lift(k \kc \ell) &=
(\lift k) \gc (\lift\ell)
&
\lift(P \lb k,\ell \rb) &=
P \glb \lift k, \lift\ell \grb
\\
\lift 1_S &= \one_S
&
\lift(k \plus \ell) &=
(\lift k) \gdem (\lift\ell)
&
\lift(\sWhileDo P k) &=
\gWhileDo P (\lift k)
\end{align*}
So, the lifting map commutes with all the semantic operations of nondeterministic functions.
\end{lemma}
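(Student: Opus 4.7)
The plan is to verify each of the six equalities by unfolding definitions, with the first five routine and the sixth (the while loop) requiring more care. I will use throughout the defining characterization $(u,Y) \in \lift k$ iff $k(u) \subseteq Y$.

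The zero and identity cases are immediate: $0_S(u) = \emptyset \subseteq Y$ for every $Y$, matching $\zero_S(u) = \wp S$; and $\{u\} \subseteq Y$ iff $u \in Y$, matching $\one_S$. For composition, one checks that $(u,Z) \in (\lift k) \gc (\lift\ell)$ iff $\tbigcup_{v \in k(u)} \ell(v) \subseteq Z$; the non-trivial direction uses the witness $Y = k(u)$. More generally, for any game function $\psi$, we have $(u,Z) \in \lift k \gc \psi$ iff $(v,Z) \in \psi$ for every $v \in k(u)$, a fact I will reuse for the while case. The demonic choice case is handled using the alternative description $\phi \gdem \psi = \phi \cap \psi$ valid for upward-closed $\phi, \psi$: $k(u) \cup \ell(u) \subseteq Z$ iff both $k(u) \subseteq Z$ and $\ell(u) \subseteq Z$. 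The conditional case is a direct split on whether $u \in P$.

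For the while loop, the iterative approximations on the two sides have different shapes: $\sWhileDo P k = \tbigplus_{n \geq 0} V_n$ is indexed by the naturals, while $\gWhileDo P (\lift k) = \tbigcap_{\kappa \in \Ord} W_\kappa$ is indexed by the ordinals. First I would show by induction on $n$ that $\lift V_n = W_n$ for every finite $n$, applying the already-established cases of the lemma to the recursive definitions of $V_n$ and $W_n$. This yields $\lift(\sWhileDo P k) = \lift(\tbigcup_n V_n) = \tbigcap_n \lift V_n = \tbigcap_n W_n = W_\omega$, where the second equality swaps the ``$\subseteq Z$'' condition across the union over $n$.

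The main obstacle is then to argue that the ordinal sequence $(W_\kappa)$ stabilizes at $\omega$, so that $\gWhileDo P (\lift k) = W_\omega$. Combining the general characterization above with the definition of $W_{\kappa+1}$ gives $W_{\kappa+1}(u) = \{Z \mid \forall v \in k(u),\ (v,Z) \in W_\kappa\}$ for $u \in P$, while $W_\kappa(u) = \one_S(u)$ is independent of $\kappa$ for $u \notin P$. Hence, for $u \in P$,
\[
W_{\omega+1}(u)
= \{Z \mid \forall v \in k(u),\ (v,Z) \in W_\omega\}
= \{Z \mid \forall n,\ \forall v \in k(u),\ (v,Z) \in W_n\}
= \tbigcap_n W_{n+1}(u)
= W_\omega(u),
\]
by a harmless swap of two universal quantifiers. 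By monotonicity of the iteration, $W_\kappa = W_\omega$ for every $\kappa \geq \omega$, so $\tbigcap_\kappa W_\kappa = W_\omega = \lift(\sWhileDo P k)$, completing the argument.
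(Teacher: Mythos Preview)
Your proof is correct and matches the paper's treatment of the five ``easy'' cases (zero, identity, composition, demonic choice, conditional). For the while case, your organization differs slightly from the paper's. The paper extends the $V$-sequence transfinitely by setting $V_\lambda = \tbigplus_{\kappa<\lambda} V_\kappa$ at limit ordinals, proves $W_\kappa = \lift V_\kappa$ for \emph{every} ordinal $\kappa$ by a single transfinite induction, and then concludes via the chain of equivalences $(u,Y) \in \lift(\sWhileDo P k) \iff \forall\kappa\, V_\kappa(u) \subseteq Y \iff \forall\kappa\, (u,Y) \in W_\kappa$. You instead keep the $V$-sequence at the naturals, prove $\lift V_n = W_n$ only for finite $n$, and then establish stabilization $W_{\omega+1} = W_\omega$ on the game-function side directly via your quantifier-swap computation. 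The two arguments are equivalent in content: since $V_\kappa = V_\omega$ for all $\kappa \geq \omega$ (the well-known fact the paper also invokes), the paper's $W_\kappa = \lift V_\kappa = \lift V_\omega = W_\omega$ for $\kappa \geq \omega$ \emph{is} the stabilization you prove by hand. Your route is arguably more self-contained in that it avoids the somewhat artificial transfinite extension of a sequence already known to stabilize at $\omega$; the paper's route is slightly more uniform in that a single transfinite induction handles everything at once.
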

\begin{proof}
The cases of $0$, $1$, demonic choice and conditionals are straightforward and we omit them. For the case of composition we have that:
\begin{allowdisplaybreaks}
\begin{align*}
&(u,Z) \in \lift(k \kc \ell) \iff
&&\text{[def.\ of $\lift$]}
\\
&(k \kc \ell)(u) \subseteq Z \iff
&&\text{[def.\ of $\kc$]}
\\
&\tbigcup_{v \in k(u)} \ell(v) \subseteq Z \iff
&&\text{[union and $\subseteq$]}
\\
&\text{$\ell(v) \subseteq Z$ for every $v \in k(u)$} \iff
&&\text{[for ``$\Rightarrow$'' put $Y = k(u)$]}
\\
&\text{$\exists Y \subseteq S.\ $ $k(u) \subseteq Y$ and $\ell(v) \subseteq Z$ for all $v \in Y$} \iff
&&\text{[def.\ of $\lift$]}
\\
&\text{$\exists Y \subseteq S.\ $ $(u,Y) \in \lift k$ and $(v,Z) \in \lift\ell$ for all $v \in Y$} \iff
&&\text{[def.\ of $\gc$]}
\\
&(u,Z) \in (\lift k) \gc (\lift \ell).
\end{align*}
\end{allowdisplaybreaks}%
Since $u \in S$ and $Z \subseteq S$ above are arbitrary, we have established $\lift(k \kc \ell) = (\lift k) \gc (\lift\ell)$. It remains to consider the case of $\sWhileDo P k$. We put $\phi = \lift k: S \gto S$, and we recall the definitions for the semantic iteration operations:
\begin{align*}
\sWhileDo P k &= \tsum_{\kappa \in \Ord} V_n
&
\gWhileDo P \phi &= \tbigcap_{\kappa \in \Ord} W_\kappa
\\
V_0 &= P \lb 0_S,1_S \rb
&
W_0 &= P \glb \zero_S,\one_S \grb
\\
V_{\kappa+1} &= P \lb k \kc V_\kappa,1_S \rb
&
W_{\kappa+1} &= P \glb \phi \gc W_\kappa, \one_S \grb
\\
V_\lambda &= \tsum_{\kappa<\lambda} V_\kappa,
\ \text{limit ordinal $\lambda$}
&
W_\lambda &= \tbigcap_{\kappa<\lambda} W_\kappa,
\ \text{limit ordinal $\lambda$}
\end{align*}
It is a well-known fact that $\sWhileDo P k = V_\omega$, which says that the least fixpoint closes at $\omega$ iterations. The crucial observation now is that
\[
  W_\kappa = \lift V_\kappa
  \ \text{for every ordinal $\kappa$}.
\]
This is shown by transfinite induction on ordinals. The proof involves using the commutation results for $\lift$ (for 0, 1, conditionals, composition) that we have shown so far. Finally,
\begin{align*}
(u,Y) \in \lift(\sWhileDo P k) &\iff
(\sWhileDo P k)(u) \subseteq Y
\\ &\iff
(\tsum_\kappa V_\kappa)(u) =
\tbigcup_\kappa V_\kappa(u) \subseteq Y
\\ &\iff
V_\kappa(u) \subseteq Y\ \text{for every ordinal $\kappa$}
\\ &\iff
(u,Y) \in \lift V_\kappa = W_\kappa\ \text{for every ordinal $\kappa$}
\\ &\iff
(u,Y) \in \tbigcap_\kappa W_\kappa = \gWhileDo P \phi.
\end{align*}
We have thus shown that $\lift(\sWhileDo P k) = \gWhileDo P (\lift k)$ and the proof is complete.
\end{proof}

Essentially, the above lemma says that the game function operations are a generalization of the nondeterministic function operations. It is an easy exercise to show that the map $\lift$ is injective. So, the algebra $S \nto S$ with the operations of Figure~\ref{fig:nondetOps} is embedded via $\lift$ into the algebra $S \gto S$ with the operations of Figure~\ref{fig:gameOps}.

\begin{defi}[The Implementation Relation]
\label{def:implementation}
Let $k: S \nto S$ be a nondeterministic function and $\phi: S \gto S$ be a game function. We say that \emph{$k$ implements $\phi$} if $\lift k \subseteq f$, and we denote this by $k \simpl \phi$. The definition is meant to capture the idea that $k$ resolves (in some possible way) the angelic nondeterminism of $\phi$. To put it differently, the function $k$ \emph{chooses} for every start state $u$ an output predicate $k(u) \in \phi(u)$ that the angel can guarantee.
\end{defi}

\begin{lemma}[The Implementation Calculus]
\label{lemma:implementation}
The relation $\simpl$ satisfies the following rules:
\begin{gather*}
1_A \simpl \one_A
\qquad
0_{AB} \simpl \zero_{AB}
\qquad
\AxiomC{$P \subseteq S$}
\AxiomC{$k \simpl \phi$}
\AxiomC{$\ell \simpl \psi$}
\TernaryInfC{$P \lb k,\ell \rb \simpl P \glb \phi,\psi \grb$}
\DisplayProof
\qquad
\AxiomC{$k \simpl \phi$}
\AxiomC{$\ell \simpl \psi$}
\BinaryInfC{$k \kc \ell \simpl \phi \gc \psi$}
\DisplayProof
\\[0.5ex]
\AxiomC{$k \simpl \phi$}
\UnaryInfC{$k \simpl \phi \gang \psi$}
\DisplayProof
\qquad
\AxiomC{$\ell \simpl \psi$}
\UnaryInfC{$\ell \simpl \phi \gang \psi$}
\DisplayProof
\qquad
\AxiomC{$k \simpl \phi$\hspace{-1em}}
\AxiomC{$\ell \simpl \psi$}
\BinaryInfC{$k \plus \ell \simpl \phi \gdem \psi$}
\DisplayProof
\qquad
\AxiomC{$P \subseteq S$}
\AxiomC{$k \simpl \phi$}
\BinaryInfC{$\sWhileDo P k \simpl \gWhileDo P \phi$}
\DisplayProof
\end{gather*}
where $k, \ell: S \nto S$ are nondeterministic functions and $\phi, \psi: S \gto S$ are game functions.
\end{lemma}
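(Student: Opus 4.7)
My plan is to unfold $k \simpl \phi$ as the set inclusion $\lift k \subseteq \phi$ and reduce every rule to two ingredients: (i) Lemma~\ref{lemma:lift}, which lets us push $\lift$ through each outer constructor, and (ii) monotonicity of the corresponding game-function operation from Figure~\ref{fig:gameOps}. Schematically, for a rule with premises $\lift k \subseteq \phi$ and $\lift\ell \subseteq \psi$ and conclusion $F(k,\ell) \simpl G(\phi,\psi)$, Lemma~\ref{lemma:lift} gives $\lift F(k,\ell) = G(\lift k, \lift\ell)$, and monotonicity of $G$ together with the premises yields $G(\lift k, \lift\ell) \subseteq G(\phi,\psi)$, which is what we need. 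So the work splits into (a) noting the lifting identities and (b) checking that each game-function operation is monotonic in each of its arguments.

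The base cases $1_S \simpl \one_S$ and $0_S \simpl \zero_S$ are immediate, since $\lift 1_S = \one_S$ and $\lift 0_S = \zero_S$ by Lemma~\ref{lemma:lift}. The two angelic-choice rules are trivial since $\gang$ is just union: if $\lift k \subseteq \phi$, then $\lift k \subseteq \phi \cup \psi$. For demonic choice we use $\lift(k \plus \ell) = (\lift k) \gdem (\lift\ell) = \lift k \cap \lift\ell$ together with monotonicity of intersection. For the conditional, monotonicity of $P\glb -,- \grb$ is immediate from the definition as a union of restrictions to $P$ and $\compl P$. For composition, I would verify directly from the definition in Figure~\ref{fig:gameOps} that $\phi \subseteq \phi'$ and $\psi \subseteq \psi'$ imply $\phi \gc \psi \subseteq \phi' \gc \psi'$: the witness predicate $Y$ at $u$ is preserved under enlarging $\phi$, and the condition ``$(v,Z) \in \psi$ for every $v \in Y$'' is preserved under enlarging $\psi$.

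The only nontrivial step is the iteration rule. After applying Lemma~\ref{lemma:lift} to rewrite the conclusion as $\gWhileDo P (\lift k) \subseteq \gWhileDo P \phi$, it suffices to show that $\gWhileDo P (-)$ is monotonic: if $\phi \subseteq \phi'$, then $\gWhileDo P \phi \subseteq \gWhileDo P \phi'$. I would prove this by transfinite induction on the approximants $W_\kappa$, $W'_\kappa$ used in the definition of $\gWhileDo P (-)$ in Figure~\ref{fig:gameOps}, establishing $W_\kappa \subseteq W'_\kappa$ for every ordinal $\kappa$. The base case $W_0 = W'_0 = P\glb \zero_S, \one_S\grb$ is immediate. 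At a successor $\kappa+1$, the induction hypothesis $W_\kappa \subseteq W'_\kappa$ combined with $\phi \subseteq \phi'$ and monotonicity of $\gc$ gives $\phi \gc W_\kappa \subseteq \phi' \gc W'_\kappa$, and then monotonicity of the conditional yields $W_{\kappa+1} \subseteq W'_{\kappa+1}$. At a limit ordinal $\lambda$, monotonicity of arbitrary intersection (over a larger family of larger sets) gives $\bigcap_{\kappa<\lambda} W_\kappa \subseteq \bigcap_{\kappa<\lambda} W'_\kappa$. Intersecting over all ordinals gives the desired inclusion.

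The expected main obstacle is the bookkeeping for the iteration rule, specifically making sure the transfinite induction goes through at limit stages and that I have monotonicity of $\gc$ in the correct form. Everything else is a mechanical combination of Lemma~\ref{lemma:lift} with routine monotonicity of the set-theoretic operations defining the game-function algebra.
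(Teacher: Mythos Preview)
Your proposal is correct and follows essentially the same approach as the paper: unfold $\simpl$ as $\lift(\cdot)\subseteq(\cdot)$, invoke Lemma~\ref{lemma:lift} to commute $\lift$ with each constructor, and then use monotonicity of the game-function operations. The only difference is that the paper merely asserts the monotonicity of all operations (including $\gWhileDo P (-)$) without proof, whereas you spell out the transfinite induction on the approximants $W_\kappa$; this extra detail is fine and fills in what the paper leaves implicit.
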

\begin{proof}
First, we note that all the operations on game functions are monotone w.r.t.\ inclusion. That is, if $\phi \subseteq \phi'$ and $\psi \subseteq \psi'$ then we also have:
\begin{align*}
\phi \gc \psi &\subseteq
\phi' \gc \psi'
&
\phi \gang \psi &\subseteq \phi' \gang \psi'
&
\gWhileDo P \phi &\subseteq \gWhileDo P \phi'
\\
P \glb \phi,\psi \grb &\subseteq
P \glb \phi',\psi' \grb
&
\phi \gdem \psi &\subseteq \phi' \gdem \psi'
\end{align*}
Assume now that $k \simpl \phi$ and $\ell \simpl \psi$, i.e., $\lift k \subseteq \phi$ and $\lift\ell \subseteq \psi$. We obtain the inclusions
\begin{gather*}
\begin{aligned}[t]
\lift 1_S &=
\one_S \subseteq \one_S
\\
\lift 0_S &=
\zero_S \subseteq \zero_S
\\
\lift k &\subseteq
\phi \subseteq
\phi \cup \psi =
\phi \gang \psi
\\
\lift\ell &\subseteq
\psi \subseteq
\phi \cup \psi =
\phi \gang \psi
\end{aligned}
\hspace{3em}
\begin{aligned}[t]
\lift(k \kc \ell) &=
(\lift k) \gc (\lift \ell) \subseteq
\phi \gc \psi
\\
\lift (P \lb k,\ell \rb) &=
P \glb \lift k, \lift\ell \grb \subseteq
P \glb \phi,\psi \grb
\\
\lift (\sWhileDo P k) &=
\gWhileDo{P}{(\lift k)} \subseteq
\gWhileDo{P}{\phi}
\\
\lift (k \plus \ell) &=
(\lift k) \gdem (\lift\ell) \subseteq
\phi \gdem \psi
\end{aligned}
\end{gather*}
using the monotonicity properties for game function operations and the fact that the lifting operation commutes with the semantic program operations (Lemma~\ref{lemma:lift}).
\end{proof}

\begin{defi}[Game Interpretation]
\label{def:gameI}
As in the case of nondeterministic program schemes (Definition~\ref{def:nondetI}), an interpretation of the language of while game schemes consists of a nonempty \emph{state space} $S$ and an \emph{interpretation function} $I$. For a program term $f$, its \emph{interpretation} $I(f): S \gto S$ is a game function on $S$. The function $I$ specifies the meaning of every atomic test, and extends to all tests in the obvious way. Moreover, $I$ specifies the meaning $I(a): S \gto S$ of every atomic action. It extends to all game schemes as:
\begin{gather*}
\begin{aligned}
I(\id) &= \one_S
&
I(f;g) &= I(f) \gc I(g)
&
I(f \ang g) &= I(f) \gang I(g)
&
I(p[f,g]) &= I(p) \glb I(f), I(g) \grb
\\
I(\bot) &= \zero_S
&&&
I(f \dem g) &= I(f) \gdem I(g)
&
I(\wh p f) &= \gWhileDo{I(p)}{I(f)}
\end{aligned}
\end{gather*}
We say that the game interpretation $I$ \emph{lifts} the nondeterministic interpretation $R$ if they have the same state space, and additionally:
\begin{enumerate}[label=(\roman*)]
\item
$I(p) = R(p)$ for every atomic test $p$, and
\item
$I(a) = \lift R(a)$ for every atomic program $a$.
\end{enumerate}
We also say that \emph{$I$ is the lifting of $R$}.
\end{defi}

\begin{defi}[Chain Property]
A decreasing chain of predicates is a transfinite sequence $(X_\kappa)_{\kappa \in \Ord}$ with $X_\kappa \supseteq X_\lambda$ for ordinals $\kappa \leq \lambda$. Let $\phi: S \gto S$ be a game function. We say that $\phi$ satisfies the \emph{chain property} if for every state $u \in S$ and every decreasing chain $(Y_\kappa)_\kappa$ of predicates on $S$, $(u,Y_\kappa) \in \phi$ for all $\kappa$ implies that $(u,\tbigcap_\kappa Y_\kappa) \in \phi$.
\end{defi}

\begin{lemma}[Preservation of Chain Property]
\label{lemma:chain}
The following hold:
\begin{enumerate}
\item
Every non-angelic game function satisfies the chain property.
\item
The game functions $\zero_S$ and $\one_S$ satisfy the chain property.
\item
If the game functions $\phi, \psi: S \gto S$ satisfy the chain property, then so do the game functions $P \glb \phi,\psi \grb$, $\phi \gc \psi$, $\phi \gang \psi$, $\phi \gdem \psi$, and $\gWhileDo P \phi$, where $P$ is a predicate on $S$.
\end{enumerate}
\end{lemma}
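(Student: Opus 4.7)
The plan is to handle the parts in order, grouping the straightforward cases and isolating the two that require genuine argument.

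Part~(1) reduces to the observation that $(u,X) \in \lift k$ amounts to $k(u) \subseteq X$, so $k(u) \subseteq Y_\kappa$ for every $\kappa$ forces $k(u) \subseteq \tbigcap_\kappa Y_\kappa$. Part~(2) is immediate: $\zero_S$ contains every pair, and $(u,Y_\kappa) \in \one_S$ says $u \in Y_\kappa$, which survives intersection. Among the cases of part~(3), the conditional splits on whether $u \in P$ and invokes the chain property of whichever of $\phi$, $\psi$ applies. The demonic case uses $\phi \gdem \psi = \phi \cap \psi$ and applies the chain property of $\phi$ and $\psi$ separately. For the angelic case $\phi \gang \psi = \phi \cup \psi$, let $A = \{\kappa : (u,Y_\kappa) \in \phi\}$ and $B = \{\kappa : (u,Y_\kappa) \in \psi\}$. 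Upward closure of $\phi$ together with $Y_\kappa \supseteq Y_\lambda$ for $\kappa \leq \lambda$ makes $A$ downward closed in the ordinal index, and similarly for $B$; downward-closed subclasses of $\Ord$ are initial segments, so $A \cup B = \Ord$ forces one of $A$ or $B$ to equal $\Ord$, and the chain property of the corresponding side completes this case.

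The main obstacle is composition. From $(u,Z_\kappa) \in \phi \gc \psi$ I obtain witnesses $Y_\kappa$ with $(u,Y_\kappa) \in \phi$ and $(v,Z_\kappa) \in \psi$ for every $v \in Y_\kappa$, and I need a single $Y$ that witnesses $(u,\tbigcap_\kappa Z_\kappa) \in \phi \gc \psi$. I would introduce $V_\kappa = \{v \in S : (v,Z_\kappa) \in \psi\}$. The chain $(V_\kappa)$ is decreasing: if $v \in V_\lambda$ and $\kappa \leq \lambda$, then $(v,Z_\lambda) \in \psi$ and $Z_\kappa \supseteq Z_\lambda$, so upward closure of $\psi$ gives $(v,Z_\kappa) \in \psi$. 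By construction $Y_\kappa \subseteq V_\kappa$, so upward closure of $\phi$ yields $(u,V_\kappa) \in \phi$ for every $\kappa$, and the chain property of $\phi$ delivers $(u,Y^\ast) \in \phi$ where $Y^\ast = \tbigcap_\kappa V_\kappa$. For any $v \in Y^\ast$, $(v,Z_\kappa) \in \psi$ at every $\kappa$, so the chain property of $\psi$ gives $(v,\tbigcap_\kappa Z_\kappa) \in \psi$. Thus $Y^\ast$ is the required witness.

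The while case is then by transfinite induction on $\kappa$ that each $W_\kappa$ in the defining chain of $\gWhileDo P \phi$ enjoys the chain property: the successor step combines the composition and conditional cases already treated, while the limit step uses the easy observation that an arbitrary intersection of game functions with the chain property again has it (apply the chain property pointwise to each factor). Intersecting over all $\kappa$ yields that $\gWhileDo P \phi = \tbigcap_\kappa W_\kappa$ satisfies the chain property as well.
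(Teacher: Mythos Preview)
Your proof is correct and follows the same overall decomposition as the paper. The composition argument is essentially identical: the paper also builds the maximal intermediate sets and intersects them, though it presents this as $Y_\kappa = \tbigcup\{Y : (u,Y)\in\phi \text{ and } Y \subseteq V_\kappa\}$, which by upward closure of $\phi$ coincides with your $V_\kappa$. The one genuinely different step is the angelic case. The paper argues that one of the two classes $O(\phi), O(\psi)$ is cofinal in $\Ord$, then manufactures a new decreasing chain $(\hat Y_\lambda)_{\lambda\in\Ord}$ indexed by all ordinals by rounding each $\lambda$ up to the next index in $O(\phi)$, applies the chain property of $\phi$ to this auxiliary chain, and finally uses cofinality to identify $\tbigcap_\lambda \hat Y_\lambda$ with $\tbigcap_\kappa Y_\kappa$. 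Your observation that $A$ and $B$ are \emph{downward} closed (by upward closure of $\phi,\psi$ combined with $Y_\kappa \supseteq Y_\lambda$ for $\kappa\leq\lambda$) short-circuits all of this: two proper initial segments cannot union to $\Ord$, so one of $A,B$ is already all of $\Ord$ and the chain property applies directly to the original chain. This is strictly simpler and avoids the auxiliary construction entirely.
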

\begin{proof}
The most interesting parts of the proof are showing that the operations of angelic choice and composition preserve the chain property. We omit the rest of the proof, since the reader can easily reconstruct it.

For the case $\phi \gang \psi$ of angelic choice, assume that $(u,Y_\kappa) \in \phi \gang \psi$ for every ordinal $\kappa$. We recall the definition $\phi \gang \psi = \phi \cup \psi$, which means that $(u,Y_\kappa) \in \phi$ or $(u,Y_\kappa) \in \psi$ for all $\kappa$. Define the classes $O(\phi)$ and $O(\psi)$ of ordinals as follows:
\begin{align*}
O(\phi) &= \{ \lambda \in \Ord \mid (u,Y_\lambda) \in \phi \}
&
O(\psi) &= \{ \mu \in \Ord \mid (u,Y_\mu) \in \psi \}
\end{align*}
Clearly, the equality $O(\phi) \cup O(\psi) = \Ord$ holds. This implies that at least one of the classes $O(\phi)$, $O(\psi)$ has no upper bound. By symmetry, we only consider the case where $O(\phi)$ has no upper bound, that is: for every ordinal $\kappa$ there is some $\lambda \geq \kappa$ with $\lambda \in O(\phi)$. We extend the subsequence $(Y_\lambda)_{\lambda \in O(\phi)}$ into a decreasing chain $(\hat Y_\lambda)_{\lambda \in \Ord}$ as:
\[
  \hat Y_\lambda = Y_{\lambda'},
  \ \text{where $\lambda' = \text{least} \{ \kappa \in \Ord \mid \text{$\kappa \geq \lambda$ and $\kappa \in O(\phi)$} \}$}.
\]
In particular, if $\lambda \in O(\phi)$ then $\hat Y_\lambda = Y_\lambda$. It is straightforward to verify that $(\hat Y_\lambda)_{\lambda \in \Ord}$ is a decreasing chain with $(u,\hat Y_\lambda) \in \phi$ for every $\lambda \in \Ord$. Since $\phi$ satisfies the chain property, we get that $(u,\tbigcap_{\lambda \in \Ord} \hat Y_\lambda) \in \phi$. Finally, we observe that
\[
  \tbigcap_{\kappa \in \Ord} Y_\kappa =
  \tbigcap_{\lambda \in O(f)} Y_\lambda =
  \tbigcap_{\lambda \in \Ord} \hat Y_\lambda.
\]
This gives us the desired $(u,\tbigcap_{\kappa \in \Ord} Y_\kappa) \in \phi \subseteq \phi \cup \psi$. So, $\phi \gang \psi$ satisfies the chain property.

For the case $\phi \gc \psi$ of composition, we consider the decreasing chain $(Z_\kappa)_\kappa$ and we assume that $(u,Z_\kappa) \in (\phi \gc \psi)$ for all $\kappa$. For every ordinal $\kappa$, define the collection of predicates
\[
  \mathcal Y_\kappa = \{
    Y \subseteq S \mid
    \text{$(u,Y) \in \phi$ and $(v,Z_\kappa) \in \psi$ for all $v \in Y$}.
  \}
\]
The assumption $(u,Z_\kappa) \in (\phi \gc \psi)$ means that the collection $\mathcal Y_\kappa$ is nonempty. We then define the predicate $Y_\kappa = \tbigcup \mathcal Y_\kappa$ and we observe that $Y_\kappa \in \mathcal Y_\kappa$, that is:
\[
  \text{$(u,Y_\kappa) \in \phi$ \qquad and \qquad $(v,Z_\kappa) \in \psi$ for all $v \in Y_\kappa$}.
\]
Moreover, the implications $\kappa \leq \lambda \Imp Z_\kappa \supseteq Z_\lambda \Imp \mathcal Y_\kappa \supseteq \mathcal Y_\lambda \Imp Y_\kappa \supseteq Y_\lambda$ hold. This means that the sequence $(Y_\kappa)_\kappa$ is a decreasing chain. The third containment is justified as follows:
\begin{align*}
Y \in \mathcal Y_\lambda &\implies
\text{$(u,Y) \in \phi$ and $(v,Z_\lambda) \in \psi$ for all $v \in Y$}
\\ &\implies
\text{$(u,Y) \in \phi$ and $(v,Z_\kappa) \in \psi$ for all $v \in Y$}
\\ &\implies
Y \in \mathcal Y_\kappa.
\end{align*}
Since $\phi$ satisfies the chain property, we obtain that $(u,\tbigcap_\kappa Y_\kappa) \in \phi$. Let us consider now an arbitrary element $v$ of $\tbigcap_\kappa Y_\kappa$. We get that $v \in Y_\kappa$ and hence $(v,Z_\kappa) \in \psi$ for every ordinal $\kappa$. But $\psi$ also satisfies the chain property, which gives us that $(v,\tbigcap_\kappa Z_\kappa) \in \psi$. We know that:
\[
  \text{$(u,\tbigcap_\kappa Y_\kappa) \in \phi$ \qquad and \qquad $(v,\tbigcap_\kappa Z_\kappa) \in \psi$ for all $v \in \tbigcap_\kappa Y_\kappa$}.
\]
This means that $(u,\tbigcap_\kappa Z_\kappa) \in (\phi \gc \psi)$. We conclude that $\phi \gc \psi$ satisfies the chain property.
\end{proof}

\begin{theorem}[Full Abstraction]
\label{thm:semantics}
Let $I$ be an interpretation of atomic tests as unary predicates on a state space $S$ and of atomic actions as game functions $S \gto S$ that satisfy the chain property. Then, for every while game scheme $f$, state $u \in S$ and predicate $Y \subseteq S$ we have that: $(u,Y) \in I(f)$ iff Player $\exists$ (the angel) has a winning strategy from the vertex $(u,f)$ in the safety game $G_I(f,\compl Y)$ (recall Definition~\ref{def:opModel}).
\end{theorem}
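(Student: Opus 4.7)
I would proceed by structural induction on the program term $f$, establishing both directions of the biconditional simultaneously, while using memoryless determinacy (Theorem~\ref{thm:determined}) throughout to freely convert between ``angel has a winning strategy from $(u,f)$'' and ``$(u,f)$ lies in the angel's winning region''.

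The base cases are direct. For $f = \id$, the vertex $(u,\id)$ is terminal, so angel wins in $G_I(\id,\compl Y)$ iff $u \in Y$, matching $(u,Y) \in \one_S$. For an atomic action $a$, vertex $(u,a)$ is an $\exists$-vertex whose successors $(X,\id)$ range over $\{X : (u,X) \in I(a)\}$, and each $(X,\id)$ is a $\forall$-vertex with successors $(v,\id)$ for $v \in X$; thus angel wins iff some $(u,X) \in I(a)$ has $X \subseteq Y$, which by upward closure of $I(a)$ is equivalent to $(u,Y) \in I(a)$.

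For the non-loop compositional cases I would first prove an auxiliary sequential composition lemma: for all terms $g,h$, states $u$, and predicates $Y$, angel wins $G_I(g;h,\compl Y)$ from $(u,g;h)$ iff there is a predicate $Z \subseteq S$ such that angel wins $G_I(g,\compl Z)$ from $(u,g)$ and, for every $v \in Z$, angel wins $G_I(h,\compl Y)$ from $(v,h)$. The ``if'' direction is by concatenation of winning strategies. The ``only if'' direction uses memoryless determinacy: given a memoryless winning strategy $\sigma$ in the composed game, take $Z$ to be the set of states $v$ at which a play conforming to $\sigma$ passes through vertex $(v,\id;h) \to (v,h)$. Combined with the inductive hypothesis on $g$ and $h$, this matches the denotational definition of $\gc$ exactly, since $(u,Y) \in I(g) \gc I(h)$ asserts the existence of precisely such a $Z$. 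The cases $g \ang h$, $g \dem h$, and $p[g,h]$ are then immediate from the operational transitions at those vertices and the definitions of $\gang$, $\gdem$, and $\glb -,- \grb$.

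The main obstacle is the while loop $\wh p f$, where both semantics involve a transfinite construction: denotationally the greatest fixpoint $\bigcap_\kappa W_\kappa$, and operationally the demon's winning region $\bigcup_\kappa W_\forall^\kappa$ (Definition~\ref{def:winning}). I would prove, by transfinite induction on $\kappa$, the alignment $(u,Y) \in W_\kappa$ iff $(u,\wh p f) \notin W_\forall^\kappa$ in $G_I(\wh p f,\compl Y)$. The successor step unfolds $W_{\kappa+1} = I(p) \glb I(f) \gc W_\kappa, \one_S \grb$, applies the operational transitions $(u,\wh p f) \to (u, f@\wh p f)$ when $u \in I(p)$ and $(u,\wh p f) \to (u,\id)$ otherwise, and invokes the sequential composition lemma together with the structural IH for $f$. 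The limit step is where the chain property of Lemma~\ref{lemma:chain} is essential: by the structural induction, $I(f)$ satisfies the chain property, and this is exactly what is needed to intersect the decreasing family of witnesses at stages $\kappa < \lambda$ into a witness at stage $\lambda$, so that $(u,Y) \in W_\lambda = \bigcap_{\kappa<\lambda} W_\kappa$ still corresponds to angel winning. Assembling the alignment over all ordinals yields $(u,Y) \in I(\wh p f)$ iff the angel avoids every $W_\forall^\kappa$ iff the angel lies in $W_\exists$, which by memoryless determinacy gives the winning strategy and completes the proof.
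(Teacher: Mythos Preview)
Your overall plan and the non-loop cases match the paper closely, but the while-loop argument has a genuine gap. The alignment $(u,Y)\in W_\kappa \iff (u,\wh p f)\notin W_\forall^\kappa$ is false as stated: the denotational index $\kappa$ counts iterations of the loop body, whereas the operational index in Definition~\ref{def:winning} counts single game transitions. Already at $\kappa=0$ it fails, since $W_\forall^0 = \compl Y\times\{\id\}$ never contains $(u,\wh p f)$, yet $(u,Y)\notin W_0$ whenever $u\notin I(p)$ and $u\notin Y$. Worse, if a stage-by-stage equivalence \emph{did} hold, the limit step would be trivial (both sides simply intersect over $\kappa<\lambda$), and no chain property would be needed---so your placement of the chain property at the limit step cannot be right either.

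The paper avoids any such alignment. It takes the limit set $X=\{u:(u,Y)\in I(\wh p f)\}=\bigcap_\kappa X_\kappa$ directly and uses the chain property of $I(f)$ (via Lemma~\ref{lemma:chain}) to show that $X$ is an inductive invariant: from $(u,X_\kappa)\in I(f)$ for every $\kappa$ one concludes $(u,X)\in I(f)$ whenever $u\in I(p)\cap X$. This is where the chain property actually enters. With the invariant in hand, the forward direction builds the angel's strategy by replaying, at each visit to a vertex $(u,\wh p f)$ with $u\in X\cap I(p)$, the winning strategy for $G_I(f,\compl X)$ supplied by the structural IH on $f$. For the converse, the paper lets $U$ be the set of $u$ with $(u,\wh p f)$ in the angel's winning region, uses a memoryless witness and the structural IH to get $(u,U)\in I(f)$ for $u\in U\cap I(p)$, and then shows $U\subseteq X_\kappa$ for every $\kappa$ by transfinite induction, hence $U\subseteq X$. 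Your sequential-composition lemma is essentially the paper's treatment of $e;f$, so that part is fine; the fix is to replace the erroneous stage alignment for loops with this invariant-based argument.
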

\begin{proof}
The proof is by induction on the structure of $f$.

First, we consider the case of the atomic action $a$. Recall that we have $C(a) = \{ a,\id \}$. The start vertex for the game is $(u,a)$. The angel has a winning strategy from $(u,a)$ iff there exists some predicate $X$ such that $(u,X) \in I(a)$ and $X \subseteq Y$.
\[
  (u,a) \to (X,\id) \to (v,\id),\ \text{where $v \in X$}
\]
For the case of the skip program $\id$, we have that $C(\id) = \{ \id \}$. The start vertex for the game is $(u,\id)$, and it is also a terminal vertex. So, the angel has a winning strategy in the game $G_I(\id,\compl Y)$ iff $u \in Y$ iff $(u,Y) \in I(\id) = \one_S$.

We handle now the case of the conditional $p[f,g]$. We have that $C(p[f,g]) = \{ p[f,g] \} \cup C(f) \cup C(g)$. Consider a pair $(u,Y)$, where $u \in I(p)$. The case where $u \in I(\neg p)$ is analogous, and we omit it. Notice that there exists a unique transition $(u,p[f,g]) \to (u,f)$. This means that after the transition is taken, any play in $G_I(p[f,g],\compl Y)$ is the same as a play in the game $G_I(f,\compl Y)$. So, we obtain the equivalences:
\begin{align*}
&(u,X) \in I(p[f,g]) \iff (u,X) \in I(f) \iff
\\
&\text{The angel has a winning strategy from $(u,f)$ in $G_I(f,\compl Y)$} \iff
\\
&\text{The angel has a winning strategy from $(u,p[f,g])$ in $G_I(p[f,g],\compl Y)$}.
\end{align*}
The cases $f \ang g$ and $f \dem g$ are handled using similar arguments to the ones we used for the conditional $p[f,g]$, and we therefore omit them.

We will prove now the claim for the while loop $\wh p f$. Recall that $C(\wh p f) = \{ \wh p f,\id \} \cup C(f)@\wh p f$ and $I(\wh p f) = \bigcap_{\kappa \in \Ord} W_\kappa$, where the transfinite sequence $W_\kappa$ is given by
\begin{align*}
W_0 &\triangleq
I(p) \glb \zero_S, \one_S \grb
&
W_{\kappa+1} &\triangleq
I(p) \glb I(f) \gc W_\kappa, \one_S \grb
&
W_\lambda &\triangleq \tbigcap_{\kappa < \lambda} W_\kappa,\   \text{limit ordinal $\lambda$}
\end{align*}
Consider the predicate $Y \subseteq S$, and define the transfinite sequence $(X_\kappa)_{\kappa \in \Ord}$ as follows:
\begin{align*}
X_0 &=
I(p) \cup (\compl I(p) \cap Y)
\\
X_{\kappa+1} &=
\{ u \in S \mid \text{$u \in I(p)$ and $(u,X_\kappa) \in I(f)$} \}
\cup
(\compl I(p) \cap Y)
\\
X_\lambda &= \tbigcap_{\kappa<\lambda} X_\kappa,
\ \text{for limit ordinal $\lambda$}
\end{align*}
The sequence $(X_\kappa)_\kappa$ can be defined equivalently in terms of the approximants $W_\kappa$, as the claim below states. We also put $X = \tbigcap_{\kappa \in \Ord} X_\kappa$. A transfinite induction on $\kappa$ establishes:

\begin{claim*}
$X_\kappa = \{ u \in S \mid (u,Y) \in W_\kappa \}$ for every ordinal $\kappa$. \qed
\end{claim*}

\noindent
The above claim implies in particular that
\[
  X = \{ u \in S \mid (u,Y) \in I(\wh p f) \}.
\]
Moreover, we see below that $X$ is an ``inductive invariant'' for the while loop $\wh p f$.

\begin{claim*}
If $u \in I(p)$ and $u \in X$, then $(u,X)$ is in $I(f)$.
\end{claim*}
\begin{proof}
Suppose that $u \in I(p)$ and $u \in X$, which implies that $u \in X_{\kappa+1}$ for every $\kappa$. From the inductive definition of $X_\kappa$, we obtain that $(u,X_\kappa) \in I(f)$ for every $\kappa$. Since every interpretation $I(a)$ for atomic action $a$ satisfies the chain property, we obtain from Lemma~\ref{lemma:chain} that $I(f)$ satisfies the chain property. It follows that $(u,X) \in I(f)$.
\end{proof}

\noindent
Let us consider now the game $G_I(\wh p f,\compl Y)$.
\begin{itemize}[label=$-$]
\item
Consider a state $u \in I(p)$ with $u \in X$. The previous claim says that $(u,X) \in I(f)$, and hence the I.H.\ gives us that the angel has a winning strategy $\sigma_u$ in the game $G_I(f,\compl X)$. We define the $\exists$-strategy $\sigma$ in the game $G_I(\wh p f,\compl Y)$ as follows: every time a vertex $(u,\wh p f)$ with $u \in I(p)$ is encountered, start playing according to $\sigma_u$. Notice that we have the transition $(u,\wh p f) \to (u,f@\wh p f)$, which means that $\sigma$ simulates $\sigma_u$ on $G_I(f,\compl X)$.

It follows that when the angel plays according to $\sigma$ in the game $G_I(\wh p f,\compl Y)$ with start vertex $(u,\wh p f)$ where $u \in X$, the play will never hit an error vertex in $\compl Y \times \{\id\}$. In particular, if $(u,Y) \in I(\wh p f)$ then $u \in X$ and hence the angel has a winning strategy from $(u,\wh p f)$ in the game $G_I(\wh p f,\compl Y)$.
\item
Let $U$ be the set of states $u \in S$ for which the angel has a winning strategy from $(u,\wh p f)$ in the game $G_I(\wh p f,\compl Y)$. Let $\sigma$ be the (w.l.o.g.\ memoryless, see Theorem~\ref{thm:determined}) strategy of Player $\exists$ that witnesses his winning region in the game $G_I(\wh p f,\compl Y)$.

Consider a state $u \in I(p)$ with $u \in U$. If the angel plays according to $\sigma$ in the game $G_I(f,\compl U)$, then he wins, because $\sigma$ keeps the play within the winning region. The I.H.\ then says that $(u,U) \in I(f)$.\enlargethispage{\baselineskip}
\begin{claim}
$U \subseteq X$.
\end{claim}
\begin{proof}
It suffices to show that $U \subseteq X_\kappa$ for every ordinal $\kappa$. For the base case $\kappa = 0$, the claim $U \subseteq X_0 = I(p) \cup (\compl I(p) \cap Y)$ is obvious. For successor ordinals:
\begin{align*}
X_{\kappa+1} &=
\{ u \in S \mid \text{$u \in I(p)$ and $(u,X_\kappa) \in I(f)$} \}
\cup
(\compl I(p) \cap Y)
\\
&\supseteq
\{ u \in S \mid \text{$u \in I(p)$ and $(u,U) \in I(f)$} \}
\cup
(\compl I(p) \cap Y)
\\
&\supseteq
\{ u \in S \mid \text{$u \in I(p)$ and $u \in U$} \}
\cup
(\compl I(p) \cap Y),
\end{align*}
which is equal to $U$. The case of limit ordinals is easy.
\end{proof}

\noindent
Suppose now that $(u,\wh p f)$ is in the winning region of the angel in the game $G_I(\wh p f,\compl Y)$. It follows that $u \in U$ and hence $u \in X$. We thus conclude that $(u,Y)$ is in $I(\wh p f)$.
\end{itemize}
This completes the proof for the case of the while loop $\wh p f$.

Finally, we have to deal with the case $e;f$ of sequential composition. Recall the definitions $C(e;f) = C(e)@f \cup C(f)$ and $I(e;f) = I(e) \gc I(f)$.
\begin{itemize}[label=$-$]
\item
Suppose that $(u,Z) \in I(e;f)$. There exists $Y \subseteq S$ with $(u,Y) \in I(e)$ and $(v,Z) \in I(f)$ for every $v \in Y$. The I.H.\ says that there exists a winning $\exists$-strategy $\sigma$ for the game $G_I(e,\compl Y)$ started at vertex $(u,e)$. Moreover, for every $v \in Y$, there exists a winning $\exists$-strategy $\tau_v$ for the game $G_I(f,\compl Z)$ started at vertex $(v,f)$. Now, we define the strategy $\rho$ for the game $G_I(e;f,\compl Y)$ as follows: start playing according to $\sigma$, and as soon as you encounter a vertex $(v,f)$ start playing according to $\tau_v$. The $\exists$-strategy $\rho$ is winning for the angel in the game $G_I(e;f,\compl Z)$ when started at $(u,e;f)$.
\item
Suppose now that the angel has a (w.l.o.g.\ memoryless, see Theorem~\ref{thm:determined}) winning strategy $\rho$ from the vertex $(u,e;f)$ in the game $G_I(e;f,\compl Z)$. Let
\begin{align*}
Y = \{ v \in S \mid {}
&\text{the vertex $(v,\id;f)$ appears in some $\rho$-play starting from $(u,e;f)$} \}.
\end{align*}
Then, the angel has a winning strategy from $(u,e)$ in the game $G_I(e,\compl Y)$. Moreover, for every $v \in Y$, the angel has a winning strategy from $(v,f)$ in the game $G_I(f,\compl Z)$. From the I.H., it follows that $(u,Y) \in I(e)$. Moreover, for every $v \in Y$, we obtain that $(v,Z) \in I(f)$. So, $(u,Z) \in I(e;f)$.
\end{itemize}
This concludes the argument for the case of composition, and the proof is thus complete.
\end{proof}

\section{A Hoare Calculus for While Game Schemes}
\label{sec:hoare}

In this section, we present formulas that are used to specify programs. The basic formulas are Hoare assertions of the form $\hoare{p}f{q}$, and we also consider assertions under certain hypotheses $\Phi, \Psi$ of a simple form. The latter formulas are called Hoare implications and are of the form $\Phi,\Psi \Imp \hoare{p}f{q}$. We will then continue to present our first axiomatization, with which we derive valid Hoare implications.

\begin{defi}[Tests and Entailment]
Let $I$ be an interpretation of the atomic tests, which extends to all tests in the obvious way. For a test $p$ and a state $u \in S$, we write $I,u \models p$ when $u \in I(p)$. We read this as: ``the state $u$ satisfies $p$ (under $I$)''. When $I,u \models p$ for every state $u \in S$, we say that $I$ \emph{satisfies} $p$, and we write $I \models p$. For a set $\Phi$ of tests, the interpretation $I$ \emph{satisfies} $\Phi$ if it satisfies every test in $\Phi$. We then write $I \models \Phi$. Finally, we say that $\Phi$ \emph{entails} $p$, denoted $\Phi \models p$, if $I \models \Phi$ implies $I \models p$ for every $I$.
\end{defi}

\begin{defi}[Hoare Assertions]
\label{def:simple}
An expression $\hoare p f q$, where $p$ and $q$ are tests and $f$ is a program term, is called a \emph{Hoare assertion}. The test $p$ is called the \emph{precondition} and the test $q$ is called the \emph{postcondition} of the assertion. Informally, the formula $\hoare p f q$ says that when the program $f$ starts at a state satisfying the predicate $p$, then the angel has a strategy so that whatever the demon does, the final state (upon termination) satisfies the predicate $q$. The Hoare assertion $\hoare p a q$, where $a$ is an atomic program, is called a \emph{simple Hoare assertion}. More formally, we say that the interpretation $I$ \emph{satisfies} $\hoare p f q$ when
\[
  \text{$I,u \models p$ implies that $(u,I(q)) \in I(f)$}
\]
for every state $u \in S$. We then write $I \models \hoare p f q$.
\end{defi}

\begin{defi}[Simple Hoare Implications \& Weak Hoare Theory]
\label{def:implication}
Let $\Phi$ be a finite set of tests, and $\Psi$ be a finite set of simple Hoare assertions. We call the expression
\[
  \Phi,\Psi \Imp \hoare{p}f{q}
\]
a \emph{simple Hoare implication}. The tests in $\Phi$ and the simple assertions in $\Psi$ are the \emph{hypotheses} of the implication, and the Hoare assertion $\hoare p f q$ is the \emph{conclusion}. We use the qualifier \emph{simple} for implications of  the form $\Phi,\Psi \Imp \hoare p f q$, because the hypotheses $\Psi$ involve only simple Hoare assertions (instead of general Hoare assertions for arbitrary programs).

Let $I$ be an interpretation of tests and actions. We say that $I$ \emph{satisfies} the implication $\Phi,\Psi \Imp \hoare p f q$, which we denote by $I \models \Phi,\Psi \Imp \hoare p f q$, when the following holds: If the interpretation $I$ satisfies every test in $\Phi$ and every assertion in $\Psi$, then $I$ satisfies the assertion $\hoare p f q$. An implication $\Phi,\Psi \Imp \hoare{p}f{q}$ is \emph{valid}, denoted $\Phi,\Psi \models \hoare p f q$, if every interpretation satisfies it. The set of all valid Hoare implications forms the \emph{weak Hoare theory} of while game schemes.
\end{defi}

\begin{defi}[Boolean Atoms \& $\Phi$-Consistency]
Suppose that we have fixed a finite set of atomic tests. For an atomic test $p$, the expressions $p$ and $\neg p$ are called \emph{literals} for $p$ (\emph{positive} and \emph{negative} respectively). Fix an enumeration $p_1, p_2, \ldots, p_k$ of the atomic tests. A \emph{Boolean atom} (or simply \emph{atom}) is an expression $\ell_1 \ell_2 \cdots \ell_k$, where every $\ell_i$ is a literal for $p_i$. We use lowercase letters $\alpha, \beta, \gamma, \ldots$ from the beginning of the Greek alphabet to range over atoms. An atom is essentially a conjunction of literals, and it can also be thought of as a propositional truth assignment. We write $\alpha \leq p$ to mean that the atom $\alpha$ satisfies the test $p$. We denote by $\At$ the set of all atoms.

Assume that $\Phi$ is a finite set of tests. We say that an atom $\alpha$ is \emph{$\Phi$-consistent} if $\alpha \leq p$ for every test $p$ in $\Phi$. We write $\AtC$ for the set of all $\Phi$-consistent atoms.
\end{defi}

\begin{defi}[The Free Test Interpretation]
\label{def:freeTestI}
Let $\Phi$ be a finite set of tests. We define the interpretation $I_\Phi$ on tests, which is called the \emph{free test interpretation} w.r.t.\ $\Phi$. The state space is the set $\AtC$ of $\Phi$-consistent atoms, and every test is interpreted as a unary predicate on $\AtC$. For an atomic test $p$, define its interpretation
\[
  I_\Phi(p) \triangleq
  \{ \alpha \in \AtC \mid \alpha \leq p \}
\]
to be the set of $\Phi$-consistent atoms that satisfy $p$. In fact, an easy induction on the structure of tests proves that for every (atomic or composite) test $p$, $I_\Phi(p)$ is equal to the set of $\Phi$-consistent atoms that satisfy $p$.
\end{defi}

\begin{note}[Complete Boolean Calculus]
\label{note:bool}
We assume that we have a complete Boolean calculus, with which we derive judgments $\Phi \vdash p$, where $\Phi$ is a finite set of tests and $p$ is a test. This means that the statements
\begin{align*}
\Phi \models p
&&
I_\Phi \models p
&&
I_\Phi(p) = \AtC
&&
\Phi \vdash p
\end{align*}
%$\Phi \models p$, $I_\Phi \models p$, $I_\Phi(p) = \AtC$, and $\Phi \vdash p$
are all equivalent. From this we also obtain that $I_\Phi(p) \subseteq I_\Phi(q)$ iff $\Phi \vdash p \to q$.
\end{note}

We propose now a Hoare-style calculus (Figure~\ref{fig:angHL}), which is used for deriving simple Hoare implications that involve while game schemes. As we will show, the calculus of Figure~\ref{fig:angHL} is sound and complete for the weak Hoare theory of while game schemes. Establishing soundness is a relatively straightforward result. The most interesting part is the soundness of the ($\rLoop$) rule for while loops. The observation is that the loop invariant defines a ``safe region'' of the game, and the angel has a strategy to keep a play within this region.

\begin{figure}[t]
\centering
$\begin{gathered}
\AxiomC{$\hoare p a q$ in $\Psi$}
\RightLabel{($\rHyp$)}
\UnaryInfC{$\Phi,\Psi \vdash \hoare{p}a{q}$}
\DisplayProof
\qquad
\AxiomC{\phantom{$\{\Psi$}}
\RightLabel{($\rSkip$)}
\UnaryInfC{$\Phi,\Psi \vdash
\hoare{p}\id{p}$}
\DisplayProof
\qquad
\AxiomC{\phantom{$\{\Psi$}}
\RightLabel{($\rDvrg$)}
\UnaryInfC{$\Phi,\Psi \vdash
\hoare{p}\bot{q}$}
\DisplayProof
\\[1ex]
\AxiomC{$\begin{aligned}
  \Phi,\Psi &\vdash \hoare{p}f{q}
  \\[-0.5ex]
  \Phi,\Psi &\vdash \hoare{q}g{r}
\end{aligned}$}
\RightLabel{($\rSeq$)}
\UnaryInfC{$\Phi,\Psi \vdash \hoare{p}{f;g}{r}$}
\DisplayProof
\qquad
\AxiomC{$\begin{aligned}
  \Phi,\Psi &\vdash \hoare{q \land p}f{r}
  \\[-0.5ex]
  \Phi,\Psi &\vdash \hoare{q \land \neg p}g{r}
\end{aligned}$}
\RightLabel{($\rCond$)}
\UnaryInfC{$\Phi,\Psi \vdash \hoare{q}{\ifThenElse p f g}{r}$}
\DisplayProof
\\[1ex]
\AxiomC{$\Phi,\Psi \vdash \hoare{r \land p}f{r}$}
\RightLabel{($\rLoop$)}
\UnaryInfC{$\Phi,\Psi \vdash \hoare{r}{\whileDo p f}{r \land \neg p}$}
\DisplayProof
\\[1ex]
\AxiomC{$\Phi,\Psi \vdash \hoare{p}{f_i}{q}$}
\RightLabel{($\rAng_i$)}
\UnaryInfC{$\Phi,\Psi \vdash \hoare{p}{f_1 \ang f_2}{q}$}
\DisplayProof
\qquad
\AxiomC{$\Phi,\Psi \vdash \hoare{p}f{q}$}
\AxiomC{$\Phi,\Psi \vdash \hoare{p}g{q}$}
\RightLabel{($\rDem$)}
\BinaryInfC{$\Phi,\Psi \vdash \hoare{p}{f \dem g}{q}$}
\DisplayProof
\\[1ex]
\AxiomC{$\Phi \vdash p' \to p$}
\AxiomC{$\Phi,\Psi \vdash \hoare{p}f{q}$}
\AxiomC{$\Phi \vdash q \to q'$}
\RightLabel{($\rWeak$)}
\TernaryInfC{$\Phi,\Psi \vdash \hoare{p'}f{q'}$}
\DisplayProof
\\[1ex]
\AxiomC{$\Phi,\Psi \vdash \hoare{p_1}f{q}$}
\AxiomC{$\Phi,\Psi \vdash \hoare{p_2}f{q}$}
\RightLabel{($\rJoin$)}
\BinaryInfC{$\Phi,\Psi \vdash \hoare{p_1 \lor p_2}f{q}$}
\DisplayProof
\qquad
\begin{aligned}
\Phi,\Psi &\vdash \hoare{\false}f{q}
\quad(\rJoin_0)
\\
\Phi,\Psi &\vdash \hoare{p}f{\true}
\quad(\rMeet_0)
\end{aligned}
\end{gathered}$
\caption{\emph{Game Hoare Logic}: A sound and complete Hoare-style calculus for while program schemes with angelic and demonic nondeterministic choice.}
\label{fig:angHL}
\end{figure}

\begin{observation}[Variant Rule for Demonic Choice]
We can have a slightly more flexible form of the rule for demonic choice. The following rule is admissible:
\[
  \AxiomC{$\Phi,\Psi \vdash \hoare{p}f{q}$}
  \AxiomC{$\Phi,\Psi \vdash \hoare{p}g{r}$}
  \RightLabel{($\rDem'$).}
  \BinaryInfC{$\Phi,\Psi \vdash \hoare{p}{f \dem g}{q \lor r}$}
  \DisplayProof
\]
The proof that $(\rDem')$ is admissible is straightforward:
\[
  \AxiomC{$\Phi,\Psi \vdash \hoare{p}f{q}$}
  \AxiomC{$\Phi \vdash q \to q \lor r$}
  \RightLabel{($\rWeak$)}
  \BinaryInfC{$\Phi,\Psi \vdash \hoare{p}f{q \lor r}$}
  \AxiomC{$\Phi,\Psi \vdash \hoare{p}g{r}$}
  \AxiomC{$\Phi \vdash r \to q \lor r$}
  \RightLabel{($\rWeak$)}
  \BinaryInfC{$\Phi,\Psi \vdash \hoare{p}g{q \lor r}$}
  \RightLabel{($\rDem$).}
  \BinaryInfC{$\Phi,\Psi \vdash \hoare{p}{f \dem g}{q \lor r}$}
  \DisplayProof
\]
Notice the similarity of the rule $(\rDem')$ with the definition of the semantic demonic choice operation $\gdem$ in Figure~\ref{fig:gameOps}.
\end{observation}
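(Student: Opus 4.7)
The plan is to show that $(\rDem')$ is admissible by reducing it to the existing rule $(\rDem)$ via two applications of the weakening rule $(\rWeak)$. The key observation is that $(\rDem)$ requires both premises to share the same postcondition, whereas $(\rDem')$ allows them to differ; the natural bridge is to strengthen both postconditions to their disjunction $q \lor r$, after which $(\rDem)$ applies directly.

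Concretely, I would proceed in three steps. First, from the assumed derivation of $\Phi,\Psi \vdash \hoare{p}f{q}$ together with the Boolean tautology $\Phi \vdash q \to q \lor r$, apply $(\rWeak)$ with $p' = p$ and $q' = q \lor r$ to obtain $\Phi,\Psi \vdash \hoare{p}f{q \lor r}$. Second, symmetrically, combine $\Phi,\Psi \vdash \hoare{p}g{r}$ with $\Phi \vdash r \to q \lor r$ via $(\rWeak)$ to derive $\Phi,\Psi \vdash \hoare{p}g{q \lor r}$. Third, apply the original $(\rDem)$ rule to these two Hoare assertions---which now share both precondition $p$ and postcondition $q \lor r$---to conclude $\Phi,\Psi \vdash \hoare{p}{f \dem g}{q \lor r}$.

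There is essentially no obstacle: the argument is a purely syntactic derivation-tree construction of constant depth, requiring neither semantic reasoning nor induction on program structure. The only prerequisite is the availability of the propositional tautologies $q \to q \lor r$ and $r \to q \lor r$, which are provable in the complete Boolean calculus (recorded in the Note preceding the calculus) for any finite hypothesis set $\Phi$. It is worth noting that the resulting derivation mirrors the denotational definition of $\gdem$ in Figure~\ref{fig:gameOps}, where $\phi \gdem \psi$ lets the demon choose between the output sets guaranteed by $\phi$ and $\psi$, so the best joint guarantee is indeed the union.
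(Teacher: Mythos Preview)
Your proposal is correct and follows exactly the same approach as the paper: weaken each premise's postcondition to $q \lor r$ via $(\rWeak)$ using the Boolean tautologies $q \to q \lor r$ and $r \to q \lor r$, then apply $(\rDem)$. Your additional remark about the correspondence with the semantic operation $\gdem$ also matches the paper's closing observation.
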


\begin{observation}[Weakening The Trivial Rules]
\label{obs:trivialRules}
In the Hoare-style calculus of Figure~\ref{fig:angHL} we included two ``trivial'' axioms:
\begin{gather*}
\AxiomC{}
\RightLabel{($\rJoin_0$)}
\UnaryInfC{$\Phi,\Psi \vdash \hoare{\false}f{q}$}
\DisplayProof
\qquad
\AxiomC{}
\RightLabel{($\rMeet_0$)}
\UnaryInfC{$\Phi,\Psi \vdash \hoare{p}f{\true}$}
\DisplayProof
\end{gather*}
We claim that they can be weakened into the axioms
\begin{gather*}
\AxiomC{}
\RightLabel{($a$-$\rJoin_0$)}
\UnaryInfC{$\Phi,\Psi \vdash \hoare{\false}a{q}$}
\DisplayProof
\qquad
\AxiomC{}
\RightLabel{($a$-$\rMeet_0$)}
\UnaryInfC{$\Phi,\Psi \vdash \hoare{p}a{\true}$}
\DisplayProof
\end{gather*}
so that they apply only to atomic programs $a, b, \ldots$, without changing the theory generated by the calculus. The claim is that if we replace ($\rJoin_0$) and ($\rMeet_0$) by the weaker axioms ($a$-$\rJoin_0$) and ($a$-$\rMeet_0$), then we can still prove ($\rJoin_0$) and ($\rMeet_0$) for arbitrary terms.
\end{observation}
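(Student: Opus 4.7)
The plan is to prove by simultaneous structural induction on $f$ that, using only the weakened axioms $(a\text{-}\rJoin_0)$ and $(a\text{-}\rMeet_0)$ together with the rest of the calculus, one can derive both $\Phi,\Psi \vdash \hoare{\false}f{q}$ and $\Phi,\Psi \vdash \hoare{p}f{\true}$ for arbitrary $p, q$ and arbitrary terms $f$.

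For the base cases: atomic actions $a$ are handled directly by $(a\text{-}\rJoin_0)$ and $(a\text{-}\rMeet_0)$, possibly followed by $(\rWeak)$ using the trivial Boolean facts $\Phi \vdash \false \to q$ and $\Phi \vdash p \to \true$ to match an arbitrary postcondition or precondition. The case $f = \id$ follows from $(\rSkip)$, which gives $\hoare{\false}\id{\false}$ and $\hoare{\true}\id{\true}$, each weakened accordingly. The case $f = \bot$ is immediate from $(\rDvrg)$.

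For the inductive cases the derivations are routine applications of the compound rules:
\begin{itemize}[label=$-$]
\item $f = g_1;g_2$: for the $\false$-precondition, combine the IH $\hoare{\false}{g_1}{\false}$ with the IH $\hoare{\false}{g_2}{q}$ via $(\rSeq)$; for the $\true$-postcondition, combine $\hoare{p}{g_1}{\true}$ with $\hoare{\true}{g_2}{\true}$ via $(\rSeq)$.
\item $f = r[g_1,g_2]$: apply $(\rCond)$ to the IH instances $\hoare{\false \land r}{g_1}{q}$, $\hoare{\false \land \neg r}{g_2}{q}$, and analogously for $\true$.
\item $f = \whileDo{r}{g}$: use the invariant $\false$ (resp.\ $\true$) in $(\rLoop)$; the premise $\hoare{\false \land r}{g}{\false}$ (resp.\ $\hoare{\true \land r}{g}{\true}$) is supplied by the IH, and a final $(\rWeak)$ adjusts the postcondition from $\false \land \neg r$ (resp.\ $\true \land \neg r$) to $q$ (resp.\ $\true$).
\item $f = g_1 \ang g_2$: pick either branch and apply $(\rAng_i)$ to the IH.
\item $f = g_1 \dem g_2$: apply $(\rDem)$ to the two IH instances on $g_1$ and $g_2$.
\end{itemize}
The weakening rule $(\rWeak)$ is used freely, since the required Boolean entailments $\Phi \vdash \false \to q$ and $\Phi \vdash p \to \true$ are provable in any complete Boolean calculus (Note~\ref{note:bool}).

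There is no real obstacle here: the two schemes are proved together so that the $\rSeq$ case in each direction can consume an IH of the other shape (a $\false$-postcondition on $g_1$ for the $\false$-precondition conclusion, a $\true$-precondition on $g_2$ for the $\true$-postcondition conclusion). The mild point worth noting is that the $(\rAng_i)$ and $(\rLoop)$ cases already hint at why the full $(\rJoin_0)$ and $(\rMeet_0)$ on composite terms are never strictly needed: whenever the trivial axioms would have been invoked on a composite program, the same conclusion can be produced by recursively invoking them on the syntactic constituents and then threading the result through the compound rules.
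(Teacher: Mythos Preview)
Your proposal is correct and follows essentially the same approach as the paper: a structural induction on $f$, with $(\rWeak)$ used to adjust pre- and postconditions; the paper merely organizes it slightly differently by first reducing to the two specific goals $\hoare{\false}f{\false}$ and $\hoare{\true}f{\true}$ before doing the induction. One small remark: your claim that the two schemes must be proved \emph{simultaneously} for the $(\rSeq)$ case is not quite right, since $\hoare{\false}{g_1}{\false}$ is already an instance of the $\false$-precondition scheme (with $q=\false$) and $\hoare{\true}{g_2}{\true}$ is an instance of the $\true$-postcondition scheme (with $p=\true$); the two inductions are in fact independent, though doing them together is of course harmless.
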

\begin{proof}
Suppose that $\vdash_w$ denotes provability in the weakened proof system with ($a$-$\rJoin_0$) and ($a$-$\rMeet_0$). We claim that for every program term $f$ and all tests $p, q$, it holds:
\begin{gather*}
\vdash_w \hoare{\false}f{q}
\qquad\text{and}\qquad
\vdash_w \hoare{p}f{\true}.
\end{gather*}
It suffices to establish that $\vdash_w \hoare{\false}f{\false}$ and $\vdash_w \hoare{\true}f{\true}$, because we have:
\begin{gather*}
\AxiomC{$\vdash_w \hoare{\false}f{\false}$}
\AxiomC{$\vdash \false \to q$}
\RightLabel{($\rWeak$)}
\BinaryInfC{$\vdash_w \hoare{\false}f{q}$}
\DisplayProof
\quad\ 
\AxiomC{$\vdash_w \hoare{\true}f{\true}$}
\AxiomC{$\vdash p \to \true$}
\RightLabel{($\rWeak$)}
\BinaryInfC{$\vdash_w \hoare{p}f{\true}$}
\DisplayProof
\end{gather*}
The proof is by induction on the structure of $f$. We will only give the following derivation
\[
  \AxiomC{$\false \land p \to \false$}
  \AxiomC{$\hoare{\false}f{\false}$ (I.H.)}
  \RightLabel{($\rWeak$)}
  \BinaryInfC{$\hoare{\false \land p}f{\false}$}
  \RightLabel{($\rLoop$)}
  \UnaryInfC{$\hoare{\false}{\wh p f}{\false \land \neg p}$}
  \AxiomC{$\false \land \neg p \to \false$}
  \RightLabel{($\rWeak$)}
  \BinaryInfC{$\hoare{\false}{\wh p f}{\false}$}
  \DisplayProof
\]
as an illustrative example. The other cases equally straightforward and we omit them.
\end{proof}

\begin{theorem}[Soundness]
\label{thm:sound}
The Hoare calculus of Figure~\ref{fig:angHL} is sound.
\end{theorem}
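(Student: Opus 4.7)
The plan is a routine induction on the structure of the derivation of $\Phi,\Psi \vdash \hoare{p}f{q}$, fixing throughout an arbitrary interpretation $I$ with $I \models \Phi$ and $I \models \Psi$, and checking that every inference rule in Figure~\ref{fig:angHL} preserves semantic validity. The axioms $(\rHyp)$, $(\rSkip)$, $(\rDvrg)$, $(a\text{-}\rJoin_0)$, $(a\text{-}\rMeet_0)$ (in view of Observation~\ref{obs:trivialRules}) are discharged directly from the definitions in Figure~\ref{fig:gameOps}: $\rSkip$ uses $\one_S(u) = \{X \mid u \in X\}$, $\rDvrg$ uses $\zero_S(u) = \wp S$, and $\rMeet_0$ uses the non-emptiness clause $(u,S)\in I(f)$ together with upward closure. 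For $(\rWeak)$ I would invoke the Boolean completeness of Note~\ref{note:bool} (to get $I(p') \subseteq I(p)$ and $I(q) \subseteq I(q')$) and the upward closure of game functions.

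The rules $(\rSeq)$, $(\rCond)$, $(\rAng_i)$, $(\rDem)$, and $(\rJoin)$ each correspond transparently to a clause in Figure~\ref{fig:gameOps}. For example, $(\rSeq)$ is an immediate unfolding of the definition of $\gc$: from $(u,I(q)) \in I(f)$ and $(v,I(r)) \in I(g)$ for every $v \in I(q)$, one concludes $(u,I(r)) \in I(f) \gc I(g) = I(f;g)$. Similarly $(\rDem)$ matches $\phi \gdem \psi = \phi \cap \psi$, $(\rAng_i)$ matches $\phi \gang \psi = \phi \cup \psi$, $(\rCond)$ matches the definition of $P\glb{-},{-}\grb$, and $(\rJoin)$ is a direct case split on whether a given state satisfies $p_1$ or $p_2$.

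The only genuinely interesting case is $(\rLoop)$, where I need to show that if $I \models \hoare{r \land p}f{r}$ then $I \models \hoare{r}{\wh p f}{r \land \neg p}$. Writing $P = I(p)$, $R = I(r)$, $Y = I(r \land \neg p) = R \cap {\compl P}$, and using $I(\wh p f) = \bigcap_{\kappa \in \Ord} W_\kappa$ with $W_\kappa$ as in Figure~\ref{fig:gameOps}, it suffices to prove by transfinite induction that for every $u \in R$ and every ordinal $\kappa$, $(u,Y) \in W_\kappa$. For the base case, if $u \in P$ then $(u,Y) \in \zero_S$ trivially, and if $u \notin P$ then $u \in R \cap \compl P = Y$, so $(u,Y) \in \one_S$. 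At a successor stage $\kappa+1$, the case $u \notin P$ is identical to the base; for $u \in P$ we have $u \in I(r \land p)$, and the hypothesis $I \models \hoare{r \land p}f{r}$ gives $(u,R) \in I(f)$; the induction hypothesis gives $(v,Y) \in W_\kappa$ for all $v \in R$, and the definition of $\gc$ then yields $(u,Y) \in I(f) \gc W_\kappa$, whence $(u,Y) \in W_{\kappa+1}$. The limit stage is immediate from the intersection definition $W_\lambda = \bigcap_{\kappa<\lambda} W_\kappa$.

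The main subtlety, and what I would flag as the real content of the argument, is precisely this transfinite induction for the loop: it is the rigorous reflection of the intuition that the loop invariant $r$ carves out a safe region in the underlying safety game and that the angel can keep any play inside this region forever. All the remaining cases are one-line unfoldings of the corresponding clauses of Figure~\ref{fig:gameOps} together with upward closure and Boolean completeness.
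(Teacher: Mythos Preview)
Your proposal is correct and follows essentially the same approach as the paper: both reduce soundness to semantic properties of the game-function operations and isolate the loop case as the only nontrivial one, handling it by a transfinite induction over the approximants $W_\kappa$ showing $(u,R\cap\compl P)\in W_\kappa$ for every $u\in R$. The paper phrases the argument slightly more abstractly (stating the rules directly for arbitrary predicates $P,Q,R$ and game functions $\phi,\psi$ rather than for interpretations of syntax), but the content of the transfinite induction is identical to yours.
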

\begin{proof}
The soundness of the proposed Hoare calculus is an immediate consequence of the following properties that are formulated at a purely semantic level.
\begin{gather*}
\begin{aligned}
\hoare{P}{&\one_S}{P}
\\[-0.5ex]
\hoare{P}{&\zero_S}{Q}
\end{aligned}
\qquad
\AxiomC{$\hoare{P}\phi{Q}$}
\AxiomC{$\hoare{Q}\psi{R}$}
\BinaryInfC{$\hoare{P}{\phi \gc \psi}{R}$}
\DisplayProof
\qquad
\AxiomC{$\hoare{Q \cap P}\phi{R}$}
\AxiomC{$\hoare{Q \cap \compl P}\psi{R}$}
\BinaryInfC{$\hoare{Q}{P \glb \phi,\psi \grb}{R}$}
\DisplayProof
\\[1ex]
\AxiomC{$\hoare{R \cap P}\phi{R}$}
\UnaryInfC{$\hoare{R}{\gWhileDo P \phi}{R \cap \compl P}$}
\DisplayProof
\qquad
\AxiomC{$\hoare{P}{\phi_i}{Q}$}
\UnaryInfC{$\hoare{P}{\phi_1 \cup \phi_2}{Q}$}
\DisplayProof
\qquad
\AxiomC{$\hoare{P}{\phi_\kappa}{Q}$}
\UnaryInfC{$\hoare{P}{\tbigcap_\kappa \phi_\kappa}{Q}$}
\DisplayProof
\\[1ex]
\AxiomC{$P' \subseteq P$}
\AxiomC{$\hoare{P}{\phi}{Q}$}
\AxiomC{$Q \subseteq Q'$}
\TernaryInfC{$\hoare{P'}{\phi}{Q'}$}
\DisplayProof
\qquad
\AxiomC{$\hoare{P_1}{\phi}{Q}$}
\AxiomC{$\hoare{P_2}{\phi}{Q}$}
\BinaryInfC{$\hoare{P_1 \cup P_2}{\phi}{Q}$}
\DisplayProof
\qquad
\begin{aligned}
\hoare{\emptyset}{&\phi}{Q}
\\[-0.5ex]
\hoare{P}{&\phi}{S}
\end{aligned}
\end{gather*}
For predicates $P, Q \subseteq S$ and a game function $\phi: S \gto S$, we understand $\hoare{P}{\phi}{Q}$ as the assertion saying that $(u,Q) \in \phi$ for every state $u \in P$. Establishing the above semantic properties of game functions is a tedious but straightforward task. We will therefore only consider here the case $\gWhileDo P \phi$ and leave the rest to the reader. Recall the definition:
\begin{align*}
\gWhileDo{P}{\phi}& =
\tbigcap_{\kappa \in \Ord} W_\kappa
&
W_0 &=
P \glb \zero_S,\one_S \grb
\\
W_{\kappa+1} &=
P \glb \phi \gc W_\kappa,\one_S \grb
&
W_\lambda &=
\tbigcap_{\kappa<\lambda} W_\kappa,
\ \text{for limit ordinal $\lambda$}
\end{align*}
We show by transfinite induction that $\hoare{R}{W_\kappa}{R \cap \neg P}$. Indeed, for the base case $W_0$ and for the case of the successor ordinal $W_{\kappa+1}$ we have the following derivations:
\begin{allowdisplaybreaks}
\begin{gather*}
\AxiomC{$\hoare{R \cap P}{\zero_S}{R \cap \compl P}$}
\AxiomC{$\hoare{R \cap \compl P}{\one_S}{R \cap \compl P}$}
\BinaryInfC{$\hoare{R}{P \glb \zero_S,\one_S \grb}{R \cap \compl P}$}
\DisplayProof
\\[1ex]
\AxiomC{$\hoare{R \cap P}{\phi}{R}$ (hyp.)}
\AxiomC{$\hoare{R}{W_\kappa}{R \cap \compl P}$ (I.H.)}
\BinaryInfC{$\hoare{R \cap P}{\phi \gc W_\kappa}{R \cap \compl P}$}
\AxiomC{$\hoare{R \cap \compl P}{\one_S}{R \cap \compl P}$}
\BinaryInfC{$\hoare{R}{P \glb \phi \gc W_\kappa,\one_S \grb}{R \cap \compl P}$}
\DisplayProof
\end{gather*}
\end{allowdisplaybreaks}%
The case $W_\lambda$ of the limit ordinal $\lambda$ is handled using the I.H.\ for each ordinal $\kappa < \lambda$ and the infinitary rule for $\tbigcap$. Finally, the assertion $\hoare{R}{\gWhileDo P \phi}{R \cap \compl P}$ is shown using the claim and the rule for infinitary intersection.
\end{proof}

\begin{example}
\label{ex:proof}
\newcommand{\inv}{\mathit{inv}}
We will use the Hoare logic of Figure~\ref{fig:angHL} to establish the partial-correctness property $\hoare{x=0}{h}{x=1}$ for the program $h$ of Example~\ref{ex:operational} (recall the abbreviations $f, g$).
\begin{gather*}
\begin{aligned}[t]
1.\ &
\hoare{x=0}\id{x=0}
&&\text{[$\rSkip$]}
\\
2.\ &
\hoare{x=0}{\id \ang \inc}{x=0}
&&\text{[1, $\rAng$]}
\\
3.\ &
\hoare{x=0}{\inc}{x=1}
&&\text{[hypothesis]}
\\
4.\ &
\hoare{x=0}{\id \dem \inc}{\inv}
&&\text{[1, 3, $\rDem'$]}
\\
5.\ &
\hoare{x=0}{f;g}{\inv}
&&\text{[2, 4, $\rSeq$]}
\\
6.\ &
\hoare{\inv \land (x=0)}{f;g}{\inv}
&&\text{[5, bool, $\rWeak$]}
\\
7.\ &
\hoare{\inv}{h}{\inv \land (x \neq 0)}
&&\text{[6, $\rLoop$]}
\\
8.\ &
\hoare{x=0}{h}{x=1}
&&\text{[7, bool, $\rWeak$]}
\end{aligned}
\hspace{2.5em}
\begin{aligned}[t]
&\{ \mathsf{Precondition}: x=0 \}
\\
&\text{// invariant $\inv \triangleq (x=0) \lor (x=1)$}
\\
&\kwWhile\ (x=0)\ \kwDo
\\
&\qquad \text{// $\inv \land (x=0) \lrto (x=0)$}
\\
&\qquad \id \ang \inc
\\
&\qquad \id \dem \inc
\\
&\qquad \text{// $(x=0) \lor (x=1)$}
\\
&\{ \mathsf{Postcondition}: x=1 \}
\end{aligned}
\end{gather*}
The only hypothesis for atomic symbols used in the proof is $\hoare{x=0}\inc{x=1}$.
\end{example}

\section{First Completeness Theorem: Weak Hoare Theory}
\label{sec:completeA}

We will now prove the completeness of the Hoare calculus of Figure~\ref{fig:angHL} with respect to the class of all interpretations. This means that we consider arbitrary interpretations of the atomic programs $a, b, \ldots$ as game functions. So, the deductive system of Figure~\ref{fig:angHL} is complete for the weak Hoare theory of while game schemes. Note that this is an \emph{unconditional} completeness result (no extra assumptions about expressiveness or about the first-order theory of the domain of computation), not a relative completeness theorem in the sense of \cite{cook1978}.

We show our result by constructing a ``free'' interpretation $\gI$ from the hypotheses $\Phi$ and $\Psi$ about the atomic symbols. We can think of this interpretation as the \emph{least restrictive} interpretation that satisfies the hypotheses. Completeness follows from the fact that the interpretation $\gI$ characterizes the theory generated by our calculus. In other words, everything that is true in $\gI$ is provable using our partial-correctness calculus.

\begin{defi}[The Free Game Interpretation]
Let $\Phi$ be a finite set of tests, and $\Psi$ be a finite set of simple Hoare assertions. We define the \emph{free game interpretation} $\gI$ (w.r.t.\ $\Phi$ and $\Psi$) to have $\AtC$ as state space, and to interpret the tests as $I_\Phi$ (the free test interpretation w.r.t.\ $\Phi$, see Definition~\ref{def:freeTestI}) does. Moreover, the interpretation $\gI(a): \AtC \gto \AtC$  of the atomic action $a$ is given by: for every $\Phi$-consistent atom $\alpha$,
\begin{itemize}[label=$-$]
\item
$(\alpha,\AtC) \in \gI(a)$, and for every subset $X \subsetneq \AtC$,
\item
$(\alpha,X) \in \gI(a)$ iff there exists $\hoare p a q \in \Psi$ s.t.\ $\alpha \leq p$ and $I_\Phi(q) \subseteq X$.
\end{itemize}
\end{defi}

\begin{lemma}
\label{lemma:free}
Let $\Phi$ be a finite set of tests, and $\Psi$ be a finite set of simple Hoare assertions. The free game interpretation $\gI$ satisfies all formulas in $\Phi$ and $\Psi$. \qed
\end{lemma}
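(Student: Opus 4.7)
The plan is to verify the two conditions of Definition~\ref{def:implication} directly from the construction of $\gI$, using properties of the free test interpretation $I_\Phi$ from Definition~\ref{def:freeTestI} and Note~\ref{note:bool}. There are two kinds of hypotheses to check: the Boolean tests in $\Phi$ and the simple Hoare assertions in $\Psi$. Neither case will require any inductive machinery; both reduce to immediate unpacking of definitions.

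First I would dispatch the tests in $\Phi$. Given $p \in \Phi$, every $\Phi$-consistent atom $\alpha \in \AtC$ satisfies $\alpha \leq p$ by definition of $\AtC$. By Note~\ref{note:bool} (or equivalently by the straightforward induction alluded to in Definition~\ref{def:freeTestI}), this gives $I_\Phi(p) = \AtC$, i.e., $\gI \models p$. Note that $\gI$ interprets tests exactly as $I_\Phi$ does, so no separate argument is needed.

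Next I would handle each simple Hoare assertion $\hoare{p}{a}{q} \in \Psi$. Fix an arbitrary $\alpha \in \AtC$ with $\gI,\alpha \models p$, which by construction means $\alpha \leq p$. I need to show $(\alpha, I_\Phi(q)) \in \gI(a)$. This splits into two cases according to the defining clauses of $\gI(a)$. If $I_\Phi(q) = \AtC$, the first clause gives $(\alpha,\AtC) \in \gI(a)$ immediately. Otherwise $I_\Phi(q) \subsetneq \AtC$, and the assertion $\hoare{p}{a}{q} \in \Psi$ itself serves as the required witness in the second clause: we have $\alpha \leq p$ and $I_\Phi(q) \subseteq I_\Phi(q)$ trivially.

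There is no real obstacle here; the lemma is essentially a sanity check confirming that the free interpretation was designed to satisfy the hypotheses by construction. The only place one might slip is forgetting the case split on whether $I_\Phi(q)$ equals all of $\AtC$, since the definition of $\gI(a)$ treats $\AtC$ as a special top element separately from the proper subsets. Once that split is noted, both sub-cases close in one line.
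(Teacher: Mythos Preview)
Your proposal is correct and is exactly the natural verification one would carry out; the paper omits the proof entirely (the lemma is stated with an immediate \qed), so there is nothing to compare against beyond noting that your unpacking of the two defining clauses of $\gI(a)$ is precisely what the construction was set up to make trivial.
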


\begin{theorem}[Completeness]
\label{thm:complete}
Let $\Phi$ be a finite set of tests, and $\Psi$ be a finite set of simple Hoare assertions. For every program term $f$ and every $\Phi$-consistent atom $\alpha$,
\[
  \text{$(\alpha,X) \in \gI(f)$ implies that $\Phi,\Psi \vdash \hoare{\alpha}f{\tbigvee X}$}.
\]
\end{theorem}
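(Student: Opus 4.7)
The plan is to argue by induction on the structure of the program term $f$, showing in each case that $(\alpha,X) \in \gI(f)$ yields a derivation of $\hoare{\alpha}{f}{\tbigvee X}$. A key book-keeping fact used throughout is that for any $Y \subseteq \AtC$ one has $I_\Phi(\tbigvee Y) = Y$, so that via Note~\ref{note:bool}, $\Phi \vdash q \to \tbigvee X$ is equivalent to $I_\Phi(q) \subseteq X$, and in particular $\Phi \vdash \alpha \to \tbigvee X$ iff $\alpha \in X$.

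For the base cases, $\id$ and $\bot$ follow immediately from ($\rSkip$) and ($\rDvrg$) together with ($\rWeak$). For an atomic action $a$, the definition of $\gI(a)$ splits into two cases: either $X = \AtC$ (discharged via ($\rMeet_0$) and ($\rWeak$)), or there is a hypothesis $\hoare{p}{a}{q} \in \Psi$ with $\alpha \leq p$ and $I_\Phi(q) \subseteq X$, which combined with ($\rHyp$) and ($\rWeak$) supplies the derivation. The routine inductive steps are: composition $f;g$ (unfold $\gc$ to obtain an intermediate $Y$, apply the induction hypothesis to $f$ and to $g$ at each atom of $Y$, bundle via repeated ($\rJoin$), or ($\rJoin_0$) when $Y$ is empty, and close with ($\rSeq$)); conditional $p[f,g]$ (every atom $\alpha$ satisfies either $p$ or $\neg p$, so apply the induction hypothesis to the relevant branch, discharge the other branch with ($\rJoin_0$) since its precondition is $\Phi$-equivalent to $\false$, then ($\rCond$)); angelic choice (use ($\rAng_i$)); and demonic choice (since $\gdem$ is intersection, $(\alpha,X)$ lies in both operands, so ($\rDem$) applies directly).

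The main obstacle is the while loop $\wh p f$. I will take as loop invariant
\[
  r \,\triangleq\, \tbigvee R,\qquad R \,=\, \{\beta \in \AtC \mid (\beta,X) \in \gI(\wh p f)\},
\]
so that $\alpha \in R$ gives $\Phi \vdash \alpha \to r$ directly. The critical step is to derive $\Phi,\Psi \vdash \hoare{r \land p}{f}{r}$: because $\AtC$ is finite, the decreasing chain of approximants $W_0 \supseteq W_1 \supseteq \cdots$ stabilizes at some finite $n$ with $\gI(\wh p f) = W_n = W_{n+1}$, so for every $\beta \in R$ with $\beta \leq p$, unfolding the defining equation for $W_{n+1}$ supplies a set $Y$ with $(\beta,Y) \in \gI(f)$ and $Y \subseteq R$; upward closure of $\gI(f)$ then gives $(\beta,R) \in \gI(f)$, and the induction hypothesis produces $\hoare{\beta}{f}{r}$. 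Applying ($\rWeak$) to strengthen the precondition to $\beta \land p$ and then bundling across all such $\beta$ via ($\rJoin$), with ($\rJoin_0$) covering the case that no $\beta \in R$ satisfies $p$, gives $\hoare{r \land p}{f}{r}$, after which ($\rLoop$) yields $\hoare{r}{\wh p f}{r \land \neg p}$. To conclude, I observe that every $\beta \in R$ with $\beta \leq \neg p$ must satisfy $\beta \in X$ (read off directly from $W_0 = I_\Phi(p) \glb \zero_S,\one_S \grb$), so $\Phi \vdash (r \land \neg p) \to \tbigvee X$, and a final ($\rWeak$) closes the derivation.
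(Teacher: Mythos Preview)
Your proposal is correct and follows the same inductive strategy as the paper, with the same treatment of every case including the crucial while-loop invariant $R=\{\beta\mid(\beta,X)\in\gI(\wh p f)\}$ and the stabilization argument giving $(\beta,R)\in\gI(f)$ for $\beta\in R\cap I_\Phi(p)$. Your presentation is in fact slightly more streamlined: the paper introduces an auxiliary ``demon can force error'' sequence $U_i$ and states $V_i=\compl U_i$, but this detour is not needed for the derivation and you correctly omit it; likewise for demonic choice you use $\gdem=\cap$ directly rather than the paper's $X\cup Y$ formulation followed by weakening.
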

\begin{proof}
The proof proceeds by induction on the structure of the program term $f$. Recall that we have assumed having a complete Boolean calculus (see Note~\ref{note:bool}).

We begin with the base case of the skip program $\id$. Consider an arbitrary pair $(\alpha,X)$ of $\gI(\id)$, where $\alpha \in X$. Since $\gI(\id) = \one_{\AtC}$, we know that $\alpha \in X$. Using the ($\rSkip$) axiom and the weakening rule, we have the derivation:
\[
  \AxiomC{}
  \RightLabel{$(\rSkip)$}
  \UnaryInfC{$\Phi,\Psi \vdash \hoare{\alpha}\id{\alpha}$}
  \AxiomC{$\alpha \in X \subseteq \AtC$}
  \UnaryInfC{$\Phi \vdash \alpha \to \bigvee X$}
  \RightLabel{$(\rWeak)$.}
  \BinaryInfC{$\Phi,\Psi \vdash \hoare{\alpha}\id{\bigvee X}$}
  \DisplayProof
\]
Now, we handle the case of the always diverging program $\bot$. Let $(\alpha,X)$ be an arbitrary element of $\gI(\bot) = \zero_\AtC = \AtC \times \wp\AtC$. The ($\rDvrg$) axiom gives us immediately
\[
  \AxiomC{}
  \RightLabel{($\rDvrg$).}
  \UnaryInfC{$\Phi,\Psi \vdash \hoare{\alpha}\bot{\bigvee X}$}
  \DisplayProof
\]
For the case of an atomic action $a$, consider an arbitrary pair $(\alpha,X)$ in $\gI(a)$. If $X = \AtC$, then we have the following derivation:
\[
  \AxiomC{}
  \RightLabel{($\rMeet_0$)}
  \UnaryInfC{$\Phi,\Psi \vdash \hoare{\alpha}a{\true}$}
  \AxiomC{$I_\Phi(\true) = \AtC$}
  \UnaryInfC{$\Phi \vdash \true \to \bigvee\AtC$}
  \RightLabel{($\rWeak$).}
  \BinaryInfC{$\Phi,\Psi \vdash \hoare{\alpha}a{\bigvee\AtC}$}
  \DisplayProof
\]
Assume now that $X \subsetneq \AtC$. By definition of $\gI(a)$, there exists a simple Hoare hypothesis $\hoare{p}a{q}$ in $\Psi$ such that $\alpha \leq p$ and $I_\Phi(q) \subseteq X$. So,
\[
  \AxiomC{$\alpha \leq p$}
  \UnaryInfC{$\Phi \vdash \alpha \to p$}
  \AxiomC{$\hoare p a q$ in $\Psi$}
  \RightLabel{($\rHyp$)}
  \UnaryInfC{$\Phi,\Psi \vdash \hoare p a q$}
  \AxiomC{$I_\Phi(q) \subseteq X \subseteq \AtC$}
  \UnaryInfC{$\Phi \vdash q \to \bigvee X$}
  \RightLabel{($\rWeak$).}
  \TernaryInfC{$\Phi,\Psi \vdash \hoare{\alpha}a{\bigvee X}$}
  \DisplayProof
\]
This concludes the proof for the case of atomic programs.

We will handle now the case $f;g$ of sequential composition. Let $(\alpha,Y)$ be an arbitrary pair in $\gI(f;g) = \gI(f) \gc \gI(g)$. By definition of the $\gc$ operation on game functions, there exists $X \subseteq \AtC$ such that $(\alpha,X) \in \gI(f)$, and $(\beta,Y) \in \gI(g)$ for every $\beta \in X$. So,
\[
  \AxiomC{$(\alpha,X)$ in $\gI(f)$}
  \RightLabel{(I.H.)}
  \UnaryInfC{$\Phi,\Psi \vdash \hoare{\alpha}f{\bigvee X}$}
  \AxiomC{$(\beta,Y)$ in $\gI(g)$}
  \RightLabel{(I.H.)}
  \UnaryInfC{$\Phi,\Psi \vdash \hoare{\beta}g{\bigvee Y}$}
  \AxiomC{$\beta \in X$}
  \RightLabel{($\rJoin$)}
  \BinaryInfC{$\Phi,\Psi \vdash \hoare{\bigvee X}g{\bigvee Y}$}
  \RightLabel{($\rSeq$).}
  \BinaryInfC{$\Phi,\Psi \vdash \hoare{\alpha}{f;g}{\bigvee Y}$}
  \DisplayProof
\]
Observe in the derivation above that we may have to apply the ($\rJoin$) rule several times (finitely many), because $X$ may contain several $\Phi$-consistent atoms.

For the case of the conditional $\ifThenElse p f g$, let us consider a pair $(\alpha,X)$ in $\gI(p[f,g])$. We deal with the case where $\alpha \leq p$. We obtain the following derivations:
\begin{allowdisplaybreaks}
\begin{gather*}
\AxiomC{$\Phi \vdash \alpha \land p \to \alpha$}
\AxiomC{$(\alpha,X)$ in $\gI(f)$}
\RightLabel{(I.H.)}
\UnaryInfC{$\Phi,\Psi \vdash \hoare{\alpha}f{\bigvee X}$}
\RightLabel{($\rWeak$)}
\BinaryInfC{(1) $\Phi,\Psi \vdash \hoare{\alpha \land p}f{\bigvee X}$}
\DisplayProof
\\[1ex]
\AxiomC{$\alpha \leq p$}
\UnaryInfC{$\Phi \vdash \alpha \land \neg p \to \false$}
\AxiomC{$$}
\RightLabel{($\rJoin_0$)}
\UnaryInfC{$\Phi,\Psi \vdash \hoare{\false}g{\bigvee X}$}
\RightLabel{($\rWeak$)}
\BinaryInfC{(2) $\Phi,\Psi \vdash \hoare{\alpha \land \neg p}g{\bigvee X}$}
\DisplayProof
\\
\AxiomC{\vdots}
\LeftLabel{(1)}
\UnaryInfC{$\Phi,\Psi \vdash \hoare{\alpha \land p}f{\bigvee X}$}
\AxiomC{\vdots}
\LeftLabel{(2)}
\UnaryInfC{$\Phi,\Psi \vdash \hoare{\alpha \land \neg p}g{\bigvee X}$}
\RightLabel{($\rCond$)}
\BinaryInfC{$\Phi,\Psi \vdash \hoare{\alpha}{\ifThenElse p f g}{\bigvee X}$}
\DisplayProof
\end{gather*}
\end{allowdisplaybreaks}%
The proof for the case where $\alpha \leq \neg p$ is completely analogous.

We handle now the case of the loop $\wh p f$. Let $(\gamma,\Gamma)$ be an arbitrary pair in the game function
$\gI(\wh p f) =
 \gWhileDo{I_\Phi(p)}{\gI(f)} =
 \bigcap_i W_i
$,
where the sequence $W_i$ is given by
\begin{align*}
W_0 &= I_\Phi(p) \glb \zero_S,\one_S \grb
&
W_{i+1} &= I_\Phi(p) \glb \gI(f) \gc W_i,\one_S \grb
\end{align*}
We do not need to consider the entire transfinite sequence $(W_\kappa)_{\kappa \in \Ord}$, because the space of game functions on $\AtC$ is finite and hence the sequence stabilizes in a finite number of steps. Define the sequence $(V_i)_{i \geq 0}$ by
\begin{align*}
V_i = \{ \alpha \in \AtC \mid
  (\alpha,\Gamma) \in W_i
\}.
\end{align*}
We give an inductive definition for $(V_i)_{i \geq 0}$ that is equivalent to the above:
\begin{allowdisplaybreaks}
\begin{align*}
V_0 &=
\{ \alpha \in \AtC \mid (\alpha,\Gamma) \in W_0 \}
\\ &=
\{ \alpha \in \AtC \mid \text{$\alpha \leq p$ or ($\alpha \leq \neg p$ and $\alpha \in \Gamma$}) \}
\\ &=
\{ \alpha \in \AtC \mid \text{$\alpha \leq p$ or $\alpha \in \Gamma$} \}
\\ &=
I_\Phi(p) \cup (\compl I_\Phi(p) \cap \Gamma)
\\
V_{i+1} &=
\{ \alpha \in \AtC \mid (\alpha,\Gamma) \in W_{i+1} \}
\\ &=
\{ \alpha \in \AtC \mid \text{($\alpha \leq p$ and $(\alpha,\Gamma) \in \gI(f) \kc W_i$) or ($\alpha \leq \neg p$ and $\alpha \in \Gamma$}) \}
\\ &=
\{ \alpha \in \AtC \mid \text{$\alpha \leq p$ and $(\alpha,\Gamma) \in \gI(f) \kc W_i$} \}
\cup
(\compl I_\Phi(p) \cap \Gamma)
\\ &=
\{ \alpha \in \AtC \mid \text{$\alpha \leq p$ and $(\alpha,V_i) \in \gI(f)$} \}
\cup
(\compl I_\Phi(p) \cap \Gamma)
\end{align*}
\end{allowdisplaybreaks}%
The last equality above is justified by the following equivalences:
\begin{align*}
&(\alpha,\Gamma) \in \gI(f) \kc W_i \iff
\\
&\text{$\exists Y$.\ $(\alpha,Y) \in \gI(f)$, and $(\beta,\Gamma) \in W_i$ for every $\beta \in Y$}
\\
&\text{$\exists Y$.\ $(\alpha,Y) \in \gI(f)$, and $\beta \in V_i$ for every $\beta \in Y$}
\\
&\text{$\exists Y$.\ $(\alpha,Y) \in \gI(f)$ and $Y \subseteq V_i$},
\end{align*}
which is equivalent to $(\alpha,V_i) \in \gI(f)$. Define the sequence $(U_i)_{i \geq 0}$ by
\begin{align*}
U_0 &=
\{ \alpha \in \AtC \mid
   \text{$\alpha \leq \neg p$ and $\alpha \notin \Gamma$}
\}
\\ &=
\compl I_\Phi(p) \cap \compl \Gamma
\\
U_{i+1} &= U_0 \cup \{ \alpha \in \AtC \mid
  \text{$\alpha \leq p$ and $\forall Y$ with $(\alpha,Y) \in \gI(f)$: $Y \cap U_i \neq \emptyset$}
\}
\end{align*}
Intuitively, the set $U_i$ gives us the atoms from which the demon can force the execution towards the ``error states'' $U_0$ in at most $i$ iterations of the loop.

\begin{claim*}
For every $i \geq 0$, it holds that $V_i = \compl U_i = \AtC \setminus U_i$.
\end{claim*}
%\begin{proof}
%For $i = 0$, we observe that
%$\compl U_0 =
% \compl (\compl I_\Phi(p) \cap \compl \Gamma) =
% I_\Phi(p) \cup \Gamma =
% V_0
%$.
%For the induction step, we have that
%\begin{allowdisplaybreaks}
%\begin{align*}
%\compl U_{i+1} &=
%\compl U_0 \cap
%\{ \alpha \in \AtC \mid
%   \text{$\alpha \leq \neg p$ or $\exists Y.$ $(\alpha,Y) \in \gI(f)$ and $Y \cap U_i = \emptyset$}
%\}
%\\ &=
%\compl U_0 \cap
%\{ \alpha \in \AtC \mid
%   \text{$\alpha \leq \neg p$ or $\exists Y.$ $(\alpha,Y) \in \gI(f)$ and $Y \subseteq \compl U_i$}
%\}
%\\ &=
%\compl U_0 \cap
%\{ \alpha \in \AtC \mid
%   \text{$\alpha \leq \neg p$ or $(\alpha,\compl U_i) \in \gI(f)$}
%\}
%\\ &=
%V_0 \cap
%\{ \alpha \in \AtC \mid
%   \text{$\alpha \leq \neg p$ or $(\alpha,V_i) \in \gI(f)$}
%\}
%\\ &=
%V_0 \cap
%\{ \alpha \in \AtC \mid
%   \text{$\alpha \leq \neg p$ or ($\alpha \leq p$ and $(\alpha,V_i) \in \gI(f)$)}
%\}
%\\ &=
%V_0 \cap (\compl I_\Phi(p) \cup A)
%\\ &=
%(V_0 \cap \compl I_\Phi(p))
%\cup
%(V_0 \cap A)
%\\ &=
%(\compl I_\Phi(p) \cap \Gamma)
%\cup A
%\\ &=
%V_{i+1},
%\end{align*}
%where $A = \{ \alpha \in \AtC \mid \text{$\alpha \leq p$ and $(\alpha,V_i) \in \gI(f)$} \}$.
%\end{allowdisplaybreaks}%
%\end{proof}

Now, we define $U = \bigcup_{i \geq 0} U_i$ and $V = \bigcap_{i \geq 0} V_i$. The above claim implies that $V = \compl U$. Moreover, since $\gI(\wh p f) = \tbigcap_i W_i$, it is easy to see that
\[
  V = \{ \alpha \in \AtC \mid
    (\alpha,\Gamma) \in \gI(\wh p f)
  \}.
\]
Our hypothesis $(\gamma,\Gamma) \in \gI(\wh p f)$ then gives us that $\gamma \in V$.

\begin{claim*}
\label{claim:0closed}
If $\alpha \leq p$ and $\alpha \in V$, then $(\alpha,V)$ is in $\gI(f)$.
\end{claim*}
%\begin{proof}
%Suppose that $\alpha \leq p$ and $\alpha \in V$, which implies that $\alpha \in V_{i+1}$ for every $i \geq 0$. From the inductive definition of $V_{i+1}$, we obtain that $(\alpha,V_i) \in \gI(f)$ for every $i \geq 0$. Since the approximants $W_0 \supseteq W_1 \supseteq W_2 \supseteq \cdots$ form a decreasing chain, so do the sets $V_0 \supseteq V_1 \supseteq V_2 \supseteq \cdots$. But $V_0 \subseteq \AtC$ contains finitely many elements, which means that the sequence $(V_i)_i$ stabilizes at some point. So, $V = \bigcap_i V_i = V_k$ for some $k \geq 0$. It follows that $(\alpha,V)$ is in $\gI(f)$.
%\end{proof}

So, we have the following derivations, where the first one is for an arbitrary $\Phi$-consistent atom $\alpha \in V \cap I_\Phi(p)$:
\begin{allowdisplaybreaks}
\begin{gather*}
\AxiomC{$\begin{array}[b]{ll}
\Phi \vdash (\bigvee V) \land p \to
\\
\hspace{3em}
\bigvee (V \cap I_\Phi(p))
\end{array}$\hspace{-1em}}
\AxiomC{$(\alpha,V)$ in $\gI(f)$}
\RightLabel{(I.H.)}
\UnaryInfC{$\Phi,\Psi \vdash \hoare{\alpha}f{\bigvee V}$}
\AxiomC{$\alpha \in V \cap I_\Phi(p)$}
\RightLabel{($\rJoin$)}
\BinaryInfC{$\Phi,\Psi \vdash \hoare{\bigvee (V \cap I_\Phi(p))}f{\bigvee V}$}
\BinaryInfC{$\Phi,\Psi \vdash \hoare{(\bigvee V) \land p}f{\bigvee V}$}
\RightLabel{($\rLoop$)}
\UnaryInfC{(1) $\Phi,\Psi \vdash \hoare{\bigvee V}{\whileDo p f}{(\bigvee V) \land \neg p}$}
\DisplayProof
\\
\AxiomC{$\gamma \in V$}
\UnaryInfC{$\Phi \vdash \gamma \to r$}
\AxiomC{\vdots}
\AxiomC{$r \triangleq \bigvee V$}
\LeftLabel{(1)}
\BinaryInfC{$\Phi,\Psi \vdash \hoare{r}{\wh p f}{r \land \neg p}$}
\AxiomC{$V \cap \compl I_\Phi(p) = \Gamma \cap \compl I_\Phi(p)$}
\UnaryInfC{$\Phi \vdash r \land \neg p \to \bigvee\Gamma$}
\TernaryInfC{$\Phi,\Psi \vdash \hoare{\gamma}{\whileDo p f}{\bigvee\Gamma}$}
\DisplayProof
\end{gather*}
\end{allowdisplaybreaks}%
The last deduction step is done using the weakening rule.

For angelic choice $f \ang g$, let $(\alpha,X)$ be a pair in $\gI(f \ang g) = \gI(f) \gang \gI(g)$, which is equal to $\gI(f) \cup \gI(g)$. We assume that $(\alpha,X)$ is in $\gI(f)$.
\[
  \AxiomC{$(\alpha,X)$ in $\gI(f)$}
  \RightLabel{(I.H.)}
  \UnaryInfC{$\Phi,\Psi \vdash \hoare{\alpha}f{\bigvee X}$}
  \RightLabel{($\rAng_1$).}
  \UnaryInfC{$\Phi,\Psi \vdash \hoare{\alpha}{f \ang g}{\bigvee X}$}
  \DisplayProof
\]
The case of $(\alpha,X) \in \gI(g)$ is handled analogously.

For demonic choice $f \dem g$, let $(\alpha,X \cup Y)$ be a pair in $\gI(f \dem g) = \gI(f) \gdem \gI(g)$, where $(\alpha,X) \in \gI(f)$ and $(\alpha,Y) \in \gI(g)$. We obtain the derivation:
\[
  \AxiomC{$(\alpha,X)$ in $\gI(f)$}
  \RightLabel{(I.H.)}
  \UnaryInfC{$\Phi,\Psi \vdash \hoare{\alpha}f{\bigvee X}$}
  \AxiomC{$X \subseteq X \cup Y \subseteq \AtC$}
  \UnaryInfC{$\Phi \vdash \bigvee X \to \bigvee (X \cup Y)$}
  \RightLabel{($\rWeak$)}
  \BinaryInfC{$\Phi,\Psi \vdash \hoare{\alpha}f{\bigvee (X \cup Y)}$}
  \DisplayProof
\]
and similarly we also get that
$\Phi,\Psi \vdash
 \hoare{\alpha}g{\bigvee(X \cup Y)}
$.
Finally,
\[
  \AxiomC{$\Phi,\Psi \vdash \hoare{\alpha}f{\bigvee (X \cup Y)}$}
  \AxiomC{$\Phi,\Psi \vdash \hoare{\alpha}g{\bigvee (X \cup Y)}$}
  \RightLabel{($\rDem$)}
  \BinaryInfC{$\Phi,\Psi \vdash \hoare{\alpha}{f \dem g}{\bigvee (X \cup Y)}$}
  \DisplayProof
\]
by the rule for demonic choice, and we are done.
\end{proof}

\begin{corollary}[Completeness]
\label{coro:complete}
Let $\Phi$ be a finite set of tests, and $\Psi$ be a finite set of simple Hoare assertions. For every program $f$, the following are equivalent:
\begin{enumerate}
\item
$\Phi,\Psi \models \hoare{p}f{q}$.
\item
For every $\Phi$-consistent $\alpha \leq p$, the pair $(\alpha,I_\Phi(q))$ is in $\gI(f)$.
\item
$\Phi,\Psi \vdash \hoare{p}f{q}$.
\end{enumerate}
\end{corollary}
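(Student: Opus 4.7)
The plan is to establish the cycle $(3) \Rightarrow (1) \Rightarrow (2) \Rightarrow (3)$, where the first two implications are nearly immediate and all of the real work has already been performed in the preceding Theorem~\ref{thm:complete} and Lemma~\ref{lemma:free}.

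For $(3) \Rightarrow (1)$, I would simply invoke the soundness result (Theorem~\ref{thm:sound}): every derivable Hoare implication is valid. For $(1) \Rightarrow (2)$, observe that validity means that \emph{every} interpretation that satisfies $\Phi$ and $\Psi$ also satisfies $\hoare{p}{f}{q}$. By Lemma~\ref{lemma:free}, the free game interpretation $\gI$ is such an interpretation, so $\gI \models \hoare{p}{f}{q}$. Unfolding the definition of satisfaction (Definition~\ref{def:simple}) over the state space $\AtC$, this says precisely that for every $\Phi$-consistent atom $\alpha$ with $\gI,\alpha \models p$, i.e.\ with $\alpha \leq p$, the pair $(\alpha, I_\Phi(q))$ belongs to $\gI(f)$.

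The main implication is $(2) \Rightarrow (3)$. Here I would apply Theorem~\ref{thm:complete} to each $\Phi$-consistent atom $\alpha$ with $\alpha \leq p$: the hypothesis gives $(\alpha, I_\Phi(q)) \in \gI(f)$, and the theorem then yields
\[
  \Phi,\Psi \vdash \hoare{\alpha}{f}{\tbigvee I_\Phi(q)}.
\]
By completeness of the Boolean calculus (Note~\ref{note:bool}) together with the fact that $I_\Phi(q)$ consists exactly of the $\Phi$-consistent atoms satisfying $q$, we have $\Phi \vdash (\tbigvee I_\Phi(q)) \to q$, so an application of $(\rWeak)$ upgrades each derivation to $\Phi,\Psi \vdash \hoare{\alpha}{f}{q}$. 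There are only finitely many $\Phi$-consistent atoms below $p$ (since $\AtC$ is finite), so finitely many applications of $(\rJoin)$ yield
\[
  \Phi,\Psi \vdash \hoare{\tbigvee \{\alpha \in \AtC : \alpha \leq p\}}{f}{q}.
\]
Since $\Phi \vdash p \to \tbigvee \{\alpha \in \AtC : \alpha \leq p\}$ (again by the Boolean completeness note, because the right-hand side coincides with $\bigvee I_\Phi(p)$ and is provably equivalent to $p$ modulo $\Phi$), a final $(\rWeak)$ step gives $\Phi,\Psi \vdash \hoare{p}{f}{q}$.

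The only minor wrinkle will be the boundary case in which no $\Phi$-consistent atom satisfies $p$: then $I_\Phi(p) = \emptyset$, so $\Phi \vdash p \to \false$, and the axiom $(\rJoin_0)$ together with $(\rWeak)$ directly produces $\Phi,\Psi \vdash \hoare{p}{f}{q}$. I expect this to be the only obstacle worth mentioning, and it is handled in one line. All the real content has been absorbed into Theorem~\ref{thm:complete}; the corollary is essentially a packaging result that lifts the completeness statement from ``$\Phi$-consistent atom''-shaped Hoare triples to arbitrary preconditions and postconditions.
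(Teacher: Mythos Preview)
Your proposal is correct and follows essentially the same approach as the paper's proof: the same cycle $(3)\Rightarrow(1)\Rightarrow(2)\Rightarrow(3)$, the same appeals to Theorem~\ref{thm:sound}, Lemma~\ref{lemma:free}, and Theorem~\ref{thm:complete}, and the same use of $(\rWeak)$ and iterated $(\rJoin)$ to pass from atoms to arbitrary $p$. Your explicit treatment of the degenerate case $I_\Phi(p)=\emptyset$ via $(\rJoin_0)$ is a small but welcome addition that the paper leaves implicit.
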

\begin{proof}
For the implication (1) $\Imp$ (2), recall that the free interpretation $\gI$ satisfies the hypotheses in $\Phi$ and $\Psi$ (Lemma~\ref{lemma:free}). So, it must be that $\gI$ satisfies $\hoare{p}f{q}$. For a $\Phi$-consistent atom with $\alpha \leq p$, we have that $\gI, \alpha \models p$ and hence $(\alpha,\gI(q))$ is in $\gI(f)$. But $\gI(q) = I_\Phi(q)$, and we thus conclude that $(\alpha,I_\Phi(q)) \in \gI(f)$.

We will prove now the implication (2) $\Imp$ (3). Theorem~\ref{thm:complete} says: $(\alpha,I_\Phi(q)) \in \gI(f)$ implies that
$\Phi,\Psi \vdash
 \hoare{\alpha}f{\bigvee I_\Phi(q)}
$.
So, we have the following deduction
\[
  \AxiomC{$(\alpha,I_\Phi(q))$ in $\gI(f)$}
  \RightLabel{(Thm~\ref{thm:complete})}
  \UnaryInfC{$\Phi,\Psi \vdash \hoare{\alpha}f{\bigvee I_\Phi(q)}$}
  \AxiomC{\hspace{-1em}$\Phi \vdash (\bigvee I_\Phi(q)) \to q$}
  \BinaryInfC{$\Phi,\Psi \vdash \hoare{\alpha}f{q}$}
  \AxiomC{\hspace{-0.5em}$\begin{array}[b]{l}
    \text{for $\alpha \in \AtC$} \\
    \text{with $\alpha \leq p$}
  \end{array}$}
  \RightLabel{($\rJoin$),}
  \BinaryInfC{$\Phi,\Psi \vdash \hoare{\bigvee I_\Phi(p)}f{q}$}
  \DisplayProof
\]
because $I_\Phi(p) = \{ \alpha \in \AtC \mid \alpha \leq p \}$. Finally, notice that $\Phi \vdash p \to \bigvee I_\Phi(p)$ and by the weakening rule we conclude that $\Phi,\Psi \vdash \hoare{p}f{q}$.

The implication (3) $\Imp$ (1) is the soundness result for our Hoare calculus, which we have already proved in Theorem~\ref{thm:sound}.
\end{proof}

Corollary~\ref{coro:complete} gives us a decision procedure for the weak Hoare theory of dual nondeterminism. Given a Hoare implication $\Phi,\Psi \Imp \hoare{p}f{q}$, we simply have to compute the free interpretation $\gI(f) \subseteq \AtC \times \wp\AtC$, which is a finite object. Observe that $\gI(f)$ is of doubly exponential size. We will see later that, with some more work, we can devise a faster algorithm of exponential complexity.

\section{Strong Hoare Theory: Completeness and Complexity}
\label{sec:completeB}

The completeness theorem of \S\ref{sec:completeA} concerns the theory generated by the class of all interpretations, that is, when the atomic programs are allowed to be interpreted as any game function. However, for most realistic applications the atomic actions $a, b, \ldots$ correspond to computational operations (e.g., variable assignments $x := t$, etc.) that involve no angelic nondeterministic choice. This leads us to consider a strictly smaller class of interpretations, and the question is raised of whether this smaller class has the same Hoare theory. This section is devoted to the in-depth study of the theory over this subclass of interpretations. We obtain both an unconditional completeness theorem and a complexity characterization.

\begin{defi}[Validity Over a Class of Interpretations]
We fix a language with atomic tests and atomic actions. Let $\class C$ be a class of interpretations of the atomic symbols (extending to all tests and programs in the usual way). We say that a simple Hoare implication $\Phi,\Psi \Imp \hoare{p}f{q}$ is \emph{valid in $\class C$} (or \emph{$\class C$-valid}) if every interpretation $I$ in $\class C$ satisfies the implication. We then write $\Phi,\Psi \models_{\class C} \hoare{p}f{q}$. The set of all $\class C$-validities is called the \emph{Hoare theory of $\class C$}.

Let $\All$ be the class of all interpretations. Observe that an implication is valid iff it is valid in $\All$. Now, let $\Dem \subseteq \All$ be the strict subclass of interpretations where the atomic actions are interpreted as non-angelic game functions.
\end{defi}

\begin{lemma}[Soundness]
\label{lemma:meetSound}
The rule $(\rMeet)$ of Figure~\ref{fig:meet}, where $a$ is an atomic action, is sound for the class $\Dem$ of interpretations.
\end{lemma}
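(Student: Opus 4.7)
The plan is to unfold the definitions of $\Dem$, lifting, and Hoare satisfaction to reduce the soundness claim to an elementary set-theoretic containment. I would first recall the form of $(\rMeet)$, which (being the natural meet-of-postconditions rule dual to $(\rJoin)$, restricted to atomic actions) should read: from $\Phi,\Psi \vdash \hoare{p}{a}{q_1}$ and $\Phi,\Psi \vdash \hoare{p}{a}{q_2}$ derive $\Phi,\Psi \vdash \hoare{p}{a}{q_1 \land q_2}$, where $a$ is atomic. To prove soundness, I fix an arbitrary interpretation $I \in \Dem$ that satisfies $\Phi \cup \Psi$ together with both premises, and show $I \models \hoare{p}{a}{q_1 \land q_2}$.

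The main step is a short calculation. Because $I \in \Dem$, the game function $I(a)$ is non-angelic, so by Definition~\ref{def:lift} there exists a nondeterministic function $k: S \nto S$ with $I(a) = \lift k$. Fix any state $u \in I(p)$. The first premise, being valid in $I$, gives $(u, I(q_1)) \in I(a) = \lift k$, which unfolds to $k(u) \subseteq I(q_1)$. The second premise gives $k(u) \subseteq I(q_2)$ by the same reasoning. Taking the intersection yields $k(u) \subseteq I(q_1) \cap I(q_2) = I(q_1 \land q_2)$, and folding the definition of $\lift$ back up produces $(u, I(q_1 \land q_2)) \in \lift k = I(a)$. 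Since $u$ ranges over all states satisfying $p$, we obtain $I \models \hoare{p}{a}{q_1 \land q_2}$.

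No step here is a real obstacle; the entire argument is a few lines. The subtle point worth emphasizing in the writeup is precisely why the rule is restricted both to atomic actions and to the class $\Dem$. For a general game function $\phi$, knowing that the angel has \emph{some} strategy guaranteeing $q_1$ and \emph{some} (possibly different) strategy guaranteeing $q_2$ does not imply the existence of a single strategy guaranteeing both; the non-angelic assumption forces the angel's minimal choice $k(u)$ to be unique and to witness every valid guarantee simultaneously. Since $\Dem$ only constrains the atomic symbols (composite programs may freely introduce $\ang$), the meet rule must likewise be confined to atomic $a$, explaining the shape of the rule and foreshadowing the need for the subsequent completeness argument over $\Dem$.
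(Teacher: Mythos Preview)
Your proof is correct and follows essentially the same approach as the paper's: both fix $I \in \Dem$, unfold the non-angelic assumption to obtain a unique minimal predicate at each state (the paper writes it as $X$ with $I(a)(u) = \{Y \mid X \subseteq Y\}$, you write it as $k(u)$ via $I(a) = \lift k$), and then intersect the two postcondition inclusions. The additional commentary you give on why the rule is restricted to atomic actions and to $\Dem$ is accurate and matches the paper's surrounding discussion.
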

\begin{proof}
Let $I$ be an interpretation in the class $\Dem$, which means that the game function $I(a): S \gto S$ is non-angelic. Suppose that $I$ satisfies the premises of the rule $(\rMeet)$, and also that it satisfies the hypotheses $\Phi$ and $\Psi$. It follows that $I$ satisfies the assertions $\hoare{p}a{q_1}$ and $\hoare{p}a{q_2}$. We have to show that $I$ also satisfies the assertion $\hoare{p}a{q_1 \land q_2}$. Let $u$ be a state with $u \in I(p)$. Then, we have that $(u,I(q_1)) \in I(a)$ and $(u,I(q_2)) \in I(a)$. Since $I(a)$ is non-angelic, there exists a unique subset $X \subseteq S$ such that $I(a)(u) = \{ Y \subseteq S \mid X \subseteq Y \}$. But $I(q_1)$ and $I(q_2)$ are both in $I(a)(u)$, which means that $X \subseteq I(q_1)$ and $X \subseteq I(q_2)$. We thus obtain that $X \subseteq I(q_1) \cap I(q_2) = I(q_1 \land q_2)$, and therefore $(u,I(q_1 \land q_2)) \in I(a)$. So, $I \models \hoare{p}a{q_1 \land q_2}$, and the proof is complete.
\end{proof}

\begin{figure}[t!]
\normalsize\centering
$\begin{gathered}
\AxiomC{$\Phi,\Psi \vdash \hoare{p}a{q_1}$}
\AxiomC{$\Phi,\Psi \vdash \hoare{p}a{q_2}$}
\RightLabel{($a$-$\rMeet$)}
\BinaryInfC{$\Phi,\Psi \vdash \hoare{p}a{q_1 \land q_2}$}
\DisplayProof
\end{gathered}$
\caption{A rule that is sound when the atomic actions are interpretated as non-angelic game functions. That is, $(\rMeet)$ is sound for the class $\Dem$.}
\label{fig:meet}
\end{figure}

Lemma~\ref{lemma:meetSound} also establishes that the Hoare theory of $\Dem$ is different from the Hoare theory of $\All$. Strictly more implications hold, when we restrict attention to the interpretations of $\Dem$. For example, consider the set of hypotheses $\Psi$, which consists of the two simple assertions $\hoare{p}a{q}$ and $\hoare{p}a{r}$, where $p,q,r$ are distinct atomic tests. Observe that the implication $\Psi \Imp \hoare{p}a{q \land r}$ is valid in $\Dem$ (by Lemma~\ref{lemma:meetSound}), but it is not valid in $\All$ (by virtue of Corollary~\ref{coro:complete}).

\begin{defi}[The Free Non-Angelic Interpretation]
\label{def:freeDem}
Let $\Phi$ be a finite set of tests, and $\Psi$ be a finite set of simple Hoare assertions. For an atomic action $a$, define the nondeterministic interpretation $R_{\Phi\Psi}(a): \AtC \to \wp\AtC$ as
\[
  R_{\Phi\Psi}(a)(\alpha) \triangleq
  \{ \beta \in \AtC \mid
     \text{for every $\hoare{p}a{q} \in \Psi$, $\alpha \leq p$ implies that $\beta \leq q$}
  \}.
\]
We define the \emph{free non-angelic interpretation} $\dI$ (w.r.t.\ $\Phi$ and $\Psi$) to have $\AtC$ as state space, and to interpret the tests as $I_\Phi$ (the free test interpretation w.r.t.\ $\Phi$, see Definition~\ref{def:freeTestI}) does. Moreover, the interpretation $\dI(a): \AtC \gto \AtC$  of the atomic action $a$ is the lifting of $R_{\Phi\Psi}(a)$, that is, it is given by $\dI(a) = \lift R_{\Phi\Psi}(a)$.
\end{defi}

\begin{lemma}
\label{lemma:freeDet}
Let $\Phi$ be a finite set of tests, and $\Psi$ be a finite set of simple Hoare assertions. The free non-angelic interpretation $\dI$ satisfies both $\Phi$ and $\Psi$. \qed
\end{lemma}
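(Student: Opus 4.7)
The plan is to verify the two satisfaction claims separately, both by straightforward unfolding of the definitions.

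For $\Phi$: every test $p \in \Phi$ is interpreted by $\dI$ exactly as $I_\Phi(p)$. By definition of $\Phi$-consistency, every atom $\alpha \in \AtC$ satisfies $\alpha \leq p$, so $I_\Phi(p) = \AtC$, which is the whole state space of $\dI$. Hence $\dI \models p$ for each $p \in \Phi$, as required.

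For $\Psi$: fix an assertion $\hoare{p}a{q} \in \Psi$ and an atom $\alpha \in \AtC$ with $\dI, \alpha \models p$, i.e., $\alpha \leq p$. We must show $(\alpha, \dI(q)) \in \dI(a)$. Since $\dI(a) = \lift R_{\Phi\Psi}(a)$, this amounts (by Definition~\ref{def:lift}) to showing the inclusion $R_{\Phi\Psi}(a)(\alpha) \subseteq \dI(q) = I_\Phi(q)$. Pick $\beta \in R_{\Phi\Psi}(a)(\alpha)$. By the defining condition of $R_{\Phi\Psi}(a)(\alpha)$, applied to the particular hypothesis $\hoare{p}a{q} \in \Psi$ together with $\alpha \leq p$, we conclude $\beta \leq q$, i.e., $\beta \in I_\Phi(q)$. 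This establishes the desired inclusion.

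There is no real obstacle here; the lemma is essentially a sanity check that the construction in Definition~\ref{def:freeDem} is well-posed. The only mild subtlety is to notice that the universal quantification over hypotheses built into $R_{\Phi\Psi}(a)$ is exactly what is needed to make a single hypothesis fire and that the switch between ``$\beta \leq q$'' and ``$\beta \in I_\Phi(q)$'' is immediate from Definition~\ref{def:freeTestI}. Combining the two parts yields $\dI \models \Phi$ and $\dI \models \Psi$, completing the proof.
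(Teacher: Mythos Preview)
Your proof is correct and is exactly the natural unfolding of the definitions; the paper in fact omits the proof entirely (the lemma is stated with a bare \qed), so your argument is the intended straightforward verification.
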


Recall that we used the symbol $\vdash$ in \S\ref{sec:hoare} to denote provability in the Hoare-style system of Figure~\ref{fig:angHL}. Now, we will use the symbol $\vdash_d$ to denote provability in the system that extends the calculus of Figure~\ref{fig:angHL} with the additional rule $(\rMeet)$ shown in Figure~\ref{fig:meet}.

\begin{theorem}[Completeness]
\label{thm:completeB}
Let $\Phi$ be a finite set of tests, and $\Psi$ be a finite set of simple Hoare assertions. For every program term $f$ and every $\Phi$-consistent atom $\alpha$,
\[
  \text{$(\alpha,Y) \in \dI(f)$ implies that $\Phi,\Psi \vdash_d \hoare{\alpha}f{\tbigvee Y}$}.
\]
\end{theorem}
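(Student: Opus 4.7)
The plan is to prove this by induction on the structure of $f$, mirroring the proof of Theorem~\ref{thm:complete} and only modifying the base case for atomic actions. All composite cases ($\id$, $\bot$, $f;g$, $p[f,g]$, $\wh p f$, $f \ang g$, $f \dem g$) go through verbatim, because the inductive step in each of those cases uses only the semantic definition of the corresponding operation on game functions, together with the inductive hypothesis and the rules of the Hoare calculus that are already in Figure~\ref{fig:angHL}. So I will focus on how the base case for atomic actions changes under the free non-angelic interpretation $\dI$, which is where the additional rule $(\rMeet)$ (more precisely $(a\text{-}\rMeet)$) becomes essential.

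Fix an atomic action $a$ and a $\Phi$-consistent atom $\alpha$, and suppose $(\alpha,Y) \in \dI(a) = \lift R_{\Phi\Psi}(a)$. Unfolding the lifting, this is equivalent to $R_{\Phi\Psi}(a)(\alpha) \subseteq Y$. Let
\[
  \Psi_\alpha \;\triangleq\;
  \{\,\hoare{p}{a}{q} \in \Psi \mid \alpha \leq p\,\},
\]
the hypotheses about $a$ whose precondition is witnessed by $\alpha$. Directly from Definition~\ref{def:freeDem} one sees that $R_{\Phi\Psi}(a)(\alpha)$ is exactly the set of $\Phi$-consistent atoms satisfying \emph{every} postcondition $q$ appearing in $\Psi_\alpha$; equivalently, $R_{\Phi\Psi}(a)(\alpha) = \bigcap_{\hoare{p}{a}{q} \in \Psi_\alpha} I_\Phi(q)$. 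The containment $R_{\Phi\Psi}(a)(\alpha) \subseteq Y$ then translates, via the completeness of the Boolean calculus (Note~\ref{note:bool}), to the propositional entailment
\[
  \Phi \;\vdash\; \bigl(\tbigwedge_{\hoare{p}{a}{q} \in \Psi_\alpha} q\bigr) \;\to\; \tbigvee Y.
\]

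Next I would build the derivation of $\Phi,\Psi \vdash_d \hoare{\alpha}{a}{\tbigvee Y}$ as follows. For each hypothesis $\hoare{p}{a}{q}$ in $\Psi_\alpha$, rule $(\rHyp)$ yields $\hoare{p}{a}{q}$, and $(\rWeak)$ together with $\Phi \vdash \alpha \to p$ (which holds because $\alpha \leq p$) produces $\hoare{\alpha}{a}{q}$. Applying the new rule $(a\text{-}\rMeet)$ of Figure~\ref{fig:meet} iteratively across the (finitely many) elements of $\Psi_\alpha$ gives
\[
  \Phi,\Psi \;\vdash_d\; \hoare{\alpha}{a}{\tbigwedge_{\hoare{p}{a}{q} \in \Psi_\alpha} q}.
\]
One final application of $(\rWeak)$ using the entailment displayed above delivers $\hoare{\alpha}{a}{\tbigvee Y}$. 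The corner case $\Psi_\alpha = \emptyset$ gives $R_{\Phi\Psi}(a)(\alpha) = \AtC$, hence $Y = \AtC$ and $\tbigvee Y \equiv \true$, so the axiom $(\rMeet_0)$ (or its atomic weakening from Observation~\ref{obs:trivialRules}) handles this case directly without any conjunction.

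Once the atomic case is done, the remaining cases are literally copies of the corresponding cases in the proof of Theorem~\ref{thm:complete}, with $\gI$ replaced everywhere by $\dI$: sequential composition uses $(\rSeq)$ together with the decomposition of $\dI(f;g) = \dI(f) \gc \dI(g)$; conditionals use $(\rCond)$; angelic and demonic choice use $(\rAng_i)$ and $(\rDem)$ applied to $\dI(f \ang g) = \dI(f) \cup \dI(g)$ and $\dI(f \dem g) = \dI(f) \cap \dI(g)$; and the while-loop case reuses the same finite-fixpoint / invariant-set construction (defining the $V_i$ and $U_i$ approximants, showing that $\gamma \in V$ and establishing $\exists$-closure of $V$ under the body), after which $(\rLoop)$ and $(\rJoin)$ close the derivation. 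The only subtle point to double-check is that the while-loop argument relied solely on properties of the free interpretation that still hold here, namely that $\dI(f)$ lives in the finite lattice of game functions on the finite set $\AtC$ (so the greatest-fixpoint iteration closes at a finite stage) and that the inductive hypothesis supplies derivations $\hoare{\alpha}{f}{\bigvee V}$ at each witness atom. Both remain true. I expect the main obstacle to be strictly bookkeeping: making the correspondence between $R_{\Phi\Psi}(a)(\alpha) \subseteq Y$ and the conjunction-of-postconditions entailment unambiguous, and verifying carefully that $(a\text{-}\rMeet)$ can be applied only finitely many times (which it can, since $\Psi$ is finite by hypothesis).
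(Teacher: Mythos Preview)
Your proposal is correct and follows essentially the same approach as the paper's own proof: induction on the structure of $f$, with all composite cases carried over verbatim from Theorem~\ref{thm:complete}, and the atomic-action base case redone using iterated $(a\text{-}\rMeet)$ to conjoin the postconditions of the applicable hypotheses in $\Psi$, followed by weakening (with the empty case handled by $(\rMeet_0)$). The paper's presentation differs only cosmetically, first proving $\hoare{\alpha}{a}{\bigvee X}$ for $X = R_{\Phi\Psi}(a)(\alpha)$ and then weakening to $\bigvee Y$ via $X \subseteq Y$, whereas you fold these two weakenings into one.
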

\begin{proof}
We will only consider the base case of an atomic program $a$. All the other cases are handled exactly as in Theorem~\ref{thm:complete}, so we omit them. Let $\alpha$ be a $\Phi$-consistent atom. Define
\begin{align*}
X =
R_{\Phi\Psi}(a)(\alpha) &=
\{ \beta \in \AtC \mid
   \text{for all $\hoare{p}a{q} \in \Psi$: $\alpha \leq p$ implies $\beta \leq q$}
\}
\\ &=
I_\Phi \bigl(
  \tbigwedge Q
\bigr),\ 
\text{where $Q = \{ q \mid \text{$\hoare{p}a{q} \in \Psi$ and $\alpha \leq p$} \}$}.
\end{align*}
The claim is that $\Phi,\Psi \vdash_d \hoare{\alpha}a{\bigvee X}$. If the set $Q$ of tests (defined above) is empty, then $\bigwedge Q = \true$ and $X = I_\Phi(\bigwedge Q) = \AtC$. We have the derivation
\[ 
  \AxiomC{$$}
  \RightLabel{$(\rMeet_0)$}
  \UnaryInfC{$\Phi,\Psi \vdash_d \hoare{\alpha}a{\true}$}
  \AxiomC{$\Phi \vdash \true \to \bigvee\AtC$}
  \RightLabel{($\rWeak$).}
  \BinaryInfC{$\Phi,\Psi \vdash_d \hoare{\alpha}a{\true$}}
  \DisplayProof
\]
Now, we can assume that $Q$ is not empty. Using the extra rule $(\rMeet)$ we obtain
\[
  \AxiomC{$\alpha \leq p$}
  \UnaryInfC{$\Phi \vdash \alpha \to p$}
  \AxiomC{$\hoare{p}a{q}$ in $\Psi$}
  \RightLabel{($\rHyp$)}
  \UnaryInfC{$\Phi,\Psi \vdash_d \hoare{p}a{q}$}
  \RightLabel{($\rWeak$)}
  \BinaryInfC{$\Phi,\Psi \vdash_d \hoare{\alpha}a{q}$}
  \AxiomC{$\begin{array}[b]{l}
    \text{for every assertion} \\
    \text{$\hoare{p}a{q}$ in $\Psi$ with} \\
    \text{$\alpha \leq p$}
  \end{array}$}
  \RightLabel{($\rMeet$).}
  \BinaryInfC{(1) $\Phi,\Psi \vdash_d \hoare{\alpha}a{\bigwedge Q}$}
  \DisplayProof
\]
Finally, from $X = I_\Phi(\bigwedge Q)$ we obtain that $\Phi \vdash \bigwedge Q \to \bigvee X$, and using the weakening rule we conclude that
$\Phi,\Psi \vdash_d
 \hoare{\alpha}a{\bigvee X}
$.

Now, let $(\alpha,Y)$ be an arbitrary pair in $\dI$. It follows that $X \subseteq Y$, where $X$ was defined in the previous paragraph. So,
\[
  \AxiomC{$\Phi,\Psi \vdash_d
 \hoare{\alpha}a{\bigvee X}$}
  \AxiomC{$X \subseteq Y \subseteq \AtC$}
  \UnaryInfC{$\Phi \vdash \bigvee X \to \bigvee Y$}
  \RightLabel{($\rWeak$)}
  \BinaryInfC{$\Phi,\Psi \vdash_d \hoare{\alpha}a{\bigvee Y}$}
  \DisplayProof
\]
and the proof is complete.
\end{proof}

\begin{corollary}[Completeness]
\label{coro:completeB}
Let $\Phi$ and $\Psi$ be finite sets of tests and simple Hoare assertions respectively. For every program $f$, the following are equivalent:
\begin{enumerate}
\item
$\Phi,\Psi \models_\Dem \hoare{p}f{q}$.
\item
For every $\Phi$-consistent $\alpha \leq p$, the pair $(\alpha,I_\Phi(q))$ is in $\dI(f)$.
\item
$\Phi,\Psi \vdash_d \hoare{p}f{q}$.
\end{enumerate}
\end{corollary}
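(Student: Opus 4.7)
The plan is to mimic the structure of the proof of Corollary~\ref{coro:complete}, establishing the cycle (1) $\Imp$ (2) $\Imp$ (3) $\Imp$ (1). All the heavy lifting has been done by Theorem~\ref{thm:completeB}, Lemma~\ref{lemma:freeDet}, Theorem~\ref{thm:sound}, and Lemma~\ref{lemma:meetSound}; the corollary is essentially the packaging step.

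For (1) $\Imp$ (2), the key observation is that the free non-angelic interpretation $\dI$ actually belongs to the class $\Dem$: by Definition~\ref{def:freeDem} we have $\dI(a) = \lift R_{\Phi\Psi}(a)$, which is non-angelic by Definition~\ref{def:lift}. Combined with Lemma~\ref{lemma:freeDet}, which states that $\dI$ satisfies the hypotheses $\Phi$ and $\Psi$, the assumption $\Phi,\Psi \models_\Dem \hoare{p}f{q}$ forces $\dI \models \hoare{p}f{q}$. Unpacking Definition~\ref{def:simple}: for every $\Phi$-consistent atom $\alpha$ with $\alpha \leq p$, we get $\alpha \in I_\Phi(p) = \dI(p)$, hence $(\alpha, \dI(q)) \in \dI(f)$; and since $\dI(q) = I_\Phi(q)$ we obtain (2).

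For (2) $\Imp$ (3), I would apply Theorem~\ref{thm:completeB} to each $\Phi$-consistent $\alpha \leq p$: from $(\alpha, I_\Phi(q)) \in \dI(f)$ it yields $\Phi,\Psi \vdash_d \hoare{\alpha}f{\bigvee I_\Phi(q)}$. Since Boolean completeness (Note~\ref{note:bool}) gives $\Phi \vdash \bigvee I_\Phi(q) \to q$, one application of ($\rWeak$) produces $\Phi,\Psi \vdash_d \hoare{\alpha}f{q}$. Then, by finitely many applications of ($\rJoin$) across the set $\{\alpha \in \AtC \mid \alpha \leq p\} = I_\Phi(p)$, we derive $\Phi,\Psi \vdash_d \hoare{\bigvee I_\Phi(p)}f{q}$, and a final ($\rWeak$) using $\Phi \vdash p \to \bigvee I_\Phi(p)$ gives the conclusion $\Phi,\Psi \vdash_d \hoare{p}f{q}$. (If $I_\Phi(p)$ happens to be empty, which means $\Phi \vdash \neg p$, we use ($\rJoin_0$) and ($\rWeak$) instead.)

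For (3) $\Imp$ (1), this is the soundness direction for $\vdash_d$ over the class $\Dem$. Theorem~\ref{thm:sound} already proved soundness of the rules of Figure~\ref{fig:angHL} over the class $\All$ of all interpretations, and $\Dem \subseteq \All$, so these rules remain sound. Lemma~\ref{lemma:meetSound} provides soundness of the extra rule ($\rMeet$) specifically over $\Dem$. By a routine induction on derivations in $\vdash_d$, every provable implication holds in every interpretation in $\Dem$. I do not expect any real obstacle here; the only subtle point is the verification that $\dI \in \Dem$, which is immediate from the fact that $\dI$ is defined as a lifting, and this is what makes the completeness argument transfer cleanly from the $\All$ setting to the restricted $\Dem$ setting.
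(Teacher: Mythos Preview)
Your proposal is correct and follows exactly the approach the paper intends: the paper's proof simply reads ``Similar to the proof of Corollary~\ref{coro:complete},'' and you have filled in those details accurately, including the one new ingredient that has to be checked here, namely that $\dI \in \Dem$ (so that (1) applies to it), together with the soundness of the extra rule ($\rMeet$) over $\Dem$ via Lemma~\ref{lemma:meetSound}.
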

\begin{proof}
Similar to the proof of Corollary~\ref{coro:complete}.
\end{proof}

The results so far imply that the Hoare theory of the class $\Dem$, which we also call the \emph{strong Hoare theory} of while game schemes, can be reduced to the weak Hoare theory of the class $\All$ (with an exponential blowup in the size of the instance). Let $\Phi,\Psi \Imp \hoare{p}f{q}$ be an arbitrary Hoare implication. W.l.o.g.\ the axioms in $\Psi$ are of the form $\hoare{\alpha}a{q}$, where $\alpha$ is an atom and $a$ is an atomic action. Now, define $\Psi'$ to be the set of hypotheses that results from $\Psi$ by replacing the axioms $\hoare{\alpha}a{q_i}$ involving $\alpha, a$ by a single axiom $\hoare{\alpha}a{\bigwedge_i q_i}$. The crucial observation is that the interpretation $\dI$ is the same as $I_{\Phi\Psi'}$. Using our two completeness results of Corollary~\ref{coro:complete} and Corollary~\ref{coro:completeB}, it follows that $\Phi,\Psi \vdash_d \hoare{p}f{q}$ iff $\Phi,\Psi' \vdash \hoare{p}f{q}$.

Now, we will investigate the computational complexity of the strong Hoare theory of while game schemes. We prove that this theory is complete for exponential time. In order to obtain the $\EXPTIME$ upper bound, we consider an operational model that corresponds to the free game interpretation. The operational model is a safety game on a finite graph, and we can decide validity by computing the winning regions of the players. The full abstraction result of \S\ref{sec:denotational} says that our denotational semantics coincides in a precise sense to the operational semantics. The lower bound of $\EXPTIME$-hardness is obtained with a reduction from alternating Turing machines with polynomially bounded tapes.

\begin{theorem}[Complexity Upper \& Lower Bound]
\label{thm:exptime}
The strong Hoare theory of while game schemes (the validities over the class $\Dem$) is $\EXPTIME$-complete.
\end{theorem}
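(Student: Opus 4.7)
The plan is to prove the two directions separately. For the upper bound, I would reduce validity of a simple Hoare implication $\Phi,\Psi \Imp \hoare{p}f{q}$ over the class $\Dem$ to solving a safety game of exponential size. By Corollary~\ref{coro:completeB}, the implication is $\Dem$-valid iff $(\alpha, I_\Phi(q)) \in \dI(f)$ for every $\Phi$-consistent atom $\alpha \leq p$. Since $\dI$ interprets every atomic action as a non-angelic game function, Lemma~\ref{lemma:chain} tells us that the chain property holds at the atoms, so Theorem~\ref{thm:semantics} (full abstraction) applies: this membership is equivalent to the angel having a winning strategy from $(\alpha,f)$ in the safety game $G_{\dI}(f,\compl I_\Phi(q))$. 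So the decision problem reduces to computing a winning region in a safety game.

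Next I would bound the size of this game. The state space $\AtC$ has at most $2^k$ elements (where $k$ is the number of atomic tests appearing in the instance). By Lemma~\ref{lemma:closureReach}\eqref{part:size}, $|C(f)| \leq 2|f|$. The set $\mathcal{X}$ of ``intermediate predicates'' arising from atomic actions is, for the free interpretation $\dI$, just the collection of sets $R_{\Phi\Psi}(a)(\alpha)$ for $\alpha \in \AtC$ and atomic $a$ occurring in $f$, all of which are computable in time polynomial in the truth tables. Consequently the game graph $G_{\dI}(f,\compl I_\Phi(q))$ has $2^{O(k)} \cdot |f|$ vertices and can be constructed in exponential time. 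Since safety games on finite graphs are solvable in time polynomial in the size of the graph (standard backwards-fixpoint computation of $W_\forall^\kappa$, which stabilizes after finitely many iterations), we obtain an overall $\EXPTIME$ algorithm. I would then iterate over all $\alpha \in I_\Phi(p)$ (again at most exponentially many) and check the winning condition for each.

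For the lower bound, I would give a polynomial-time reduction from the acceptance problem for alternating Turing machines running in polynomial space, which is complete for $\APSPACE = \EXPTIME$. Given such a machine $M$ with polynomial space bound $s(n)$ and an input $w$, I would use polynomially many atomic tests to encode a configuration: one test $C_i^\sigma$ per cell/symbol pair, a test $H_i$ per head position, and a test $Q_q$ per control state, together with tests $\accept$ and $\reject$. The set $\Phi$ enforces that at every reachable state exactly one symbol, head position, and control state are recorded, and that $\accept$, $\reject$ are disjoint. For each transition of $M$ I introduce an atomic action whose simple Hoare hypotheses in $\Psi$ describe its deterministic effect on the encoded configuration. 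The program $f$ is a while loop that at each step inspects (via a nest of conditionals) the currently recorded state and head-cell symbol, and then executes the available transitions under an angelic choice $\ang$ at existential states of $M$ and a demonic choice $\dem$ at universal states; the loop exits when $\accept \lor \reject$ holds. The implication to decide is then $\Phi,\Psi \Imp \hoare{p_w}{f}{\accept}$, where $p_w$ encodes the initial configuration on $w$. By the intended operational semantics (Theorem~\ref{thm:semantics}), validity in $\Dem$ of this implication corresponds exactly to the existential player winning the alternation game of $M$ on $w$. The construction is clearly polynomial in $|M| + |w|$, establishing $\EXPTIME$-hardness.

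The main obstacle I expect is in the lower-bound reduction: making sure the encoding of transitions by non-angelic atomic actions plus Boolean test hypotheses in $\Phi$ is faithful, in particular that the free non-angelic interpretation $\dI$ does not permit the angel to ``cheat'' by guaranteeing a stronger postcondition than the one dictated by the genuine transition of $M$. This is handled by choosing $\Psi$ to include, for each transition action $a$ and each atom $\alpha$ describing a configuration where $a$ is enabled, a single simple assertion $\hoare{\alpha}{a}{q_\alpha}$ with $q_\alpha$ completely specifying the successor configuration; by construction of $\dI$ (Definition~\ref{def:freeDem}) this forces $\dI(a)$ to be (the lifting of) the intended deterministic transition function. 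The upper-bound direction is, in contrast, a routine combination of the full abstraction theorem with the standard polynomial-time algorithm for safety games.
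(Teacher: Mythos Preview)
Your upper-bound argument is correct and essentially identical to the paper's: reduce via Corollary~\ref{coro:completeB} and Theorem~\ref{thm:semantics} to computing the $\exists$-winning region of the safety game $G_{\dI}(f,\compl I_\Phi(q))$, observe that with the restricted vertex set $(\AtC \times C(f)) \cup (\mathcal{X} \times C(f))$ the game has exponential size, and solve it in time polynomial in the game.

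Your lower-bound sketch is also along the same lines as the paper (encode an $\APSPACE$ machine as a single while loop, with angelic choices at or-states and demonic choices at and-states), but there is a real gap in your ``obstacle'' paragraph. You propose to secure faithfulness by putting into $\Psi$ one simple assertion $\hoare{\alpha}{a}{q_\alpha}$ \emph{per atom} $\alpha$. Since your encoding introduces polynomially many atomic tests (one per cell/symbol, one per head position, one per control state), there are exponentially many $\Phi$-consistent atoms, and this would make $\Psi$ exponential in size---destroying the polynomial-time reduction you claim two sentences later.

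The paper handles faithfulness differently, and this is the point you are missing. Instead of one action per transition with atom-indexed hypotheses, it factors each transition into three atomic actions $\writeAt b$, $\move d$, $\switch{q'}$, each axiomatized by polynomially many \emph{local} Hoare triples (one per position, per tape symbol, per state, together with frame conditions). The key is that the very definition of the free non-angelic interpretation,
\[
  R_{\Phi\Psi}(a)(\alpha) =
  \{\beta \in \AtC \mid \text{for every } \hoare{p}{a}{q}\in\Psi,\ \alpha\leq p \text{ implies } \beta\leq q\},
\]
already intersects the postconditions of \emph{all} applicable hypotheses. So a polynomial family of local hypotheses suffices to pin $R_{\Phi\Psi}(a)(\alpha)$ down to the unique correct successor atom; you do not need, and must not use, one hypothesis per atom. (Incidentally, since $\dI(a)$ is non-angelic, the worry is not the angel ``guaranteeing too much'' but the demon having spurious successors when $R_{\Phi\Psi}(a)(\alpha)$ is too large; the local-hypothesis scheme rules that out.) Your one-action-per-transition variant can be made to work the same way---just replace the atom-indexed hypotheses by polynomially many local frame conditions.
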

\begin{proof}
We first deal with the upper bound. Let $\Phi$ be a finite set of tests, $\Psi$ be a finite set of simple Hoare assertions, and $\hoare{p}f{q}$ be a Hoare assertion. We want to decide whether the simple Hoare implication $\Phi,\Psi \Imp \hoare{p}f{q}$ is valid, equivalently, whether $\Phi,\Psi \vdash \hoare{p}f{q}$. Let $X = I_\Phi(q)$.
According to the completeness result of Corollary~\ref{coro:complete}, we need to check whether $(\alpha,X) \in \dI(f)$ for every $\Phi$-consistent $\alpha \leq p$. By Theorem~\ref{thm:semantics}, this is equivalent to $I_\Phi(p) \times \{f\}$ being contained in the winning region of Player $\exists$ in the safety game $G_{\dI}(f,\compl X)$. Observe in the proof of Theorem~\ref{thm:semantics} that the full abstraction result remains unchanged if in the safety game $G_{\dI}(f,\compl X)$ we only consider the vertices
\begin{align*}
V &=
(\AtC \times C(f)) \cup (\mathcal{X} \times C(f)),
\ \text{where}
\\
\mathcal{X} &= \{
  R_{\Phi\Psi}(a)(\alpha) \mid
  \text{atomic action $a \in C(f)$ and $\alpha \in \AtC$}
\}.
\end{align*}
With this modification, the game $G_{\dI}(f,\compl X)$ is of size exponential in the size of the input: there are exponentially many $\Phi$-consistent atoms, and linearly many terms in $C(f)$ (see Part~\eqref{part:size} of Lemma~\ref{lemma:closureReach}). We can compute the winning regions of $G_{\dI}(f,\compl X)$ in time polynomial in the size of the game. So, overall we need time exponential in the size of the input to decide whether the implication is valid.

We can prove the lower bound by encoding the computations of polynomial-space bounded alternating Turing machines \cite{chandra1981}, since $\EXPTIME = \APSPACE$. An alternating machine consists of the following components: states $Q = Q_\text{and} \cup Q_\text{or}$ (partitioned into and-states \& or-states), input alphabet $\Sigma$, tape alphabet $\Gamma$, blank symbol $\B \in \Gamma$, start state $q_0$, and transition relation
\[
  \Delta \subseteq (Q \times \Gamma) \times (Q \times \Gamma \times \{ -1,0,+1 \}).
\]
We use letters $q, q', \ldots$ to range over states, and $a, b, \ldots$ to range over alphabet symbols. A transition $\lt (q,a),(q',b,d) \rt \in \Delta$ says that if the machine is in state $q$ and is scanning the symbol $a$, then it spawns a new process with its own copy of the tape in which the state is set to $q'$, the symbol $b$ is written over the current position, and the cursor moves by $d$. If $d = -1$ ($d = +1$) the cursor moves one position to the left (right), and if $d = 0$ the cursor stays in the same position. The machine accepts (rejects) if it halts at an and-state (or-state).

The idea is to simulate the alternating machine with a while program scheme that consists of a single while loop. The loop corresponds to the execution loop of the machine, and the body of the loop encodes the transition and process spawning rules (see Figure~\ref{fig:machine}). Without loss of generality we can assume that every computation path halts.

\begin{figure}[t]
\normalsize
\addtolength{\jot}{-3pt}
\centering
$\begin{aligned}
\mathbf{program} \triangleq {}
&\whileDo{(\neg \halt)}{}
\\ &\qquad
\kwIf\ (S_{q_1} \land P_{a_1})\ \kwThen
\\ &\qquad\qquad
\text{take transitions from $(q_1,a_1)$}
\\ &\qquad
\kwElse\ \kwIf\ (S_{q_2} \land P_{a_2})\ \kwThen
\\ &\qquad\qquad
\text{take transitions from $(q_2,a_2)$}
\\[-1.5ex] &\qquad\quad
\vdots
\\ &\qquad
\kwElse\ \kwIf\ (S_{q_m} \land P_{a_m})\ \kwThen
\\ &\qquad\qquad
\text{take transitions from $(q_m,a_m)$}
\\ &\qquad
\kwElse\ \id
\end{aligned}$
\caption{While game scheme that encodes the behavior of an alternating Turing machine.}
\label{fig:machine}
\end{figure}

We introduce atomic tests $P^a_i$ for every tape symbol $a \in \Gamma$ and every position $i$. Intuitively, $P^a_i$ is true when the tape has symbol $a$ at position $i$. The hypotheses
\[\textstyle
  \text{$\bigwedge_i\bigvee_a P^a_i$ \ and \  $\bigwedge_i\bigwedge_{a \neq b} \neg(P^a_i \land P^b_j)$}
\]
say that every position is associated with a unique symbol. We also have atomic tests $C_i$ for every position $i$. The test $C_i$ is true when the cursor is scanning the $i$-th position of the tape. We require that
\[
  \tbigvee_i C_i
  \quad\text{and}\quad
  \tbigwedge_{i \neq j} \neg(C_i \land C_j).
\]
For every state $q \in Q$ of the machine, we introduce an atomic test $S_q$. The test $S_q$ is true when the machine is in state $q$, so we demand:
\[
  \tbigvee_q S_q
  \quad\text{and}\quad
  \tbigwedge_{q \neq q'} \neg(S_q \land S_{q'}).
\]
The machine halts when it is in a state $q$ and the cursor is scanning a symbol $a$ so that the pair $(q,a)$ has no $\Delta$-successor. In this case, we say that the pair $(q,a)$ is a dead-end. So, we define the abbreviations
\begin{align*}
P_a &\triangleq \tbigvee_i (C_i \land P^a_i)
&
\halt &\triangleq
\tbigvee_\text{$q,a$ where $(q,a)$ is dead-end} 
  (S_q \land P_a)
\end{align*}
where $P_a$ says that the currently scanned symbol is $a$, and $\halt$ asserts that the machine can take no transition. Moreover, we define the abbreviations
\begin{align*}
\accept &\triangleq
\halt \land (\tbigvee_{q \in Q_\text{and}} S_q)
&
\reject &\triangleq
\halt \land (\tbigvee_{q \in Q_\text{or}} S_q)
\end{align*}
that describe acceptance and rejection respectively in terms of the atomic tests.

%We introduce atomic programs $\mathsf{accept}$ and $\mathsf{reject}$ that set the test $A$ to true and false respectively. Moreover, they leave all other tests unchanged. So, we take the assumptions
%\begin{align*}
%\hoare{\true}{&\mathsf{accept}}{A}
%&
%\hoare{\true}{&\mathsf{reject}}{\neg A}
%\\
%\hoare{P^a_i}{&\mathsf{accept}}{P^a_i}
%&
%\hoare{P^a_i}{&\mathsf{reject}}{P^a_i}
%\\
%\hoare{C_i}{&\mathsf{accept}}{C_i}
%&
%\hoare{C_i}{&\mathsf{reject}}{C_i}
%\\
%\hoare{S_q}{&\mathsf{accept}}{S_q}
%&
%\hoare{S_q}{&\mathsf{reject}}{S_q}
%\end{align*}
%where $i$ ranges over all positions, $a$ ranges over all tape symbols, and $q$ ranges over all machine states.

The atomic program $\writeAt a$ writes the symbol $a$ on the tape at the position where the cursor is, and leaves everything else unchanged. So, we take the following hypotheses for it:
\begin{align*}
\hoare{C_i}{&\writeAt a}{P^a_i}
&
\hoare{C_i}{&\writeAt a}{C_i}
\\
\hoare{C_i \land P^b_j}{&\writeAt{a}}{P^b_j},\ \text{for $j \neq i$}
&
\hoare{S_q}{&\writeAt a}{S_q}
\end{align*}
where $i,j$ range over all positions, $b$ ranges over all tape symbols, and $q$ ranges over all machine states. The atomic program $\move d$, where $d \in \{-1,0,1\}$, moves the cursor by $d$. The tape and the machine state remain unchanged.
\begin{align*}
\hoare{C_i}{&\move d}{C_{i+d}}
&
\hoare{P^a_j}{&\move d}{P^a_j}
&
\hoare{S_q}{&\move d}{S_q}
\end{align*}
where $i$ ranges over all positions for which $i+d$ is also a position, $j$ ranges over all positions, $a$ ranges over all tape symbols, and $q$ ranges over all machine states. Finally, we introduce the atomic program $\switch q$, which changes the state of the machine into $q$. The tape and the cursor position remain unchanged.
\begin{align*}
\hoare{\true}{&\switch q}{S_q}
&
\hoare{C_i}{&\switch q}{C_i}
&
\hoare{P^a_i}{&\switch q}{P^a_i}
\end{align*}
where $i$ ranges over all positions, and $a$ ranges over all tape symbols. Suppose that $(q,a)$ is a state-symbol pair that has at least one $\Delta$-successor. If it has exactly one $\Delta$-successor $(q',b,d)$, then we define
\[
  \text{take transitions from $(q,a)$} \triangleq
  \writeAt{b}; \move{d}; \switch{q'}.
\]
If $(q,a)$ has exactly two $\Delta$-successors $(q_1,b_1,d_1)$ and $(q_2,b_2,d_2)$, and $q$ is an and-state, then we define
\begin{align*}
\text{take transitions from $(q,a)$} \triangleq {}
&(\writeAt{b_1}; \move{d_1}; \switch{q_1}) \dem
\\
&(\writeAt{b_2}; \move{d_2}; \switch{q_2}).
\end{align*}
In the case where $(q,a)$ the above $\Delta$-successors but is an or-state, we replace $\dem$ by $\ang$ in the definition. The generalization to more than two $\Delta$-successors is straightforward.

Now, we define the term $\mathbf{program}$ in Figure~\ref{fig:machine} that encodes the execution of the alternating Turing machine. The pairs $(q_1,a_1)$, \ldots, $(q_m,a_m)$ range over the pairs $(q,a)$ that have at least one $\Delta$-successor. For an input string $x_1 x_2 \cdots x_n$, we define the test $\mathit{start}$, which encodes the initial configuration, as
%\begin{align*}
%\mathit{start} = {}
%&P^{x_1}_1 \land P^{x_2}_2 \land \cdots \land P^{x_n}_n \land {}
%\\
%&P^{\_}_{n+1} \land P^{\_}_{n+2} \land \cdots \land P^{\_}_{\pi(n)},
%\end{align*}
\begin{align*}
\mathit{start} = {}
S_{q_0} \land C_1 \land
\bigl(
  P^{x_1}_1 \land \cdots \land P^{x_n}_n \land
P^{\B}_{n+1} \land \cdots \land P^{\B}_{\pi(n)}
\bigr),
\end{align*}
where $q_0$ is the start state, $1$ is the start position, and $\pi(n)$ is the polynomial that gives the space bound of the machine. Since the space is bounded by a polynomial $\pi(n)$, there are polynomially many positions $i$. So, the size of the program is polynomial in the size of the machine. Finally, the claim is that the machine accepts iff
\[
  \Phi,\Psi \models_\Dem
  \hoare{\mathit{start}}{\mathbf{program}}{\accept},
\]
where $\Phi, \Psi$ are the collections of our assumptions for the atomic tests and the atomic programs respectively.
\end{proof}

It is an immediate corollary of the above theorem that the weak Hoare theory (over the class $\All$) can also be decided in exponential time.

\section{A Complete Hoare-style Calculus for Synthesis}
\label{sec:synthesis}

We introduce in Figure~\ref{fig:synthesis} a Hoare-style calculus which can be used for the deductive synthesis of $\ang$-free programs that satisfy a Hoare specification. It is based on the complete calculus for the Hoare theory of the class $\Dem$, which contains interpretations assigning non-angelic game functions (Definition~\ref{def:gameFunction}) to the atomic programs. This is the calculus of Figure~\ref{fig:angHL} with the extra rule ($a$-$\rMeet$) of Figure~\ref{fig:meet}. The main differences are:
\begin{enumerate}[label=(i)]
\item
The rules $(\rJoin_0)$ and $(\rMeet_0)$ of Figure~\ref{fig:angHL} have been weakened into the rules ($a$-$\rJoin_0$) and ($a$-$\rMeet_0$). This is inconsequential, as we have discussed in Observation~\ref{obs:trivialRules}.
\item
Every conclusion $\hoare{p}f{q}$ is decorated with a $\ang$-free program term $t$, which satisfies the specification $\hoare{p}t{q}$ and implements a winning strategy for the angel in the safety game described by the assertion $\hoare{p}f{q}$.
\end{enumerate}
Another difference that deserves mention is the introduction in Figure~\ref{fig:synthesis} of two new variants $(\rJoin')$ and $(\rJoin'')$ of the standard rule $(\rJoin)$. These rules are not necessary for completeness and they can be omitted without breaking our theorems, but they are useful from a practical viewpoint. The new rules $(\rJoin')$ and $(\rJoin'')$ are sound, and they allow useful shortcuts in the deductive synthesis of $\ang$-free programs.

\begin{figure}[t!]
\centering
$\begin{gathered}
\AxiomC{$\hoare p a q$ in $\Psi$}
\RightLabel{($\rHyp$)}
\UnaryInfC{$\Phi,\Psi \vdash a: \hoare{p}a{q}$}
\DisplayProof
\quad
\AxiomC{\phantom{$pq\Psi$}}
\RightLabel{($\rSkip$)}
\UnaryInfC{$\Phi,\Psi \vdash \id: \hoare{p}\id{p}$}
\DisplayProof
\quad
\AxiomC{\phantom{$pq\Psi$}}
\RightLabel{($\rDvrg$)}
\UnaryInfC{$\Phi,\Psi \vdash \bot: \hoare{p}\bot{q}$}
\DisplayProof
\\[1ex]
\AxiomC{$\begin{aligned}
  \Phi,\Psi &\vdash s: \hoare{p}f{q}
  \\[-0.5ex]
  \Phi,\Psi &\vdash t: \hoare{q}g{r}
\end{aligned}$}
\RightLabel{($\rSeq$)}
\UnaryInfC{$\Phi,\Psi \vdash s;t: \hoare{p}{f;g}{r}$}
\DisplayProof
\qquad
\AxiomC{$\begin{aligned}
  \Phi,\Psi &\vdash s: \hoare{q \land p}f{r}
  \\[-0.5ex]
  \Phi,\Psi &\vdash t: \hoare{q \land \neg p}g{r}
\end{aligned}$}
\RightLabel{($\rCond$)}
\UnaryInfC{$\Phi,\Psi \vdash p[s,t]: \hoare{q}{\ifThenElse p f g}{r}$}
\DisplayProof
\\[1ex]
\AxiomC{$\Phi,\Psi \vdash t: \hoare{r \land p}f{r}$}
\RightLabel{($\rLoop$)}
\UnaryInfC{$\Phi,\Psi \vdash \wh p t: \hoare{r}{\whileDo p f}{r \land \neg p}$}
\DisplayProof
\\[1ex]
\AxiomC{$\Phi,\Psi \vdash t: \hoare{p}{f_i}{q}$}
\RightLabel{($\rAng_i$)}
\UnaryInfC{$\Phi,\Psi \vdash t: \hoare{p}{f_1 \ang f_2}{q}$}
\DisplayProof
\quad
\AxiomC{$\Phi,\Psi \vdash s: \hoare{p}f{q}$}
\AxiomC{$\Phi,\Psi \vdash t: \hoare{p}g{q}$}
\RightLabel{($\rDem$)}
\BinaryInfC{$\Phi,\Psi \vdash s \dem t: \hoare{p}{f \dem g}{q}$}
\DisplayProof
\\[1ex]
\AxiomC{$\Phi \vdash p' \to p$}
\AxiomC{$\Phi,\Psi \vdash t: \hoare{p}f{q}$}
\AxiomC{$\Phi \vdash q \to q'$}
\RightLabel{($\rWeak$)}
\TernaryInfC{$\Phi,\Psi \vdash t: \hoare{p'}f{q'}$}
\DisplayProof
\\[1ex]
\AxiomC{$\Phi,\Psi \vdash t_1: \hoare{p_1}f{q}$}
\AxiomC{$\Phi,\Psi \vdash t_2: \hoare{p_2}f{q}$}
\RightLabel{($\rJoin$)}
\BinaryInfC{$\Phi,\Psi \vdash p_1[t_1,t_2]: \hoare{p_1 \lor p_2}f{q}$}
\DisplayProof
\qquad
\begin{gathered}
\text{($a$-$\rJoin_0$)} \\[-0.5ex]
\Phi,\Psi \vdash a: \hoare{\false}a{q}
\end{gathered}
\\[1ex]
\AxiomC{$\Phi,\Psi \vdash a: \hoare{p}a{q_1}$}
\AxiomC{$\Phi,\Psi \vdash a: \hoare{p}a{q_2}$}
\RightLabel{($a$-$\rMeet$)}
\BinaryInfC{$\Phi,\Psi \vdash a: \hoare{p}a{q_1 \land q_2}$}
\DisplayProof
\qquad
\begin{gathered}
\text{($a$-$\rMeet_0$)} \\[-0.5ex]
\Phi,\Psi \vdash a: \hoare{p}a{\true}
\end{gathered}
\\[1ex]
\AxiomC{$\begin{aligned}
  \Phi,\Psi &\vdash t: \hoare{p_1}f{q}
  \\[-0.5ex]
  \Phi,\Psi &\vdash t: \hoare{p_2}f{q}
\end{aligned}$}
\RightLabel{($\rJoin'$)}
\UnaryInfC{$\Phi,\Psi \vdash t: \hoare{p_1 \lor p_2}f{q}$}
\DisplayProof
\qquad
\AxiomC{$\begin{aligned}
  \Phi,\Psi &\vdash t_1: \hoare{p \land r}f{q}
  \\[-0.5ex]
  \Phi,\Psi &\vdash t_2: \hoare{p \land \neg r}f{q}
\end{aligned}$}
\RightLabel{($\rJoin''$)}
\UnaryInfC{$\Phi,\Psi \vdash r[t_1,t_2]: \hoare{p}f{q}$}
\DisplayProof
\end{gathered}$
\caption{A sound and complete Hoare-style calculus for the synthesis of programs.}
\label{fig:synthesis}
\end{figure}

\begin{theorem}[Soundness]
\label{thm:soundC}
Suppose that a judgment $\Phi,\Psi \vdash t: \hoare{p}f{q}$ is derivable using the Hoare-style calculus of Figure~\ref{fig:synthesis}. The following hold:
\begin{enumerate}
\item
Every game interpretation $I$ in $\Dem$ satisfies the formula $\Phi,\Psi \Imp \hoare{p}f{q}$.
\item
Every nondeterministic interpretation $R$ satisfies $\Phi,\Psi \Imp \hoare{p}t{q}$.
\item
Let $R$ be a nondeterministic interpretation, and $I$ be the game interpretation that lifts $R$ (see Definition~\ref{def:gameI}). Then, $\lift R(t) \subseteq I(f)$.
\end{enumerate}
Part (3) says that $R(t)$ implements $I(f)$, which is denoted $R(t) \simpl I(f)$, when $I$ lifts $R$.
\end{theorem}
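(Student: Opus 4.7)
The plan is to prove all three parts simultaneously by induction on the derivation of $\Phi,\Psi \vdash t: \hoare{p}f{q}$, with one case for each rule of Figure~\ref{fig:synthesis}. Before starting, I would observe that Part~(1) follows almost for free from what has already been established: by erasing the annotations, every derivation yields a derivation in the extended calculus $\vdash_d$ of \S\ref{sec:completeB}. The weakened axioms ($a$-$\rJoin_0$) and ($a$-$\rMeet_0$) are sufficient by Observation~\ref{obs:trivialRules}, and the auxiliary rules $(\rJoin')$ and $(\rJoin'')$ are routinely derivable from $(\rJoin)$, $(\rCond)$ and $(\rWeak)$. Soundness of $\vdash_d$ over the class $\Dem$ is Theorem~\ref{thm:sound} together with Lemma~\ref{lemma:meetSound}, so Part~(1) drops out.

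For Part~(2), I would read each rule as a rule acting on the annotation $t$ in standard (purely demonic) nondeterministic Hoare logic. The structural rules $(\rSkip)$, $(\rDvrg)$, $(\rSeq)$, $(\rCond)$, $(\rLoop)$, $(\rDem)$ and $(\rWeak)$ translate directly into their classical counterparts; in $(\rAng_i)$ the annotation is unchanged, so the conclusion on $t$ coincides with a premise; $(\rJoin)$ and $(\rJoin'')$ match the standard Hoare rule for conditionals; $(\rJoin')$ is the usual disjunction-in-precondition rule; and ($a$-$\rMeet$) together with its zero case ($a$-$\rMeet_0$) is sound on atomic $t=a$ because $R(a)(u)\subseteq R(q_1)$ and $R(a)(u)\subseteq R(q_2)$ jointly entail $R(a)(u)\subseteq R(q_1\land q_2)$. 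Each verification is immediate.

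Part~(3) is the heart of the argument, and here I would invoke Lemma~\ref{lemma:lift} (lifting commutes with the semantic operations) and Lemma~\ref{lemma:implementation} (the implementation calculus) on the induction hypothesis. The base cases $(\rSkip)$, $(\rDvrg)$, $(\rHyp)$ use $\lift 1_S = \one_S$, $\lift 0_S = \zero_S$, and the fact that $I$ lifts $R$ so that $\lift R(a) = I(a)$. The inductive cases $(\rSeq)$, $(\rCond)$, $(\rLoop)$, $(\rDem)$ each follow by applying the corresponding clause of Lemma~\ref{lemma:implementation} to the I.H.; $(\rAng_i)$ uses the monotonicity clause $k\simpl\phi \Imp k\simpl\phi\gang\psi$; $(\rWeak)$ leaves both sides of the containment unchanged; the rules $(\rJoin)$ and $(\rJoin'')$ reduce to the conditional clause together with the trivial identity $P\glb\phi,\phi\grb=\phi$; $(\rJoin')$ keeps the annotation fixed; and the atomic rules ($a$-$\rJoin_0$), ($a$-$\rMeet_0$), ($a$-$\rMeet$) are immediate because the annotation stays equal to $a$ and $\lift R(a)=I(a)$ holds with equality. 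The step I would expect to be most delicate is $(\rLoop)$, since a priori it requires comparing a least fixpoint of Figure~\ref{fig:nondetOps} with a greatest fixpoint of Figure~\ref{fig:gameOps}; however, this comparison has already been carried out once and for all in Lemma~\ref{lemma:lift} and packaged as an implementation rule in Lemma~\ref{lemma:implementation}, so in the induction itself the step collapses to a single invocation.
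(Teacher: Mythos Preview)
Your proposal is correct and follows essentially the same approach as the paper. The only cosmetic difference is that the paper treats the three parts separately rather than in a single simultaneous induction: Part~(1) is reduced to Theorem~\ref{thm:sound} and Lemma~\ref{lemma:meetSound} exactly as you suggest; Part~(2) is dispatched by citing an earlier paper on Hoare logic for nondeterministic while schemes rather than by your explicit case analysis; and for Part~(3) the paper first \emph{projects} the synthesis calculus to bare judgments $t:f$ (discarding the Hoare assertion, which is irrelevant for the containment $\lift R(t)\subseteq I(f)$) and then runs the induction on this smaller calculus, invoking Lemma~\ref{lemma:implementation} at each step just as you do.
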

\begin{proof}
Part (1) follows from the soundness of the Hoare calculus of Figure~\ref{fig:angHL} (Theorem~\ref{thm:sound}) and from Lemma~\ref{lemma:meetSound} (soundness of the ($a$-$\rMeet$) rule for interpretations in $\Dem$). Part (2) asserts the soundness of a Hoare calculus for nondeterministic while schemes, whose proof can be found in \cite{mamouras2014HL}.
%\begin{gather*}
%\text{$\hoare p a q$ (in $\Psi$)}
%\quad
%\begin{aligned}
%\hoare{p}\id{p}
%\\
%\hoare{p}\bot{q}
%\end{aligned}
%\quad
%\AxiomC{$\hoare{p}\phi{q}$\hspace{-1em}}
%\AxiomC{$\hoare{q}\psi{r}$}
%\BinaryInfC{$\hoare{p}{\phi;\psi}{r}$}
%\DisplayProof
%\quad
%\AxiomC{$\hoare{p}\phi{q}$\hspace{-1em}}
%\AxiomC{$\hoare{p}\psi{q}$}
%\BinaryInfC{$\hoare{p}{\phi \dem \psi}{q}$}
%\DisplayProof
%\\[1ex]
%\AxiomC{$\hoare{q \land p}\phi{r}$\hspace{-1em}}
%\AxiomC{$\hoare{q \land \neg p}\psi{r}$}
%\BinaryInfC{$\hoare{q}{\ifThenElse p \phi \psi}{r}$}
%\DisplayProof
%\qquad
%\AxiomC{$\hoare{r \land p}\phi{r}$}
%\UnaryInfC{$\hoare{r}{\whileDo p \phi}{r \land \neg p}$}
%\DisplayProof
%\\[1ex]
%\AxiomC{$\Phi \vdash p' \to p$}
%\AxiomC{$\hoare{p}\phi{q}$}
%\AxiomC{$\Phi \vdash q \to q'$}
%\TernaryInfC{$\hoare{p'}\phi{q'}$}
%\DisplayProof
%\\[1ex]
%\AxiomC{$\hoare{p_1}{\phi_1}{q}$}
%\AxiomC{$\hoare{p_2}{\phi_2}{q}$}
%\BinaryInfC{$\hoare{p_1 \lor p_2}{\ifThenElse{p_1}{\phi_1}{\phi_2}}{q}$}
%\DisplayProof
%\quad
%\AxiomC{$\hoare{p}a{q_1}$}
%\AxiomC{$\hoare{p}a{q_2}$}
%\BinaryInfC{$\hoare{p}a{q_1 \land q_2}$}
%\DisplayProof
%\quad
%\begin{aligned}
%\hoare{\false}a{q}
%\\
%\hoare{p}a{\true}
%\end{aligned}
%\end{gather*}
%Showing this result amounts to proving that the rules above preserve satisfaction for nondeterministic interpretations $R \models \Phi, \Psi$.
For Part (3), the hypothesis says that $I(a) = \lift R(a)$ for every atomic program $a$, and $I(p) = R(p)$ for every test (see Definition~\ref{def:gameI}). We consider the ``projection'' of the calculus of Figure~\ref{fig:synthesis} to judgments of the form $t: f$, because the rest of the information is irrelevant.
\begin{gather*}
a: a
\qquad
\id: \id
\qquad
\bot: \bot
\qquad
\AxiomC{$s: f$}
\AxiomC{$t: g$}
\BinaryInfC{$s;t: f;g$}
\DisplayProof
\qquad
\AxiomC{$s: f$}
\AxiomC{$t: g$}
\BinaryInfC{$p[s,t]: p[f,g]$}
\DisplayProof
\\[1ex]
\AxiomC{$t: f$}
\UnaryInfC{$\wh p t: \wh p f$}
\DisplayProof
\qquad
\AxiomC{$t: f$}
\UnaryInfC{$t: f \ang g$}
\DisplayProof
\qquad
\AxiomC{$t: g$}
\UnaryInfC{$t: f \ang g$}
\DisplayProof
\qquad
\AxiomC{$s: f$}
\AxiomC{$t: g$}
\BinaryInfC{$s \dem t: f \dem g$}
\DisplayProof
\qquad
\AxiomC{$s: f$}
\AxiomC{$t: f$}
\BinaryInfC{$p[s,t]: f$}
\DisplayProof
\end{gather*}
The claim is that for every derivable judgment $t:f$, we have $R(t) \simpl I(f)$, that is, $R(\phi)$ implements $I(f)$ (see Definition~\ref{def:implementation}). Recall that $R(t) \simpl I(f)$ iff $\lift R(t) \subseteq I(a)$. The proof proceeds by induction on the derivation of $t:f$. It is a straightforward verification, where we make repeated use of Lemma~\ref{lemma:implementation}.
\end{proof}

\begin{theorem}[Completeness]
\label{thm:completeC}
Let $\Phi$ and $\Psi$ be finite sets of tests and simple Hoare assertions respectively, and $f$ be a program s.t.\ $\Phi,\Psi \models_\Dem \hoare{p}f{q}$. Then, there exists a $\ang$-free program $t$ such that $\Phi,\Psi \vdash t: \hoare{p}f{q}$.
\end{theorem}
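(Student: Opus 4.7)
The plan is to lift the completeness proof of Theorem~\ref{thm:completeB} to the annotated setting, tracking a $\ang$-free witness at each induction step. Concretely, I would prove the following strengthening by induction on the structure of $f$:
\[
  \text{for every $\Phi$-consistent atom $\alpha$ and every $Y \subseteq \AtC$, if $(\alpha,Y) \in \dI(f)$ then there is a $\ang$-free term $t_{\alpha,Y}$ with $\Phi,\Psi \vdash t_{\alpha,Y}: \hoare{\alpha}f{\tbigvee Y}$.}
\]
Once this is established, the theorem follows by a single extra step: Corollary~\ref{coro:completeB} applied to the hypothesis gives $(\alpha,I_\Phi(q)) \in \dI(f)$ for every $\Phi$-consistent $\alpha \leq p$, and $\bigvee I_\Phi(q) \to q$ is a Boolean consequence of $\Phi$, so weakening and finitely many applications of $(\rJoin)$ over the (finitely many) $\Phi$-consistent atoms below $p$ combine the $t_{\alpha,I_\Phi(q)}$ into a single $\ang$-free witness $t$ with $\Phi,\Psi \vdash t: \hoare{p}f{q}$.

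For the inductive step I would reuse the derivations already produced in the proof of Theorem~\ref{thm:completeB}, noting that each one is decorated by exactly the term displayed to the left of the colon in the rules of Figure~\ref{fig:synthesis}. The atomic, $\id$, $\bot$, $\dem$, conditional and composition cases are immediate: the conclusions of $(\rHyp)$, $(\rSkip)$, $(\rDvrg)$, $(\rDem)$, $(\rCond)$, $(\rSeq)$ carry the required witnesses and the IH supplies the witnesses for the premises; in the composition case one just has to take a join (via $(\rJoin)$) of the IH-witnesses $t_{\beta,Y}$ over the finitely many $\beta$ in the intermediate predicate $X$ to obtain the second factor. The angelic choice case is the easiest of all: $\dI(f_1 \ang f_2) = \dI(f_1) \cup \dI(f_2)$, so each pair $(\alpha,Y)$ belongs to some $\dI(f_i)$, and the IH-witness $t_{\alpha,Y}: f_i$ is promoted to a witness for $f_1 \ang f_2$ using $(\rAng_i)$, still $\ang$-free.

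The main obstacle is the while loop, which I would handle exactly as in Theorem~\ref{thm:completeB}. Recall the auxiliary set $V = \{\alpha \in \AtC \mid (\alpha,\Gamma) \in \dI(\wh p f)\}$ and the key fact proved there that $(\alpha, V) \in \dI(f)$ for every $\alpha \in V \cap I_\Phi(p)$. For each such $\alpha$ the IH supplies a $\ang$-free term $t_\alpha$ with $\Phi,\Psi \vdash t_\alpha: \hoare{\alpha}f{\bigvee V}$. Combining the $t_\alpha$ with $(\rJoin)$ over the atoms of $V \cap I_\Phi(p)$, followed by weakening, yields a $\ang$-free term $t$ with $\Phi,\Psi \vdash t: \hoare{(\bigvee V) \land p}f{\bigvee V}$, and $(\rLoop)$ then provides $\Phi,\Psi \vdash \wh p t: \hoare{\bigvee V}{\wh p f}{(\bigvee V) \land \neg p}$; since $\gamma \in V$ and $V \cap \compl I_\Phi(p) = \Gamma \cap \compl I_\Phi(p)$, weakening finishes this case. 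The resulting term $\wh p t$ is $\ang$-free because $t$ is.

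No new semantic content is needed beyond what was used for Theorem~\ref{thm:completeB}; the work is entirely bookkeeping, ensuring that every rule invoked in the earlier proof has an annotated counterpart in Figure~\ref{fig:synthesis} and that the witnesses constructed never introduce $\ang$. The auxiliary rules $(\rJoin')$ and $(\rJoin'')$ are not required for the argument, though they can be used to produce more compact witnesses in practice. Soundness (Theorem~\ref{thm:soundC}) then confirms that the synthesized $t$ indeed implements the angelic strategy prescribed by $f$ under any non-angelic interpretation of the atomic actions.
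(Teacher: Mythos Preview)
Your approach is essentially the paper's, carried out in more detail: the paper simply invokes Corollary~\ref{coro:completeB} to obtain an unannotated $\vdash_d$-proof, appeals to Observation~\ref{obs:trivialRules} to replace any uses of the general $(\rJoin_0)$/$(\rMeet_0)$ by their atomic versions, and then observes that every remaining rule has a direct annotated counterpart in Figure~\ref{fig:synthesis}, so the whole derivation can be decorated. You instead redo the structural induction of Theorem~\ref{thm:completeB} with the annotations threaded through, which amounts to the same thing.

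One point you gloss over: several cases of the induction in Theorem~\ref{thm:complete} use $(\rJoin_0)$ on a non-atomic subterm, and your induction hypothesis does not supply a witness there. The clearest instance is the conditional $p[f,g]$ with $\alpha \leq p$: the second premise of $(\rCond)$ is $\hoare{\alpha \land \neg p}g{\bigvee X}$, i.e.\ $\hoare{\false}g{\bigvee X}$, and this does \emph{not} come from your induction hypothesis on $g$ (there is no atom below $\false$). The same issue appears in composition when the intermediate set $X$ is empty, and in the loop case when $V \cap I_\Phi(p) = \emptyset$. Since the synthesis calculus only carries the atomic ($a$-$\rJoin_0$), you need the annotated analogue of Observation~\ref{obs:trivialRules}: for every term $g$ there is a $\ang$-free $t$ with $\vdash t:\hoare{\false}g{q}$, proved by a separate structural induction on $g$. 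This is routine, but it is exactly the step the paper isolates and invokes explicitly, and it is not covered by saying the conditional case is ``immediate''.
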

\begin{proof}
From Corollary~\ref{coro:completeB} we get that $\Phi,\Psi \vdash_d \hoare{p}f{q}$. From Observation~\ref{obs:trivialRules} we know that the rules $(\rJoin_0)$ and $(\rMeet_0)$ can be weakened to ($a$-$\rJoin_0$) and ($a$-$\rMeet_0$) without affecting the provability of the implication $\Phi,\Psi \Imp \hoare{p}f{q}$. We annotate the proof according to the rules of Figure~\ref{fig:synthesis}, and we conclude that $\Phi,\Psi \vdash t:\hoare{p}f{q}$ for some $\ang$-free program $t$. 
\end{proof}

Finally, we will see that solving safety games on finite graphs can be reduced to deciding the $\Dem$-validity of a Hoare implication involving a while game scheme that simulates the safety game. %This reduction thus gives us a compositional deductive way of designing winning strategies for safety games.
Let $G = (V,V_\exists,V_\forall,\to,E)$ be a safety game. For every vertex $u \in V$, introduce an atomic test $p_u$, which asserts that the token is currently on the vertex $u$. We take $\Phi$ to contain the following hypotheses for the atomic tests:
\[
  \tbigvee_{u \in V} p_u
  \qquad\text{and}\qquad
  \text{$\neg (p_u \land p_v)$ for all $u,v \in V$ with $u \neq v$}.
\]
The axioms of $\Phi$ say that the token is on exactly one vertex. So, we can identify the set $\AtC$ of $\Phi$-consistent atoms with the set $\{ p_u \mid u \in V \}$. For every vertex $u \in V$, we introduce an atomic action $u!$, which moves the token to the vertex $u$. So, take $\Psi$ to contain the axioms
\[
  \text{$\hoare{\true}{u!}{p_u}$ for every $u \in V$}.
\]
To emphasize that $\Phi$ and $\Psi$ depend on $G$, let us denote them by $\Phi_G$ and $\Psi_G$ respectively. For an arbitrary vertex $u \in V$, we define the program term
\[
  \text{(take transition from $u$)} \triangleq
  \begin{cases}
    \tbigAng_\text{$v$ with $u \to v$} v!, &\text{if $u \in V_\exists$}
    \\
    \tbigDem_\text{$v$ with $u \to v$} v!, &\text{if $u \in V_\forall$}
    \\
    v!\ (\text{$v$ unique successor of $u$}), &\text{otherwise}
  \end{cases}
\]
Now, we define the while game scheme that describes how the safety game is played:
\begin{align*}
f_G = {}
&\whileDo{(\tbigvee \{ p_u \mid u \in V \setminus E \})}{}
\\ &\qquad
\kwIf\ p_u\ \kwThen\ 
\text{(take transition from $u$)}
\\ &\qquad
\kwElse\ \kwIf\ p_v\ \kwThen\ 
\text{(take transition from $v$)}
\\[-0.5ex] &\qquad
\quad\cdots
\\[-0.5ex] &\qquad
\kwElse\ \kwIf\ p_w\ \kwThen\ 
\text{(take transition from $w$)}
\end{align*}
where $u, v, \ldots, w$ is an enumeration of the non-error vertices. Notice that our encoding implies that a play stops as soon as an error vertex is encountered.

\begin{theorem}[Safety Games]
\label{thm:games}
Let $G = (V,V_\exists,V_\forall,\to,E)$ be a finite safety game. The angel has a winning strategy from $u \in V$ iff $\Phi_G,\Psi_G \vdash \hoare{p_u}{f_G}{\false}$.
\end{theorem}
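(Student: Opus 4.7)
The plan is to chain together the completeness result of Corollary~\ref{coro:completeB}, the full abstraction theorem (Theorem~\ref{thm:semantics}), and a direct simulation argument between $G$ and the operational game associated with $f_G$. Throughout, we identify the $\Phi_G$-consistent atoms with the vertices $V$ of $G$ in the obvious way, so $I_{\Phi_G}(p_u) = \{u\}$ and $I_{\Phi_G}(\false) = \emptyset$.

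First, by Corollary~\ref{coro:completeB}, the judgment $\Phi_G,\Psi_G \vdash \hoare{p_u}{f_G}{\false}$ holds iff $(p_u,\emptyset) \in J_{\Phi_G\Psi_G}(f_G)$, where $J_{\Phi_G\Psi_G}$ is the free non-angelic interpretation. Since $J_{\Phi_G\Psi_G}(a)$ is the lifting of a nondeterministic function for every atomic action $a$, it is non-angelic and hence satisfies the chain property (Lemma~\ref{lemma:chain}(1)). Full abstraction (Theorem~\ref{thm:semantics}) then tells us that $(p_u,\emptyset) \in J_{\Phi_G\Psi_G}(f_G)$ iff the angel has a winning strategy from $(p_u,f_G)$ in the safety game $G_{J_{\Phi_G\Psi_G}}(f_G, V)$, whose error set is $V \times \{\id\}$ (because $\compl \emptyset = V$ means \emph{every} terminal configuration is losing for the angel).

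The heart of the proof is to argue that this operational safety game is equivalent to $G$ itself, in the sense that the angel wins from $(p_u,f_G)$ in $G_{J_{\Phi_G\Psi_G}}(f_G, V)$ iff the angel wins from $u$ in $G$. One round of the while loop, starting at a configuration of the form $(p_u,f_G)$, proceeds as follows. If $u \in E$, the guard $\bigvee_{w \in V \setminus E} p_w$ is false at $p_u$, the loop exits, and play reaches the terminal vertex $(p_u,\id)$, which is an error vertex in the operational game; this matches the fact that the angel has already lost in $G$ from $u$. If $u \notin E$, the guard holds and the body is entered; the if-then-else cascade is traversed through a sequence of forced deterministic transitions (the atomic test $p_w$ is true only at $w$) until we reach the unique branch matching $u$, namely (take transition from $u$)$;f_G$. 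At this point the program term is $\bigAng v!$, $\bigDem v!$, or $v!$ (for the unique successor) according as $u \in V_\exists$, $u \in V_\forall$, or $u \in V_?$, so the owner of the next choice in the operational game coincides with the owner of $u$ in $G$. Once $v$ is chosen, the action $v!$ is executed: because $R_{\Phi_G\Psi_G}(v!)(\alpha) = \{v\}$ for every $\alpha$, the angel's optimal (and strategically unique) choice at the action-vertex is the singleton predicate $\{v\}$, after which the demon must return to $(v,f_G)$, corresponding to moving the token to $v$ in $G$.

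The main obstacle will be making this simulation precise enough to transfer winning strategies in both directions, keeping track of the strategically-trivial auxiliary vertices (guard evaluation, conditional descent, action-vertex outputs) that exist in the operational game but have no counterpart in $G$. By memoryless determinacy (Theorem~\ref{thm:determined}) it suffices to exhibit the correspondence for memoryless strategies: a memoryless $\exists$-strategy on $G$ lifts to a memoryless $\exists$-strategy in the operational game by prescribing, at each configuration of the form $(p_u,\text{(take transition from $u$)};f_G)$ with $u \in V_\exists$, the same successor as in $G$, and always choosing the singleton output at action-vertices; conversely, the successor chosen at each such configuration by a winning operational strategy yields a winning strategy in $G$. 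Combining this bijection with the two reductions above yields the theorem.
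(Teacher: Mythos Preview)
Your approach is essentially the paper's: reduce provability to membership in the free interpretation via completeness, then to the operational safety game via full abstraction, and finally observe that this operational game simulates $G$. The paper's own proof is a one-liner citing Corollary~\ref{coro:complete} and Theorem~\ref{thm:semantics}; your proposal is strictly more detailed on the simulation step, which the paper leaves implicit.

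One small mismatch to fix: the theorem is stated for $\vdash$ (the calculus of Figure~\ref{fig:angHL}), but you invoke Corollary~\ref{coro:completeB}, which characterizes $\vdash_d$ and the free \emph{non-angelic} interpretation $\dI$. The paper instead uses Corollary~\ref{coro:complete} and the free game interpretation $\gI$. This is harmless here because the two free interpretations coincide for $\Phi_G,\Psi_G$: each action $v!$ has a single axiom $\hoare{\true}{v!}{p_v}$, so $\gI(v!)(\alpha)=\{X:I_{\Phi_G}(p_v)\subseteq X\}=\{X:v\in X\}=\dI(v!)(\alpha)$, and hence $\gI(f_G)=\dI(f_G)$ and $\vdash$ agrees with $\vdash_d$ on these hypotheses. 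Either cite Corollary~\ref{coro:complete} directly, or add this one-line observation.
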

\begin{proof}
The idea is that Player $\exists$ has a winning strategy from $u$ iff the loop never terminates. The theorem follows immediately from the completeness result of Corollary~\ref{coro:complete} and the operational/denotational correspondence shown in Theorem~\ref{thm:semantics}.
\end{proof}

\section{Example: temperature controller}
\label{sec:example}

\newcommand{\temp}{\mathit{temp}}
\newcommand{\mode}{\mathit{mode}}
\newcommand{\heat}{\mathsf{heat}}
\newcommand{\cool}{\mathsf{cool}}
\newcommand{\off}{\mathsf{off}}
\newcommand{\ok}{\mathit{ok}}
\newcommand{\angel}{\text{angel}}
\newcommand{\demon}{\text{demon}}
\newcommand{\inv}{\mathit{inv}}

We will use our language of while game schemes to encode a toy example of implementing a temperature controller. The idea is that the controller (the angel) can set the heating/cooling system into one of three modes: $\heat$, $\cool$ or $\off$. %These choices are available to the angel, who is the player that controls the heating and cooling.
We model this situation with the following program term:
\[
  \angel \triangleq
  (m := \heat) \ang
  (m := \cool) \ang
  (m := \off),
\]
where the variable $m$ stores the current mode. The demon, on the other hand, models the adversarial environment. In particular, he controls the spontaneous temperature changes. We make the simplifying assumption that the temperature can only change by 1 degree Fahrenheit at every time step. Moreover, if the mode is $\heat$ then the temperature cannot decrease, and if the mode is $\cool$ then the temperature cannot increase. We model the behavior of the environment with the term:
\[
  \demon \triangleq
  \begin{aligned}[t]
  &\kwIf\ (m=\heat)\ \kwThen\ (t:=t+1) \dem \id
  \\
  &\kwElse\ \kwIf\ (m=\cool)\ \kwThen\ (t:=t-1) \dem \id
  \\
  &\kwElse\ \kwIf\ (m=\off)\ \kwThen\ (t:=t+1) \dem (t:=t-1) \dem \id,
  \end{aligned}
\]
where the variable $t$ stores the current temperature. The requirement for the temperature controller is that it keeps the temperature within the range $\{ 67, 68, 69 \}$, expressed as
\[
  \ok \triangleq
  (t=67) \lor (t=68) \lor (t=69),
\]
assuming that the initial temperature is 68 degrees Fahrenheit (20 degrees Celsius).

\begin{figure}[t!]
\small
\addtolength{\jot}{-1.5pt}
\centering
\fbox{$\begin{aligned}
&\{ \mathsf{Precondition}: t=68 \}
\\
&\kwWhile\ (t=67)\lor(t=68)\lor(t=69)\ \kwDo
\\
&\qquad
 \begin{aligned}
   &\text{// loop invariant $\inv$:} \\[-0.5ex]
   &\text{//\qquad $(t=67)\lor(t=68)\lor(t=69)$ and} \\[-0.5ex]
   &\text{//\qquad $(t=67) \to (m=\heat)$ and} \\[-0.5ex]
   &\text{//\qquad $(t=69) \to (m=\cool)$}
 \end{aligned}
\\
&\qquad \kwIf\ (m = \heat)\ \kwThen\ (t:=t+1) \dem \id
\\
&\qquad \kwElse\ \kwIf\ (m = \cool)\ \kwThen\ (t:=t-1) \dem \id
\\
&\qquad \kwElse\ \kwIf\ (m = \off)\ \kwThen\ (t:=t+1) \dem (t:=t-1) \dem \id
\\
&\qquad \text{// $(t=67) \lor (t=68) \lor (t=69)$}
\\
&\qquad (m := \heat) \ang (m := \cool) \ang (m := \off)
\\
&\{ \mathsf{Postcondition}: \false \}
\end{aligned}$}
\\[2ex]
$\begin{aligned}[t]
\mathbf{\Phi} : {}
&(t \neq 67) \lor (t \neq 68)
\\
&(t \neq 67) \lor (t \neq 69)
\\
&(t \neq 68) \lor (t \neq 69)
\\
&(m=\heat) \lor (m=\cool) \lor (m=\off)
\\
&(m \neq \heat) \lor (m \neq \cool)
\\
&(m \neq \heat) \lor (m \neq \off)
\\
&(m \neq \cool) \lor (m \neq \off)
\end{aligned}
\quad
\begin{aligned}[t]
\mathbf{\Psi} : {}
&\hoare{t=67}{t := t+1}{t=68}
\\
&\hoare{t=68}{t := t+1}{t=69}
\\
&\hoare{t=69}{t := t+1}{\neg\ok}
\\
&\hoare{m=v}{t := t+1}{m=v},\ \text{for $v=\heat,\cool,\off$}
\\
&\hoare{t=67}{t := t-1}{\neg\ok}
\\
&\hoare{t=68}{t := t-1}{t=67}
\\
&\hoare{t=69}{t := t-1}{t=68}
\\
&\hoare{m=v}{t := t-1}{m=v},\ \text{for $v=\heat,\cool,\off$}
\\
&\hoare{\true}{m := \heat}{m=\heat}
\\
&\hoare{\true}{m := \cool}{m=\cool}
\\
&\hoare{\true}{m := \off}{m=\off}
\\
&\hoare{t=v}{m:=w}{t=v},
\\
&\quad\text{for $v=67,68,69$ and $w=\heat,\cool,\off$}
\\
&\hoare{\neg\ok}{m:=w}{\neg\ok},\ \text{for $w=\heat,\cool,\off$}
\end{aligned}$
\caption{A program modelling the interaction between a temperature controller and the environment, and a Hoare specification for the acceptable temperature range.}
\label{fig:temperature}
\end{figure}

In Figure~\ref{fig:temperature} we see the program that describes the interaction between the controller and the environment (in discrete steps), together with a Hoare specification demanding that the temperature is within the acceptable range. The while loop keeps executing until a violation of the temperature range occurs. In other words, the specification is satisfied when the loop keeps running forever. We assume throughout that we reason under the hypotheses $\Phi$ for atomic tests, and the hypotheses $\Psi$ for atomic actions. The top-level steps of the proof are:
\begin{allowdisplaybreaks}
\begin{align*}
1.\ &
(t=68) \to \inv
&&\text{[$\Phi$, bool]}
\\
2.\ &
\inv \to ((t=67)\land(m=\heat)) \lor (t=68) \lor ((t=69)\land(m=\cool))
&&\text{[bool]}
\\
3.\ &
\hoare{(t=67)\land(m=\heat)}{\demon}{\ok}
&&\text{[use $\Phi,\Psi$]}
\\
4.\ &
\hoare{t=68}{\demon}{\ok}
&&\text{[use $\Phi,\Psi$]}
\\
5.\ &
\hoare{(t=69)\land(m=\cool)}{\demon}{\ok}
&&\text{[use $\Phi,\Psi$]}
\\
6.\ &
\hoare{\inv}{\demon}{\ok}
&&\text{[2, 3, 4, 5]}
\\
7.\ &
\hoare{\ok}{\angel}{\inv}
&&\text{[todo]}
\\
8.\ &
\hoare{\inv}{\demon;\angel}{\inv}
&&\text{[6, 7, $\rSeq$]}
\\
9.\ &
\hoare{\inv}{\whileDo{\ok}{(\demon;\angel)}}{\inv \land \neg \ok}
&&\text{[8, $\rLoop$]}
\\
10.\ &
\inv \land \neg\ok \to \false
&&\text{[bool]}
\\
11.\ &
\hoare{t=68}{\whileDo{\ok}{(\demon;\angel)}}{\false}
&&\text{[1, 9, 10]}
\end{align*}
\end{allowdisplaybreaks}%
It remains to derive the assertion $\hoare{\ok}{\angel}{\inv}$, which concerns the implementation of the controller.
\begin{allowdisplaybreaks}
\begin{align*}
1.\ &
\hoare{t=67}{m:=\heat}{\inv}
&&\text{[use $\Psi$]}
\\
2.\ &
\hoare{t=67}{\angel}{\inv}
&&\text{[1, $\rAng$]}
\\
3.\ &
\hoare{t=69}{m:=\cool}{\inv}
&&\text{[use $\Psi$]}
\\
4.\ &
\hoare{t=69}{\angel}{\inv}
&&\text{[3, $\rAng$]}
\\
5.\ &
\hoare{t=68}{m:=\off}{\inv}
&&\text{[use $\Psi$]}
\\
6.\ &
\hoare{t=68}{\angel}{\inv}
&&\text{[5, $\rAng$]}
\\
7.\ &
\hoare{(t=69) \lor (t=68)}{\angel}{\inv}
&&\text{[4, 6, $\rJoin$]}
\\
8.\ &
\hoare{(t=67) \lor (t=69) \lor (t=68)}{\angel}{\inv}
&&\text{[2, 7, $\rJoin$]}
\\
9.\ &
\hoare{\ok}{\angel}{\inv}
&&\text{[8, bool]}
\end{align*}
\end{allowdisplaybreaks}%
If we annotate the above proof with the angelic strategies according to the synthesis calculus of Figure~\ref{fig:synthesis}, then the implementation for the controller becomes:
\begin{align*}
\text{controller} \triangleq {}
&\kwIf\ (t=67)\ \kwThen\ m:=\heat
\\
&\kwElse\ \kwIf\ (t=69)\ \kwThen\ m:=\cool
\\
&\kwElse\ m:=\off.
\end{align*}
We have thus established deductively that there exists an implementation satisfying the specification, and we have obtain a $\ang$-free program that witnesses this fact.

\section{Related Work}
\label{sec:related}

The present paper is inspired from and builds upon the closely related line of work on the propositional fragment of Hoare logic, called \emph{Propositional Hoare Logic} or PHL \cite{kozen1999PHL, kozen2000PHL, cohen2000PHL, kozen2001PHL, tiuryn2002PHL}. In \cite{mamouras2014HL} and \cite{mamouras2016mrsHL}, a propositional variant of Hoare logic for mutually recursive programs is investigated. The present work differs from all this previous work in considering the combination of angelic and demonic nondeterminism, which presents significant new challenges for obtaining completeness and decision procedures.

The other line of work that largely motivated our investigations here is an extension of Propositional Dynamic Logic (PDL) \cite{pratt1976DL, fischer1977pdl, fischer1979pdl}, called \emph{Game Logic} \cite{parikh1983PGL, parikh1985GL, pauly2003game}. This formalism was introduced more than 30 years ago in \cite{parikh1983PGL}, but there are still no completeness results for full Game Logic. We stress that the theory studied here is \emph{not} a fragment of Game Logic. Even though hypotheses-free Hoare assertions $\hoare{p}f{q}$ can be encoded in Dynamic Logic as partial correctness formulas $p \to [f]q$, there is no direct mechanism for encoding the hypotheses of an implication $\Phi,\Psi \Imp \hoare{p}f{q}$ (which would correspond to some kind of global consequence relation in Dynamic Logic).

We have already discussed in the introduction that there have been proposals of semantic models with the explicit purpose of describing the interaction between angelic and demonic choices in programs: monotonic predicate transformers \cite{back1998, dijkstra1975, morgan1998} and up-closed multirelations \cite{rewitzky2003, martin2004, martin2007, martin2013}. We should note that the latter model of multirelations (relations from the state space $S$ to $\wp S$ or, equivalently, functions $S \to \wp\wp S$) had appeared much earlier in the context of modal logic under the name of \emph{neighborhood semantics} or \emph{Scott-Montague semantics}. See \cite{chellas1980ML} for a textbook presentation of this general semantics (called \emph{minimal models} in \cite{chellas1980ML}), which is useful for analyzing non-normal modal logics. These previous works study semantic objects that are related to our game functions. However, our definition of the algebra of game functions (in particular, the definition of while loops in terms of greatest fixpoints) has not been studied before. Moreover, the precise correspondence between safety games and game functions is novel.

There is an enormous amount of work on logics for the strategic interaction between agents, such as Coalition Logic, Alternating-time Temporal Logic, Strategy Logic, and many more. These logics are mostly inspired from modal and temporal logic \cite{blackburn2001ML}, and they are typically used for reasoning about strategic ability, cooperation, agent knowledge, and so on. The recent books \cite{benthem2014LG} and \cite{benthem2015SR} contain broad surveys of the area. We know of no previous proposal, however, that offers a succinct language for describing safety games and (unconditionally) complete systems for reasoning about safety compositionally.

\emph{Coalition Logic (CL)} \cite{pauly2002CL} is a multi-agent formalism that studies cooperation modalities $[C]$, where $C$ is a subset of a set $N$ of agents/players. A formula $[C]\phi$ is read as follows: ``the agents $C$ can cooperate in order to guarantee outcome $\phi$''. This language is sufficient for describing only very simple multi-player games consisting of finitely many steps, and it lacks a treatment of iteration.

The language of \emph{Alternating-time Temporal Logic (ATL)} \cite{alur1997ATL, alur2002ATL} includes modalities of the form $\llt C \rrt$, where C is a subset of agents. The meaning of a formula $\llt C \rrt \phi$ is given w.r.t.\ a fixed multi-player game and it says that: ``the agents $C$ have a joint strategy so that for every joint strategy of the remaining agents, the computation induced by these strategies satisfies the linear temporal property $\phi$''. For a fixed game, the language of ATL is sufficient for describing safety properties. ATL cannot be used, however, for the compositional description and specification of games. An ATL formula describes a global property of the entire game, where the game is fixed a priori.

\emph{Strategy Logic (SL)} \cite{chatterjee2010SL} is a very powerful extension of ATL that allows explicit quantification over the strategies of the players, instead of treating the strategies implicitly using modalities. By making strategy quantification a primitive of the language, SL can describe interesting notions of non-zero-sum games such as Nash equilibria. Similarly to ATL, SL is interpreted over a single fixed game graph. Thus, the language of SL does not offer syntax for the compositional description and analysis of complex game graphs from simpler ones.

The work of Moggi on monads and computational effects \cite{moggi1991monad}, where concepts from category theory are used to structure the denotational semantics of programs, has inspired work on program logics that are parameterized w.r.t.\ a monad encapsulating the computational effects (e.g., nontermination, probabilities, nondeterminism, and so on) of the programs. Neighborhood models and related models of dual nondeterminism have been shown to give rise to monads. A generic monadic framework for weakest precondition semantics is studied in \cite{hasuo2015WP}, and a relatively complete monadic Hoare logic is proposed in \cite{goncharov2013HL}. As far as we know, none of the works in this line of research provides an operationally justified semantics for dual nondeterminism nor an unconditional completeness result.

\section{Discussion \& Conclusion}
\label{sec:conclusion}

We have considered here the weak (over the class $\All$) and the strong (over the subclass $\Dem$) Hoare theories of dual nondeterminism, and we have obtained sound and unconditionally complete Hoare-style calculi for both of them. We have also shown that both theories can be decided in exponential time, and that the strong Hoare theory is EXPTIME-hard. Finally, we have extended our proof system so that it constructs program terms for the strategies of the angel, thus obtaining a sound and complete calculus for synthesis.

To the best of our knowledge, the present results are the first completeness theorems for logics of while programs that support dual nondeterminism. Handling the case of iteration in the presence of both angelic and demonic nondeterminism requires a careful treatment, since we generally need transfinitely many iterations for the loop approximants. In order to gain confidence that the employed semantics is indeed meaningful, we have shown that it agrees exactly with the intended operational model (based on safety games).

There is still much progress to be made in the problem of axiomatizing Game Logic \cite{parikh1983PGL} or a reasonable variation of it (possibly using a restricted class of models and a different syntax for programs). It also remains an interesting challenge to give equational axiomatizations for dual nondeterminism and iteration in the style of Kleene algebra \cite{kozen1994KA} and Kleene algebra with tests \cite{kozen1997KAT}. For practical applications such equational theories would need to accommodate additional hypotheses for the domain of computation \cite{mamouras2014KA, grathwohl2014KATB, mamouras2015phd}, similarly to the use of hypotheses $\Phi$ and $\Psi$ in our calculi. We hope that the present work will inspire progress for the aforementioned and other related open problems.

\section*{Acknowledgement}

The author would like to thank the anonymous referees for their very helpful comments.

%% in general the use of bibtex is encouraged
\bibliographystyle{alpha}
\bibliography{dualHL-biblio}
\vspace{-30 pt}

\end{document}

\appendix
\section{Omitted Proofs}

\begin{proof}[\bfseries Proof of Theorem~\ref{thm:determined}]
For the first part, we claim that the set $W_\exists$ is $\exists$-closed. Intuitively, we will define the strategy $f\star_\exists$ so that it keeps the play within $W_\exists$.

Let $u \in W_\exists$ be a $\exists$-vertex and assume for contradiction that all its successors are in $W_\forall$. By definition of $W_\forall$, there is an ordinal $\kappa(v)$ for every $u$-successor $v$ such that $v$ is in $W_1^{\kappa(v)}$. Then, define the ordinal
\[
  \lambda = \tbigcup
    \{ \kappa(v) \mid \text{$v$ is successor of $u$} \}
\]
and observe that every $u$-successor $v$ is in $W_\forall^\lambda \supseteq W_\forall^{\kappa(v)}$, because $\kappa(v) \leq \lambda$. It follows that $u \in W_1^{\lambda+1} \subseteq W_\forall$ and hence $u \notin W_\exists$, which is a contradiction. We have thus shown that there exists a $u$-successor $v$ with $v \in W_\exists$, and we put $f\star_\exists(u) = v$.

Let $u \in W_\exists$ be a $\forall$-vertex and assume for contradiction that it has a successor $v \in W_\forall$. Since $v \in W_\forall$, there is an ordinal $\kappa$ with $v \in W_\forall^\kappa$ and hence $u$ is in $W_\forall^{\kappa+1} \subseteq W_\forall$. As before, we have obtained a contradiction. So, all successors of $u$ are contained in $W_\exists$.

We have already described how $f_\exists\star$ is defined on the $\exists$-vertices of $W_\exists$. We extend $f_\exists\star$ to all vertices in any way. Then, we claim that $f_\exists\star$ witnesses the winning region $W_\exists$ of Player $\exists$. Consider a start vertex $u \in W_\exists$ and an arbitrary $\forall$-strategy $f_\forall$. It is easy to see that all vertices that appear in $\play(u,f_\exists\star,f_\forall)$ are contained in $W_\exists$. Since $E$ is disjoint from $W_\exists$, we conclude that the play is won by Player $\exists$.

Now, we turn to the second part of the proposition. We want to construct a $\forall$-strategy $f\star_\forall$ that witnesses the winning region $W_\forall$ of Player $\forall$. Since the class of ordinals is well-founded, we can define a labeling $\ord(\cdot)$ of the vertices of $W_\forall$ as follows:
\[
  \ord(u) \triangleq
  \text{the least ordinal $\kappa$ such that $u \in W_\forall^\kappa$}.
\]
We call $\ord(u)$ the \emph{order} of the vertex $u$. We will use this labeling to read off the strategy $f_\forall\star$ for Player $\forall$, after we show that:
\begin{enumerate}[(1)]
\item
Let $u \in W_\forall \setminus E$ be a $\exists$-vertex. Every $u$-successor $v$ is in $W_\forall$ and $\ord(v) < \ord(u)$.
\item
Let $u \in W_\forall \setminus E$ be a $\forall$-vertex. There is a $u$-successor $v$ with $v \in W_\forall$ and $\ord(v) < \ord(u)$.
\end{enumerate}
For the first statement, we will show by transfinite induction the following: For every ordinal $\kappa$, if $u \in W_\forall^\kappa \setminus E$ and $v$ is a $u$-successor, then $v \in W_\forall$ and $\ord(v) < \kappa$.
\begin{itemize}
\item
For $\kappa = 0$, the claim holds vacuously because $W_\forall^0 \setminus E = \emptyset$.
\item
For the successor ordinal $\kappa+1$, the assumption $u \in W_\forall^{\kappa+1} \setminus E$ implies that every $u$-successor $v$ is contained in $W_\forall^\kappa \subseteq W_1$ (by definition of $W_\forall^{\kappa+1}$). It follows that $\ord(v) \leq \kappa < \kappa+1$.
\item
For the limit ordinal $\lambda$, the assumption $u \in W_\forall^\lambda \setminus E$ gives us that $u \in W_\forall^\kappa \setminus E$ for some ordinal $\kappa < \lambda$. The I.H.\ then says that $v \in W_\forall$ and $\ord(v) < \kappa < \lambda$.
\end{itemize}
We turn to statement (1). It holds that $u \in W_\forall^{\ord(u)} \setminus E$, and the claim we just proved gives that $v \in W_\forall$ and $\ord(v) < \ord(u)$. Statement (2) is shown with a similar argument. So, for a $\forall$-vertex $u$ of $W_\forall \setminus E$ we set $f_\forall\star(u)$ to be equal to some $u$-successor $v \in W_\forall$ with $\ord(v) < \ord(u)$.

In the previous paragraph, we described how $f_\forall\star$ is defined on the $\forall$-vertices of $W_\forall \setminus E$. We extend $f_\forall\star$ to all vertices in any way. Finally, we argue that $f_\forall\star$ witnesses the winning region $W_\forall$ of Player $\forall$. Consider a start vertex $u \in W_\forall$, and an arbitrary $\exists$-strategy $f_\exists$. The claim is that the play $\play(u,f_\exists,f_\forall\star)$ contains some error vertex, and hence it is won by Player $\forall$. Assume not. Then, by the statements (1) \& (2) that we showed above, we have that the play remains within the set $W_\forall \setminus E$ of vertices. We think of the labeling of the play with the order of each vertex. Again from (1) \& (2), we obtain an infinite descending chain of ordinals. But this gives us the desired contradiction, since the class of ordinals is well-founded.
\end{proof}

\begin{proof}[\bfseries Proof of Lemma~\ref{lemma:closureReach}, Part \eqref{part:size}]
By induction on $f$. For the base case $a$, we have that $|\cl(a)| = 2 \leq 2|a|$, because $|a| = 1$. The base case $\id$ is easy. Now, for the step we have:
\begin{align*}
|\cl(\wh p f)| &\leq
2 + |\cl(f)| \leq
2 + 2|f| =
2(1 + |f|) =
2|\wh p f|
\\
|\cl(f \oplus g)| &\leq
1 + |\cl(f)| + |\cl(g)| \leq
2 + 2|f| + 2|g| \leq
2(1 + |f| + |g|) =
2|f \oplus g|
\\
|\cl(e;f)| &\leq
|\cl(e)| + |\cl(f)| \leq
2|e| + 2|f| =
2(|e| + |f|) =
2|e;f|
\end{align*}
This completes the proof.
\end{proof}

\begin{proof}[\bfseries Proof of Lemma~\ref{lemma:closureReach}, Part \eqref{part:oneStep}]
We show the claim by a case analysis on the term $f$.
\begin{itemize}
\item
Case $a$. The only possible transition is $a \to \id$. We also have that $a@g = a;g \to \id;g = \id@g$.
\item
Case $\id$. There is no transition emanating from $\id$.
\item
Case $\wh p f$. There are exactly two possible transitions: $\wh p f \to \id$ and $\wh p f \to f@\wh p f$. Moreover, we have that:
\begin{align*}
\wh p f@g = \wh p f; g &\to
\id;g = \id@g
\\
\wh p f@g = \wh p f; g &\to
f @ (\wh p f;g) = (f@\wh p f)@g
\end{align*}
\item
Case $f_1 \oplus f_2$. There are exactly two possible transitions $f_1 \oplus f_2 \to f_1,\ f_2$. Moreover, we have that
$(f_1 \oplus f_2)@g =
 (f_1 \oplus f_2);g \to
 f_1@g,\ f_2@g
$.
\item
Case $a;h \to \id;h$. We have $(a;h)@g = a;(h@g) \to \id;(h@g) = (\id;h)@g$.
\item
Case $\id;h \to h$. We have $(\id;h)@g = \id;(h@g) \to h@g$.
\item
Case $\wh p f; h \to \id; h$. We have $(\wh p f; h)@g = \wh p f;(h@g) \to \id;(h@g) = (\id;h)@g$.
\item
Case $\wh p f; h \to f@(\wh p f; h)$. We have
\[
  (\wh p f;h)@g =
  \wh p f;(h@g) \to
  f@(\wh p f;(h@g)) =
  (f@(\wh p f;h))@g.
\]
\end{itemize}
The proof is thus complete.
\end{proof}

\begin{proof}[\bfseries Proof of Lemma~\ref{lemma:closureReach}, Part \eqref{part:manySteps}]
We claim by induction on $n$ that $f \to^n f'$ implies $f@g \to^n f'@g$. For $n = 0$, we have that $f \to^0 f$ and $f@g \to^0 f@g$. For the step, assume that $f \to^{n+1} f'$, which means that there exists $f''$ with $f \to^n f''$ and $f'' \to f'$. The I.H.\ gives us that $f@g \to^n f''@g$, and part (1) says that $f''@g \to f'@g$. So, $f@g \to^{n+1} f'@g$.

Now, we assume that $f \to\star f'$, which means that there exists $n \geq 0$ with $f \to^n f'$. From the previous claim we get that $f@g \to^n f'@g$. So, $f@g \to\star f'@g$.
\end{proof}

\begin{proof}[\bfseries Proof of Lemma~\ref{lemma:closureReach}, Part \eqref{part:succ}]
This requires a case analysis on the form of the term $f$.
\begin{item}
\item
Case $a@g$. The only transition is $a@g = a;g \to \id;g = \id@g$, and $a \to \id$.
\item
Case $\id@g$. The only transition is $\id@g = \id;g \to g$.
\item
Case $\wh p f @ g$. There are two possible transitions
\begin{align*}
\wh p f @ g = \wh p f; g &\to
f@(\wh p f;g) = (f@\wh p f)@g
\\ &\to
\id;g = \id@g
\end{align*}
and $\wh p f \to f@\wh p f,\ \id$.
\item
Case $f \oplus f'$. There are two possible transitions
\begin{align*}
(f \oplus f')@g = (f \oplus f');g &\to
f@g,\ f'@g
\end{align*}
and $f \oplus f' \to f,\ f'$.
\item
Case $(a;f)@g$. The only transition is
\[
  (a;f)@g = a;(f@g) \to
  \id;(f@g) =
  (\id;f)@g
\]
and $a;f \to \id;f$.
\item
Case $(\id;f)@g$. The only transition is
\[
  (\id;f)@g = \id;(f@g) \to
  f@g
\]
and $\id;f \to f$.
\item
Case $(\wh p f; f')@g = \wh p f;(f'@g)$. There are two possible transitions
\begin{align*}
\wh p f; (f' @ g) &\to
f@(\wh p f; (f'@g)) =
f@((\wh p f; f')@g) =
(f@(\wh p f; f'))@g
\\
&\to
\id; (f' @ g) =
(\id;f') @ g
\end{align*}
and $\wh p f; f' \to f@(\wh p f; f'),\ \id;f'$.
\item
Case $((f_1 \oplus f_2);f')@g$. There are two possible transitions
\begin{align*}
((f_1 \oplus f_2);f')@g =
(f_1 \oplus f_2);(f'@g) &\to
f_1 @ (f' @ g) =
(f_1 @ f') @ g
\\ &\to
f_2 @ (f' @ g) =
(f_2 @ f') @ g
\end{align*}
and $(f_1 \oplus f_2);f' \to f_1@f',\ f_2@f'$.
\end{itemize}
\end{proof}

\begin{proof}[\bfseries Proof of Lemma~\ref{lemma:closureReach}, Part \eqref{part:reachInCl}]
By induction on the size of $f$.
\begin{itemize}
\item
The base cases $a$ and $\id$ are easy.
\item
Case $\wh p f$. Recall the definitions:
\begin{align*}
\cl(\wh p f) &=
\{ \wh p f, \id \} \cup \cl(f)@\wh p f
&
\wh p f &\to f@\wh p f,\ \id
\\
&&
\wh p f; h &\to
f@(\wh p f;h),\ \id;h
\end{align*}
First, notice that $\cl(\wh p f)$ contains $\wh p f$. To show that $\cl(\wh p f)$ is closed under $\to$, we consider an arbitrary element of $\cl(\wh p f)$.
\begin{itemize}
\item
For the case of $\wh p f$, notice that its $\to$-successors are $f@\wh p f$ and $\id$. From the I.H.\ $f \in \cl(f)$ and therefore $f@\wh p f \in \cl(f)@\wh p f \subseteq \cl(\wh p f)$. Moreover, $\id \in \cl(\wh p f)$.
\item
For the case of $\id$, observe that $\id$ has no $\to$-successors.
\item
Finally, we consider $f'@\wh p f \in \cl(\wh p f)$, where $f' \in \cl(f)$. From Part \eqref{part:succ}, the successors of $f'@\wh p f$ can only be $\wh p f$ (which is in $\cl(\wh p f)$) or $f''@\wh p f$ for some $f''$ with $f' \to f''$. For the latter case, the I.H.\ says that $\cl(f)$ is closed under $\to$, and therefore it contains $f''$. So, $f''@\wh p f \in \cl(f)@\wh p f \subseteq \cl(\wh p f)$.
\end{itemize}
\item
Case $f \oplus g$. First, notice that $\cl(f \oplus g)$ contains $f \oplus g$. Now, consider an arbitrary element of $\cl(f \oplus g)$.
\begin{compactitem}
\item
For $f \oplus g \in \cl(f \oplus g)$, notice that its successors are $f$ and $g$. From the I.H.\ we have that $f \in \cl(f)$ and that $g \in \cl(g)$. So, both $f$ and $g$ are in $\cl(f \oplus g)$.
\item
For $f' \in \cl(f)$, the I.H.\ says that the successors of $f'$ are in $\cl(f) \subseteq \cl(f \oplus g)$.
\item
The case $g' \in \cl(g)$ is similar to the previous one.
\end{compactitem}
\item
Case $e;f$. The I.H.\ says that $e \in \cl(e)$ and hence $e@f = e;f \in \cl(e)@f \subseteq \cl(e;f)$. We consider an arbitrary element of $\cl(e;f)$:
\begin{itemize}
\item
For $f' \in \cl(f)$, the I.H.\ says that the successors of $f'$ are contained in $\cl(f) \subseteq \cl(e;f)$.
\item
Case $e'@f \in \cl(e)@f$, that is, $e' \in \cl(e)$. Part \eqref{part:succ} says that the successors of $e'@f$ are either of the form $f$ (which is in $\cl(f) \subseteq \cl(e;f)$ by I.H.) or of the form $e''@f$, where $e' \to e''$. By the I.H.\ $e' \in \cl(e)$ and $e' \to e''$ imply that $e'' \in \cl(e)$. So, $e''@f \in \cl(e)@f \subseteq \cl(e;f)$.
\end{itemize}
\end{itemize}
\end{proof}

\begin{proof}[\bfseries Proof of Lemma~\ref{lemma:closureReach}, Part \eqref{part:clInReach}]
First, we remark that $\id \in C(f)$ for every term $f$. This can be shown by an easy induction on $f$. Now, we proceed to show our claim by induction on $f$.
\begin{itemize}
\item
Base case $a$. We have that $\cl(a) = \{ a, \id \}$, and $a \to\star a$, $a \to\star \id$.
\item
Base case $\id$. We have that $\cl(\id) = \{ \id \}$, and $\id \to\star \id$.
\item
Step case $f \oplus g$. We have that $\cl(f \oplus g) = \{ f \oplus g \} \cup \cl(f) \cup \cl(g)$, and $f \oplus g \to\star f \oplus g$. For $f' \in \cl(f)$, the I.H. says that $f \to\star f'$ and hence $f \oplus g \to f \to\star f'$. Similarly for $g' \in \cl(g)$.
\item
Step case $\wh p f$. We have that $\cl(\wh p f) = \{ \wh p f, \id \} \cup \cl(f)@\wh p f$, and $\wh p f \to\star \wh p f$, $\wh p f \to\star \id$. Consider now some element $f'@\wh p f$, where $f' \in \cl(f)$. The I.H.\ says that $f \to\star f'$. Part~(\ref{part:manySteps}) implies that $f@\wh p f \to\star f'@\wh p f$.
\item
Step case $e;f$. We have that $C(e;f) = C(e)@f \cup C(f)$. Consider an element $e'@f$ of $C(e)@f$, where $e' \in C(e)$. The I.H.\ says that $e \to\star e'$, and Part~(\ref{part:manySteps}) gives us that $e;f = e@f \to\star e'@f$. Finally, we consider an element $f' \in C(f)$. We have already discussed that $\id \in C(e)$, hence $e \to\star \id$ by the I.H.\ on $e$. Using Part~(\ref{part:manySteps}) and the I.H.\ on $f$, we conclude that
\[
  e;f = e@f \to\star
  \id@f = \id;f \to
  f \to\star
  f',
\]
so $e;f \to\star f'$.
\end{itemize}
\end{proof}

\begin{proof}[\bfseries Proof of Lemma~\ref{lemma:closureReach}, Part \eqref{part:clEqReach}]
We have already shown in Part~\eqref{part:clInReach} that $C(f) \subseteq \{ f' \mid f \to\star f' \}$. In order to show the reverse containment $\{ f' \mid  f \to\star f' \} \subseteq C(f)$, we first notice that $\{ f' \mid f \to\star f' \}$ is the smallest set that contains $f$ and is closed under $\to$. But, we showed in Part~\eqref{part:reachInCl} that $C(f)$ also contains $f$ and is closed under $\to$. The desired containment then follows immediately.
\end{proof}

\begin{proof}[\bfseries Full Proof of Lemma~\ref{lemma:lift}]
For an arbitrary nondeterministic function $k: S \nto S$, recall from Definition~\ref{def:lift} that $(u,Y) \in \lift k$ iff $k(u) \subseteq Y$ for every $u \in S$ and $Y \subseteq S$. First, we deal with composition:
\begin{align*}
&(u,Z) \in \lift(k \kc \ell) \iff
&&\text{[def.\ of $\lift$]}
\\
&(k \kc \ell)(u) \subseteq Z \iff
&&\text{[def.\ of $\kc$]}
\\
&\tbigcup_{v \in k(u)} \ell(v) \subseteq Z \iff
&&\text{[union]}
\\
&\text{$\ell(v) \subseteq Z$ for every $v \in k(u)$} \iff
&&\text{[for ``$\Rightarrow$'' put $Y = k(u)$]}
\\
&\text{$\exists Y \subseteq S.\ $ $k(u) \subseteq Y$ and $\ell(v) \subseteq Z$ for all $v \in Y$} \iff
&&\text{[def.\ of $\lift$]}
\\
&\text{$\exists Y \subseteq S.\ $ $(u,Y) \in \lift k$ and $(v,Z) \in \lift\ell$ for all $v \in Y$} \iff
&&\text{[def.\ of $\gc$]}
\\
&(u,Z) \in (\lift k) \gc (\lift \ell).
\end{align*}
Now, we turn to the case of the conditional operation. For $u \in P$ and $Y \subseteq S$ we have that:
\begin{align*}
(u,Y) \in \lift(P \lb k,\ell \rb) &\iff
P \lb k,\ell \rb(u) \subseteq Y
\\ &\iff
k(u) \subseteq Y
\\ &\iff
(u,Y) \in \lift k
\\ &\iff
(u,Y) \in P \glb \lift k,\lift\ell \grb.
\end{align*}
The proof is analogous when $u \in \compl P$. For demonic choice, we show similarly:
\begin{align*}
(u,Y) \in \lift(k \plus \ell) &\iff
(k \plus \ell)(u) \subseteq Y
\\ &\iff
k(u) \cup \ell(u) \subseteq Y
\\ &\iff
\text{$k(u) \subseteq Y$ and $\ell(u) \subseteq Y$}
\\ &\iff
\text{$(u,Y) \in \lift k$ and $(u,Y) \in \lift\ell$}
\\ &\iff
(u,Y) \in (\lift k) \gdem (\lift\ell).
\end{align*}
For the nondeterministic functions $0_S: S \nto S$ and $1_S: S \nto S$, the result holds immediately from the definitions. It remains to cover the case of while loops. We put $\phi := \lift k: S \gto S$. Recall the definitions for the iteration operations:
\begin{align*}
\sWhileDo P k &= \tsum_{\kappa \in \Ord} V_n
&
\gWhileDo P \phi &= \tbigcap_{\kappa \in \Ord} W_\kappa
\\
V_0 &= P \lb 0_S,1_S \rb
&
W_0 &= P \glb \zero_S,\one_S \grb
\\
V_{\kappa+1} &= P \lb k \kc V_\kappa,1_S \rb
&
W_{\kappa+1} &= P \glb \phi \gc W_\kappa, \one_S \grb
\\
V_\lambda &= \tsum_{\kappa<\lambda} V_\kappa,
\ \text{limit ordinal $\lambda$}
&
W_\lambda &= \tbigcap_{\kappa<\lambda} W_\kappa,
\ \text{limit ordinal $\lambda$}
\end{align*}
It is a well-known fact that $\sWhileDo P k = V_\omega$ (the least fixpoint closes at $\omega$ iterations). Now, we claim that $W_\kappa = \lift V_\kappa$ for every ordinal $\kappa$.
\begin{itemize}
\item
For the base case $\kappa = 0$, we have that
\begin{align*}
\lift V_0 &=
\lift(P \lb 0_S,1_S \rb) =
P \glb \lift 0_S, \lift 1_S \grb =
P \glb \zero_S, \one_S \grb =
W_0.
\end{align*}
\item
For a successor ordinal $\kappa+1$, we have that
\begin{align*}
\lift V_{\kappa+1} &=
\lift(P \lb k \kc V_\kappa,1_S \rb)
\\ &=
P \glb \lift(k \kc V_\kappa), \lift 1_S \grb
\\ &=
P \glb (\lift k) \gc (\lift V_\kappa), \lift 1_S \grb
\\ &=
P \glb \phi \gc W_\kappa, \one_S \grb
\\ &=
W_{\kappa+1},
\end{align*}
using the I.H.\ $W_\kappa = \lift V_\kappa$.
\item
For a limit ordinal $\lambda$, and all $u \in S$ and $Y \subseteq S$, we have that
\begin{align*}
(u,Y) \in W_\lambda &\iff
\text{$(u,Y) \in W_\kappa = \lift V_\kappa$ for every $\kappa < \lambda$}
\\ &\iff
\text{$V_\kappa(u) \subseteq Y$ for every $\kappa < \lambda$}
\\ &\iff
\tbigcup_{\kappa<\lambda} V_\kappa(u) =
(\tsum_{\kappa<\lambda} V_\kappa)(u) \subseteq Y
\\ &\iff
V_\lambda(u) \subseteq Y
\\ &\iff
(u,Y) \in \lift V_\lambda.
\end{align*}
\end{itemize}
Using an argument similar to one for limit ordinals, we get that $(u,Y) \in \lift(\sWhileDo P k)$ iff $(u,Y) \in \sWhileDo P {(\lift k)}$. This completes the proof.
\end{proof}

\begin{proof}[\bfseries Full Proof of Lemma~\ref{lemma:chain}]
For Part (1) we consider an arbitrary non-angelic game function $\lift k: S \gto S$, where $k: S \nto S$ is a nondeterministic function. Let $(Y_\kappa)_\kappa$ be a decreasing chain, and suppose that $(u,Y_\kappa) \in \lift k$ for all $\kappa$. It follows that $Y_\kappa \subseteq k(u)$ for all $\kappa$, and therefore $\tbigcap_\kappa Y_\kappa \subseteq k(u)$. So, we conclude that $(u,\tbigcap_\kappa Y_\kappa) \in \lift k$.

For Part (2), the case of the zero function $\zero_S$ is trivial. For the function $\one_S$, we assume that $(u,Y_\kappa) \in \one_S$ for all $\kappa$. This implies that $u \in Y_\kappa$ for all $\kappa$, and hence $u \in \tbigcap_\kappa Y_\kappa$. The desired $(u,\tbigcap_\kappa Y_\kappa) \in \one_S$ follows immediately.

For the conditional $P \glb \phi,\psi \grb$, we assume that $u \in P$ (and we omit the analogous case $u \in \compl P$). Let $(Y_\kappa)_\kappa$ be a decreasing chain with $(u,Y_\kappa) \in P \glb \phi,\psi \grb$ for all $\kappa$. Since $u \in P$, it follows that $(u,Y_\kappa) \in \phi$ for all $\kappa$. But we have assumed that $\phi$ satisfies the chain property, hence $(u,\tbigcap_\kappa Y_\kappa) \in \phi$. So, we conclude that $(u,\tbigcap_\kappa Y_\kappa) \in P \glb \phi,\psi \grb$.

Case $\phi \gdem \psi$. \textbf{TODO, very easy}.

For the case $\phi \gang \psi$ of angelic choice, assume that $(u,Y_\kappa) \in \phi \gang \psi$ for every ordinal $\kappa$. We recall the definition $\phi \gang \psi = \phi \cup \psi$, which means that $(u,Y_\kappa) \in \phi$ or $(u,Y_\kappa) \in \psi$ for all $\kappa$. Define the classes $O(\phi)$ and $O(\psi)$ of ordinals as follows:
\begin{align*}
O(\phi) &= \{ \lambda \in \Ord \mid (u,Y_\lambda) \in \phi \}
&
O(\psi) &= \{ \mu \in \Ord \mid (u,Y_\mu) \in \psi \}
\end{align*}
Clearly, the equality $O(\phi) \cup O(\psi) = \Ord$ holds. This implies that at least one of the classes $O(\phi)$, $O(\psi)$ has no upper bound. By symmetry, we only consider the case where $O(\phi)$ has no upper bound, that is: for every ordinal $\kappa$ there is some $\lambda \geq \kappa$ with $\lambda \in O(\phi)$. We extend the subsequence $(Y_\lambda)_{\lambda \in O(\phi)}$ into a decreasing chain $(\hat Y_\lambda)_{\lambda \in \Ord}$ as:
\[
  \hat Y_\lambda = Y_{\lambda'},
  \ \text{where $\lambda' = \text{least} \{ \kappa \in \Ord \mid \text{$\kappa \geq \lambda$ and $\kappa \in O(\phi)$} \}$}.
\]
In particular, if $\lambda \in O(\phi)$ then $\hat Y_\lambda = Y_\lambda$. It is straightforward to verify that $(\hat Y_\lambda)_{\lambda \in \Ord}$ is a decreasing chain with $(u,\hat Y_\lambda) \in \phi$ for every $\lambda \in \Ord$. Since $\phi$ satisfies the chain property, we get that $(u,\tbigcap_{\lambda \in \Ord} \hat Y_\lambda) \in \phi$. Finally, we observe that
\[
  \tbigcap_{\kappa \in \Ord} Y_\kappa =
  \tbigcap_{\lambda \in O(f)} Y_\lambda =
  \tbigcap_{\lambda \in \Ord} \hat Y_\lambda.
\]
This gives us the desired $(u,\tbigcap_{\kappa \in \Ord} Y_\kappa) \in \phi \subseteq \phi \cup \psi$. So, $\phi \gang \psi$ satisfies the chain property.

For the case $\phi \gc \psi$ of composition, we consider the decreasing chain $(Z_\kappa)_\kappa$ and we assume that $(u,Z_\kappa) \in (\phi \gc \psi)$ for all $\kappa$. For every ordinal $\kappa$, define the collection of predicates
\[
  \mathcal Y_\kappa = \{
    Y \subseteq S \mid
    \text{$(u,Y) \in \phi$ and $(v,Z_\kappa) \in \psi$ for all $v \in Y$}.
  \}
\]
The assumption $(u,Z_\kappa) \in (\phi \gc \psi)$ means that the collection $\mathcal Y_\kappa$ is nonempty. We then define the predicate $Y_\kappa = \tbigcup \mathcal Y_\kappa$ and we observe that $Y_\kappa \in \mathcal Y_\kappa$, that is:
\[
  \text{$(u,Y_\kappa) \in \phi$ \qquad and \qquad $(v,Z_\kappa) \in \psi$ for all $v \in Y_\kappa$}.
\]
Moreover, the implications $\kappa \leq \lambda \Imp Z_\kappa \supseteq Z_\lambda \Imp \mathcal Y_\kappa \supseteq \mathcal Y_\lambda \Imp Y_\kappa \supseteq Y_\lambda$ hold. This means that the sequence $(Y_\kappa)_\kappa$ is a decreasing chain. The third containment is justified as follows:
\begin{align*}
Y \in \mathcal Y_\lambda &\implies
\text{$(u,Y) \in \phi$ and $(v,Z_\lambda) \in \psi$ for all $v \in Y$}
\\ &\implies
\text{$(u,Y) \in \phi$ and $(v,Z_\kappa) \in \psi$ for all $v \in Y$}
\\ &\implies
Y \in \mathcal Y_\kappa.
\end{align*}
Since $\phi$ satisfies the chain property, we obtain that $(u,\tbigcap_\kappa Y_\kappa) \in \phi$. Let us consider now an arbitrary element $v$ of $\tbigcap_\kappa Y_\kappa$. We get that $v \in Y_\kappa$ and hence $(v,Z_\kappa) \in \psi$ for every ordinal $\kappa$. But $\psi$ also satisfies the chain property, which gives us that $(v,\tbigcap_\kappa Z_\kappa) \in \psi$. We know that:
\[
  \text{$(u,\tbigcap_\kappa Y_\kappa) \in \phi$ \qquad and \qquad $(v,\tbigcap_\kappa Z_\kappa) \in \psi$ for all $v \in \tbigcap_\kappa Y_\kappa$}.
\]
This means that $(u,\tbigcap_\kappa Z_\kappa) \in (\phi \gc \psi)$. So, we conclude that $\phi \gc \psi$ satisfies the chain property.

We handle now the case of the while loop $\gWhileDo P \phi = \tbigcap_{\kappa \in \Ord} W_\kappa$, where the approximants $W_\kappa$ are defined as in Figure~\ref{fig:gameOps}. We show by transfinite induction that every $W_\kappa$ satisfies the chain property. For this claim, we make use the facts that $\zero_S$ and $\one_S$ satisfy the chain property, and that the operations $P \glb -,- \grb$, composition $\gc$, and arbitrary intersection $\tbigcap$ preserve the chain property. It follows that $\gWhileDo P \phi$ also satisfies it.
\end{proof}

\begin{proof}[\bfseries Full Proof of First Claim in Theorem~\ref{thm:semantics}]
The proof is by transfinite induction on $\kappa$. For the base case, we have:
\begin{align*}
&\{ u \in S \mid (u,Y) \in W_0 \} =
\\
&\{ u \in S \mid (u,Y) \in I(p) \glb \zero_S, \one_S \grb \} =
\\
&\{ u \in S \mid \text{($u \in I(p)$ and $(u,Y) \in \zero_S$) or ($u \in \compl I(p)$ and $(u,Y) \in \one_S$) } \} =
\\
&\{ u \in S \mid \text{$u \in I(p)$ or ($u \in \compl I(p)$ and $u \in Y$) } \} =
\\
&I(p) \cup (\compl I(p) \cap Y) = X_0.
\end{align*}
For successor ordinals, we have that:
\begin{align*}
&\{ u \in S \mid (u,Y) \in W_{\kappa+1} \} =
\\
&\{ u \in S \mid (u,Y) \in I(p) \glb I(f) \gc W_\kappa, \one_S \grb \} =
\\
&\{ u \in S \mid \text{($u \in I(p)$ and $(u,Y) \in I(f) \gc W_\kappa$) or ($u \in \compl I(p)$ and $(u,Y) \in \one_S$) } \} =
\\
&\{ u \in S \mid \text{$u \in I(p)$ and $(u,Y) \in I(f) \gc W_\kappa$} \} \cup (\compl I(p) \cap Y) =
\\
&\{ u \in S \mid \text{$u \in I(p)$ and $(u,X_\kappa) \in I(f)$} \} \cup (\compl I(p) \cap Y) = X_{\kappa+1},
\end{align*}
where the fourth equality above is justified as follows:
\begin{align*}
(u,Y) \in I(f) \gc W_\kappa &\implies
\text{$\exists Z \subseteq S.\ $ $(u,Z) \in I(f)$ and $(v,Y) \in W_\kappa$ for every $v \in Z$}
\\ &\implies
\text{$\exists Z \subseteq S.\ $ $(u,Z) \in I(f)$ and $Z \subseteq X_\kappa$}
\\ &\implies
(u,X_\kappa) \in I(f)
\\
(u,X_\kappa) \in I(f) &\implies
\text{$(u,X_\kappa) \in I(f)$ and $(v,Y) \in W_\kappa$ for every $v \in X_\kappa$}
\\ &\implies
(u,Y) \in I(f) \gc W_\kappa
\end{align*}
Finally, for a limit ordinal $\lambda$ we have:
\begin{align*}
X_\lambda &=
\tbigcap_{\kappa<\lambda} X_\kappa
\\ &=
\tbigcap_{\kappa<\lambda} \{ u \in S \mid (u,Y) \in W_\kappa \}
\\ &=
\{ u \in S \mid (u,Y) \in \tbigcap_{\kappa<\lambda} W_\kappa \}
\\ &=
\{ u \in S \mid (u,Y) \in W_\lambda \}.
\end{align*}
This concludes the proof.
\end{proof}

\begin{proof}[\bfseries Full Proof of Theorem~\ref{thm:sound}, NEEDS REWRITING]
Notice that the hypotheses $\Phi$ and $\Psi$ in the assertions of the premises of the rules are the same as the hypotheses in the conclusions. So, we fix at the outset an interpretation $I$ of tests and actions that satisfies $\Phi$ and $\Psi$.

The rule ($\rHyp$) for hypotheses in $\Psi$ is trivially seen to be sound. Now, we turn to the ($\rSkip$) rule. We have to show that $I \models \hoare{p}\id{p}$. For a state $u$ with $I,u \models p$, we have that $u \in I(p)$ and hence $(u,I(p)) \in I(\id) = \one_S$. For the ($\rDvrg$) rule, we have that $I,u \models p$ implies that $(u,I(q)) \in I(\bot) = \zero_S$. So, the interpretation $I$ satisfies the assertion $\hoare p \bot q$.

We handle now the rule ($\rSeq$) for sequential composition. Assume that $I$ satisfies both $\hoare p f q$ and $\hoare q g r$. We show that $I$ satisfies $\hoare{p}{f;g}{r}$. Let $u$ be a state with $I,u \models p$. From $I \models \hoare{p}f{q}$, we obtain that $(u,I(q)) \in I(f)$. Consider now an arbitrary element $v$ of $I(q)$. This means that $I,v \models q$, and since $I \models \hoare{q}g{r}$, we get that $(v,I(r)) \in I(g)$. We have thus established that:
\[
  \text{$(u,I(q)) \in I(f)$ and for every $v \in I(q)$: $(v,I(r)) \in I(g)$.}
\]
It follows that $(u,I(r)) \in I(f) \gc I(g) = I(f;g)$. So, $I \models \hoare{p}{f;g}{r}$.

For the rule ($\rCond$), we assume that $I$ satisfies $\hoare{q \land p}f{r}$ and $\hoare{q \land \neg p}g{r}$. We will see that $I$ satisfies $\hoare{q}{p[f,g]}{r}$. Let $u$ be a state with $I,u \models q$. First, we examine the case where $u \in I(p)$, that is, $I,u \models p$. It holds that $I,u \models q \land p$, and the assumption $I \models \hoare{q \land p}f{r}$ gives us that $(u,I(r)) \in I(f)$. Now, we recall the definition
\[
  I(p[f,g]) =
  \Bigl(
    I(f) \cap (I(p) \times \wp S)
  \Bigr) \cup
  \Bigl(
    I(g) \cap (\compl I(p) \times \wp S)
  \Bigr).
\]
From $u \in I(p)$ and $(u,I(r)) \in I(f)$, we infer that $(u,I(r)) \in I(p[f,g])$. We have thus shown that $I \models \hoare{q}{p[f,g]}{r}$. The case where $u \notin I(p)$ is handled analogously.

We handle now the rule ($\rLoop$) for while loops. Suppose that $I$ satisfies the premise of the rule $\hoare{r \land p}f{r}$, and we have to show that $I$ satisfies the conclusion $\hoare{r}{\wh p f}{r \land \neg p}$. Recall the definition
$I(\wh p f) =
 \gWhileDo{I(p)}{I(f)} =
 \bigcap_{\kappa \in \Ord} W_\kappa
$,
where
\begin{align*}
W_0 &=
I(p) \glb \zero_S,\one_S \grb
&
W_{\kappa+1} &=
I(p) \glb I(f) \gc W_\kappa,\one_S \grb
&
W_\lambda &=
\tbigcap_{\kappa<\lambda} W_\kappa,\ 
\text{limit ordinal $\lambda$}
\end{align*}
We claim that $I$ satisfies $\hoare{r}{W_\kappa}{r \land \neg p}$ for every ordinal $\kappa$ (this is a slight abuse of notation, because $W_\kappa$ is not a term, but a game function).
\begin{itemize}
\item
For the base case $\kappa = 0$, it suffices to show that $I$ satisfies $\hoare{r}{p[\bot,\id]}{r \land \neg p}$. We have already proved the soundness of the rules $(\rDvrg)$, $(\rSkip)$, and $(\rCond)$. We can prove the assertion using these rules
\[
  \AxiomC{}
  \RightLabel{($\rDvrg$)}
  \UnaryInfC{$\hoare{r \land p}{\bot}{r \land \neg p}$}
  \AxiomC{}
  \RightLabel{($\rSkip$)}
  \UnaryInfC{$\hoare{r \land \neg p}\id{r \land \neg p}$}
  \RightLabel{($\rCond$),}
  \BinaryInfC{$\hoare{r}{p[\bot,\id]}{r \land \neg p}$}
  \DisplayProof
\]
so $I$ satisfies it.
\item
Consider a successor ordinal $\kappa+1$. The I.H.\ says that $I$ satisfies the assertion $\hoare{r}{W_\kappa}{r \land \neg p}$. Note that we have also proved the soundness of the rule ($\rSeq$).
\[
  \AxiomC{$\hoare{r \land p}f{r}$}
  \AxiomC{$\hoare{r}{W_\kappa}{r \land \neg p}$}
  \RightLabel{($\rSeq$)}
  \BinaryInfC{$\hoare{r \land p}{f;W_\kappa}{r \land \neg p}$}
  \AxiomC{}
  \RightLabel{($\rSkip$)}
  \UnaryInfC{$\hoare{r \land \neg p}\id{r \land \neg p}$}
  \RightLabel{($\rCond$)}
  \BinaryInfC{$\hoare{r}{p[f;W_\kappa,\id]}{r \land \neg p}$}
  \DisplayProof
\]
So, $I$ satisfies $\hoare{r}{W_{\kappa+1}}{r \land \neg p}$.
\item
Finally, we handle the case of a limit ordinal $\lambda$. Let $u$ be a state in $I(r)$. By the I.H., we have that $(u,I(r \land \neg p)) \in W_\kappa$ for every $\kappa < \lambda$. It follows that $(u,I(r \land \neg p)) \in W_\lambda = \bigcap_{\kappa<\lambda} W_\kappa$. So, $I$ satisfies $\hoare{r}{W_\lambda}{r \land \neg p}$.
\end{itemize}
As in the limit ordinal case above, we can show easily that $I$ satisfies the Hoare assertion $\hoare{r}{\wh p f}{r \land \neg p}$.

The soundness of the rules ($\rAng_1$) and ($\rAng_2$) for angelic nondeterminism is easy to establish. For the rule ($\rDem$) for demonic nondeterminism, we assume that $I$ satisfies $\hoare{p}f{q}$ and $\hoare{p}g{q}$. Let $u$ be a state that satisfies $p$ under $I$. From the hypotheses we get that $(u,I(q))$ belongs to both $I(f)$ and $I(g)$. Then, $(u,I(q) \cup I(q)) = (u,I(q))$ also belongs to $I(f \dem g) = I(f) \gdem I(g)$.

For the weakening rule, assume that $I$ satisfies $p' \to p$, $\hoare{p}f{q}$, and $q \to q'$. Let $u$ be a state that satisfies $p'$. Then, $u$ also satisfies $p$, and therefore $(u,I(q)) \in I(f)$. From $I \models q \to q'$ we obtain that $I(q) \subseteq I(q')$. Since $I(f)$ is closed upwards, we deduce that $(u,I(q'))$ is in $I(f)$. So, $I$ satisfies $\hoare{p'}f{q'}$.

It is easy to show that the rules ($\rJoin$) and ($\rJoin_0$) are sound. Finally, observe that the soundness of the rule ($\rMeet_0$) follows from the fact that $I(f)$ contains the pair $(u,S)$ for every $u \in S$.
\end{proof}
\vspace{-40 pt}

\end{document}